\newcommand{\E}{\mathbb{E}}
\newcommand{\rr}{\mathbb{R}}
\newcommand{\nn}{\mathcal{N}}
\newcommand{\iw}{\operatorname{IW}}
\newcommand{\mt}{\operatorname{mt}}
\newcommand{\simIID}{\,\overset{\text{IID}}{\sim}\,}
\newcommand{\xbf}{\mathbf{x}}
\newcommand{\ybf}{\mathbf{y}}
\newcommand{\ubf}{\mathbf{u}}
\newcommand{\bbf}{\mathbf{b}}
\newcommand{\rbf}{\mathbf{r}}
\newcommand{\Abf}{\mathbf{A}}
\newcommand{\Rbf}{\mathbf{R}}
\newcommand{\Xbf}{\mathbf{X}}
\newcommand{\Vbf}{\mathbf{V}}
\newcommand{\Qbf}{\mathbf{Q}}
\newcommand{\Zbf}{\mathbf{Z}}
\newcommand{\Tbf}{\mathbf{T}}
\newcommand{\Ibf}{\mathbf{I}}
\newcommand{\Sbf}{\mathbf{S}}
\newcommand{\Wbf}{\mathbf{W}}
\newcommand{\Pbf}{\mathbf{P}}
\newcommand{\mubf}{\bm{\mu}}
\newcommand{\epsilonbf}{\bm{\varepsilon}}
\newcommand{\nubf}{\bm{\nu}}
\newcommand{\etabf}{\bm{\eta}}
\newcommand{\omegabf}{\bm{\omega}}
\newcommand{\gammabf}{\bm{\gamma}}
\newcommand{\zetabf}{\bm{\zeta}}
\newcommand{\Sigmabf}{\bm{\Sigma}}
\newcommand{\Psibf}{\bm{\Psi}}
\newcommand{\BH}{\mathbf{\widehat{B}}}
\newcommand{\UH}{\mathbf{\widehat{U}}}
\newcommand{\YH}{\mathbf{\widehat{Y}}}
\newcommand{\WH}{\mathbf{\widehat{W}}}
\newcommand{\SH}{\mathbf{\widehat{S}}}
\newcommand{\yH}{\mathbf{\widehat{y}}}
\newcommand{\uH}{\mathbf{\widehat{u}}}
\newcommand{\bH}{\mathbf{\widehat{b}}}
\newcommand{\PsiH}{\bm{\widehat{\Psi}}}
\newcommand{\BT}{\mathbf{\widetilde{B}}}
\newcommand{\YT}{\mathbf{\widetilde{Y}}}
\newcommand{\WT}{\mathbf{\widetilde{W}}}
\newcommand{\yT}{\mathbf{\widetilde{y}}}
\newcommand{\bT}{\mathbf{\widetilde{b}}}
\newcommand{\SigmaT}{\bm{\widetilde{\Sigma}}}
\newcommand{\uhat}{\widehat{u}}
\newcommand{\bhat}{\widehat{b}}
\newcommand{\ghat}{\widehat{g}}
\newcommand{\Qhat}{\widehat{Q}}
\newcommand{\pihat}{\widehat{\pi}}
\newcommand{\nuhat}{\widehat{\nu}}
\newcommand{\Util}{\widetilde{U}}
\newcommand{\util}{\widetilde{u}}
\newcommand{\pitil}{\widetilde{\pi}}
\newcommand{\nutil}{\widetilde{\nu}}
\newcommand{\sigmatil}{\widetilde{\sigma}}
\newcommand{\Scal}{\mathcal{S}}
\newcommand{\tRec}{t-Rec}
\newcommand{\train}{training length}
\newcommand{\trains}{training lengths}
\definecolor{codegreen}{rgb}{0,0.6,0}
\definecolor{codeorange}{RGB}{250,100,0}
\definecolor{codepurple}{rgb}{0.58,0,0.82}
\definecolor{codeblue}{RGB}{50,50,250}
\definecolor{codered}{RGB}{255,40,40}    
\newtheorem{proposition}{Proposition}
\newtheorem{theorem}{Theorem}
\newtheorem{lemma}{Lemma}
\DeclareMathOperator*{\argmax}{arg\,max}
\begin{document}

\begin{frontmatter}

\title{Modeling the uncertainty on the covariance matrix for probabilistic forecast reconciliation}

\author[cc,lz]{Chiara Carrara\corref{cor}}
\address[cc]{University of Pavia}
\ead{chiara.carrara03@universitadipavia.it}

\author[lz]{Dario Azzimonti}

\author[lz]{Giorgio Corani}

\author[lz]{Lorenzo Zambon}
\address[lz]{SUPSI, Istituto Dalle Molle di Studi sull'Intelligenza Artificiale (IDSIA)}

\cortext[cor]{Corresponding author}


\begin{abstract} 
In minimum trace (MinT) forecast reconciliation, the covariance matrix of the base forecasts errors plays a crucial role.
Typically, this matrix is estimated and then treated as known. 
This can lead to underestimation of the variance of the predictive distribution. To address the problem, we propose a Bayesian reconciliation model that accounts for the uncertainty in the estimation of the covariance matrix.
By adopting an Inverse-Wishart prior and assuming Gaussian residuals, the
reconciled predictive distribution follows a multivariate t-distribution, obtained in closed-form, rather than a multivariate Gaussian distribution.  
We evaluate our method on three tourism-related datasets, including a new publicly available
dataset.
Empirical results show that our approach consistently improves prediction intervals compared to MinT reconciliation.
\end{abstract}

\begin{keyword}
Probabilistic forecast reconciliation \sep Covariance matrix uncertainty \sep Tourism forecasting \sep Bayesian modeling
\end{keyword}

\end{frontmatter}

\section{Introduction}

Hierarchical time series are collections of time series that adhere to a set of linear constraints. 
For example, the sales of individual items (the lower level of the hierarchy) are summed up to obtain the sales of different categories or stores (the upper levels of the  hierarchy).
Forecasts for hierarchical  time series need to be \textit{coherent}, meaning that the lower-level forecasts sum to the forecasts of the higher levels.
The \textit{base forecasts}, which are
independently produced  for each time series,  are generally incoherent. 
\textit{Forecast reconciliation} adjusts the base forecasts to produce  coherent forecasts for the entire hierarchy.

While early works focused on the reconciliation of point forecasts \citep{hyndman2011optimal, wickramasuriya2019optimal}, recent literature has increasingly turned toward \textit{probabilistic reconciliation}, which quantifies the uncertainty of the reconciled predictions.  
One of the most established approaches is MinT \citep{wickramasuriya2019optimal}, which relies on the covariance matrix of the base forecast errors. 
There exist alternative techniques that avoid the covariance matrix by learning optimal projections through score minimization \citep{panagiotelis2022reconc}, by learning linear regression models for reconciliation weights \citep{moller2024optimal}, or by employing end-to-end models that directly produce coherent forecasts \citep{rangapuram2021end, olivares_probabilistic_2023, bertani2025joint}. Nonetheless, covariance-based methods are a standard in the field.
Indeed, besides being computationally efficient, MinT has been proven to be optimal with respect to both the expected mean squared error of point forecasts \citep{wickramasuriya2019optimal} and, in the case of Gaussian base forecasts, the logarithmic score of the reconciled distribution \citep{wickramasuriya2024probabilistic}.


However, this optimality holds only under the assumption that the covariance matrix of the base forecast errors is known.
In practice, this matrix is instead estimated from in-sample residuals. 
The performance of both reconciled point forecasts \citep{panagiotelis2021geometric} and distributions \citep{girolimetto2024cross} can vary significantly, depending on how the matrix is estimated.
Since the number of parameters to be estimated grows quadratically with the number of time series in the hierarchy, the estimation uncertainty can be substantial \citep{pritularga2021stochastic}. 
If the covariance matrix is treated as known, this uncertainty is ignored, 
leading to an underestimation of the predictive variance. 
Indeed, MinT reconciliation is proven to reduce the variance of the predictive distribution \citep{zambon2024properties}. While such a reduction is optimal if the covariance matrix is known, ignoring the estimation error results in reconciled intervals that are systematically too narrow.
Recent works combine reconciliation with conformal prediction 
so that the prediction intervals (PI) achieve  nominal coverage \citep{principato2025conformalpredictionhierarchicaldata}. 
However, to date, no reconciliation framework accounts for the uncertainty of the covariance matrix.

In this paper, we propose a Bayesian reconciliation approach that  models the uncertainty of the covariance matrix. 
By adopting a Bayesian framework, we  integrate parameter uncertainty into the predictive distribution \citep[Chap.~1.3]{bda2013}, yielding a more principled quantification of the uncertainty.
We assume an Inverse-Wishart (IW) prior distribution for the covariance matrix $\Wbf$ and we update it using the in-sample residuals, which we assume to be jointly Gaussian, conditionally on $\Wbf$.
From the resulting posterior, we analytically derive the incoherent posterior predictive distribution,
which is a multivariate $t$.
We then apply reconciliation via conditioning \citep{corani_reconc_ecml, zambon2024efficient}.
We obtain in closed-form the reconciled distribution, which is a multivariate $t$; we thus call our method \textit{\tRec}.
This represents the only known analytical solution for reconciliation via conditioning besides the Gaussian case \citep{zambon2024properties}, 
compared to which it requires only a  minimal computational overhead.
%
%
We evaluate our method on three distinct datasets, including a novel dataset on Swiss tourism.
In all settings, \textit{\tRec} performs similarly to Gaussian reconciliation \citep{wickramasuriya2024probabilistic, zambon2024efficient} on the point forecasts, but consistently improves the probabilistic forecasts.

The paper is structured as follows: in Sect.~\ref{sec: prob reconc} we introduce the notation and recall the concepts of probabilistic reconciliation; in Sect.~\ref{sec: t-rec} we analytically derive \textit{\tRec} and discuss its specifications; in Sect.~\ref{sec: simulations} we demonstrate the model's behavior on a minimal hierarchy; in Sect.~\ref{sec: experiment} we report empirical results on real datasets; in Sect.~\ref{sec: conclusion} we present our conclusions.

\section{Probabilistic reconciliation}
\label{sec: prob reconc}

\begin{figure}[h!]
    \begin{center}
        \begin{tikzpicture}[every node/.style={rectangle,draw,align=center,rounded corners=4,font=\large},level distance=1.cm,
  level 1/.style={sibling distance=5.cm, edge from parent fork down},
  level 2/.style={sibling distance=2.1cm, edge from parent fork down},
  level 3/.style={sibling distance=1.2cm, edge from parent fork down}
]
  \node[fill=black!20,minimum width=1cm] {T}
    child {node[fill=black!10,minimum width=1cm] {A}
      child {node[fill=black!0,minimum width=1cm] {AA}}
      child {node[fill=black!0,minimum width=1cm] {AB}}
    }
    child {node[fill=black!10,minimum width=1cm] {B}
      child {node[fill=black!0,minimum width=1cm] {BA}}
      child {node[fill=black!0,minimum width=1cm] {BB}}
    };
\end{tikzpicture}
\vspace{2mm}
    \caption{Hierarchy with 4 bottom and 3 upper time series.}
    \label{fig: simple hier}
    \end{center}
\end{figure}

Fig.~\ref{fig: simple hier} is an example of hierarchical time series, which can, for instance, represent the total visitors of a country (T), disaggregated by zones (A, B) and regions (AA, AB, BA, BB), so that:
\begin{equation*}
T = A+B, \quad A = AA+AB, \quad B=BA+BB.
\end{equation*}
At any time $t$, the hierarchical constraints between time series are given by: 
\begin{equation}\label{eq: u=Ab}
\ubf_t = \Abf \bbf_t,    
\end{equation}
where
$\mathbf{A} \in \rr^{n_u \times n_b}$ is the \textit{aggregation matrix}, which is made of 0 and 1 and specifies how the \textit{bottom} time series $\bbf_t\in \rr^{n_b}$ aggregate to the \textit{upper} time series $\ubf_t \in \rr^{n_u}$.
Denoting $\ybf_t = \left[\ubf_t^T, \bbf_t^T\right]^T \in \rr^n$,
we can equivalently encode the hierarchy as $\ybf_t = \Sbf \bbf_t$, 
where   the \textit{summing matrix} $\Sbf \in \rr^{n \times n_b}$ is:
\begin{equation*}
    \Sbf = \begin{bmatrix}
    \Abf\\
    \mathbf{I}_{n_b \times n_b}
\end{bmatrix}.
\end{equation*}
\citet{Girolimetto2024} introduce a more general framework for linearly constrained time series that does not rely on the summing matrix. 
Our method is also applicable to that case; however, 
in the following, we adopt the standard setting based on $\mathbf{S}$.

We consider probabilistic forecasts in the form of a joint predictive distribution $\pihat$ over $\rr^n$.
A  forecast distribution is \textit{coherent} if it  gives positive probability only to regions of points that satisfy the hierarchical constraints. 
Thus the support of a coherent distribution lies on  the \textit{coherent subspace} $\Scal$, defined as $\Scal := \{\ybf \in \rr^n \; \text{\small such that} \; \ybf = \Sbf \bbf \}$.
Probabilistic reconciliation \citep{panagiotelis2022reconc, zambon2024efficient}  adjusts the initially incoherent base forecast distribution $\pihat$ to obtain a coherent reconciled forecast distribution $\pitil$.

\paragraph{MinT reconciliation}
The minimum trace (MinT) reconciliation method was  introduced for the reconciliation of point forecasts \citep{wickramasuriya2019optimal}. 
Given the $h$-step ahead base point forecasts $\yH_{t+h|t}$, computed at time $t$ for time $t+h$, the reconciled point forecasts $\yT_{t+h|t}$ are obtained by projecting $\yH_{t+h|t}$ on the coherent subspace $\Scal$:
\begin{equation}\label{eq: minT point forecast}
\yT_{t+h|t} = \textbf{S} \Pbf_h \yH_{t+h|t},
\end{equation}
where $\Pbf_h = (\textbf{S}^T \Wbf_h^{-1} \textbf{S})^{-1} \textbf{S}^T \Wbf_h^{-1}$, and $\Wbf_h$ is the covariance matrix of the $h$-step ahead base forecast errors.
Assuming that the base forecasts $\yH_{t+h|t}$ are unbiased and that $\Wbf_h$ is known, 
the reconciled point forecasts $\yT_{t+h|t}$ 
minimize the expected mean squared error (MSE).

MinT was later extended to the probabilistic case by projecting on $\Scal$ the entire forecast distribution $\pihat$ \citep{panagiotelis2022reconc}.
If the joint base forecast distribution is Gaussian, i.e., $\pihat = \nn(\yH_{t+h|t},\, \Wbf_h)$,
the reconciled distribution is also Gaussian:
\[
\pitil = \nn(\yT_{t+h|t},\, \WT_h),
\]
where $\yT_{t+h|t}$ is given by Eq.~\eqref{eq: minT point forecast} and $\WT_h = \Sbf\Pbf_h\Wbf_h\Pbf_h^T\Sbf^T$.
In this case, $\pitil$ is optimal with respect to the log-score \citep{wickramasuriya2024probabilistic}.

\paragraph{Probabilistic reconciliation via conditioning}

\begin{figure}[h!]
\centering
\begin{subfigure}{.5\textwidth}
  \includegraphics[width=0.9\linewidth]{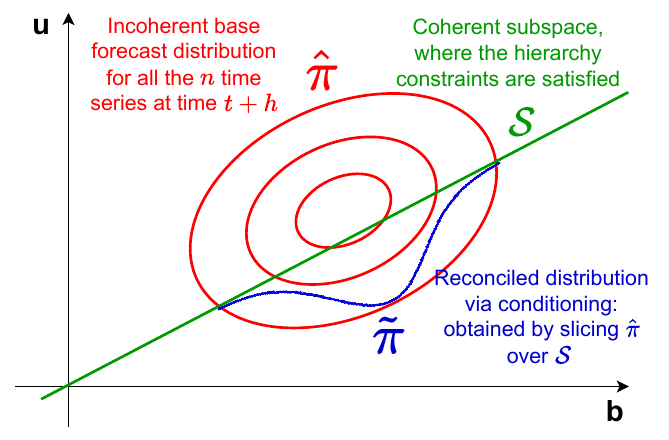}
\end{subfigure}%
\begin{subfigure}{.5\textwidth}
  \hspace{2mm}
  \includegraphics[width=0.9\linewidth]{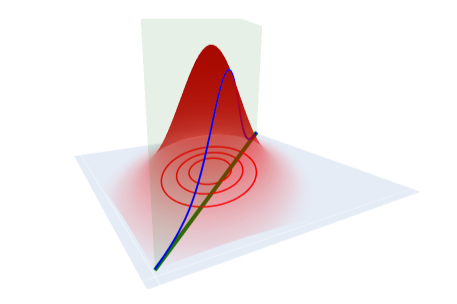}
\end{subfigure}
\caption{
Reconciliation via conditioning: the reconciled distribution $\pitil$ is obtained by 
slicing $\pihat$ on the coherent subspace $\Scal$.
} 
\label{fig: rec via cond}
\end{figure}

An alternative approach to probabilistic reconciliation derives the reconciled distribution by conditioning the joint base forecast distribution on the hierarchical constraints
\citep{corani_reconc_ecml, zambon2024efficient}.
Given \mbox{$\YH_{t+h|t} = \big[\UH^T_{t+h|t},\; \BH^T_{t+h|t} \big]^T$}, 
a random vector distributed as the $h$-step ahead base forecast distribution $\pihat$, the reconciled distribution of the bottom series is specified by 
\begin{equation}\label{eq:rec_via_cond}
    \BT_{t+h|t} \;\;\sim\;\; \BH_{t+h|t} \;\Big|\; \big(\UH_{t+h|t} - \Abf\BH_{t+h|t} = 0 \big), 
\end{equation}
and it can be extended to the entire hierarchy via
\begin{equation}\label{eq:rec_via_cond_Y}
\YT_{t+h|t} = \Sbf \BT_{t+h|t}.
\end{equation}
Eq.~\eqref{eq:rec_via_cond} allows the computation of the reconciled density up to a normalizing constant.
Generally, the reconciled distribution is not available in closed form, thus requiring sampling approaches \citep{corani_probabilistic_2023, zambon2024efficient}.
To date, the only exception is when the base forecast distribution is jointly Gaussian; in this case,
reconciliation via conditioning can be solved analytically and coincides with MinT \citep{zambon2024properties}.
Reconciliation via conditioning can be interpreted as slicing the incoherent joint density over the coherent subspace $\mathcal{S}$ (Fig.~\ref{fig: rec via cond}). When the base forecast distribution is Gaussian, both reconciliation via conditioning and MinT depend on the covariance matrix $\Wbf_h$, which must be estimated.

\paragraph{Estimation of $\Wbf_1$}
For $h=1$, $\Wbf_1$ is  estimated as the covariance matrix of the in-sample residuals, 
usually \citep{wickramasuriya2019optimal, panagiotelis2021geometric, di2022forecast} adopting the shrinkage estimator \citep{schafer2005shrinkage}:
\begin{equation}\label{eq: shrinkage}
    \WH_1 = (1 - \lambda_{\text{shrink}}) \SH_1 + \lambda_{\text{shrink}} \widehat{\mathbf{D}}_1,
\end{equation}
where $\SH_1$ is the sample covariance matrix of the 1-step ahead residuals, $\widehat{\mathbf{D}}_1$ is the diagonal matrix of the sample variances, and $\lambda_{\text{shrink}} \in [0,1]$ is the  shrinkage parameter.
In what follows, we denote by \textit{MinT} the MinT reconciliation method with the shrinkage estimator $\WH_1$.
Double shrinkage approaches (\citealp[Chap. 4.3.3]{wickramasuriya2017}; \citealp{Carrara2025}) shrink $\SH_1$  towards two targets simultaneously; however, they
only provide limited improvements over traditional shrinkage or incur in significant computational cost.

\paragraph{Estimation of $\Wbf_h$}
There is no standard method for estimating $\Wbf_h$ for $h>1$.
\cite{wickramasuriya2019optimal} assume $\Wbf_h = k_h \Wbf_1$ for some constant $k_h$, which can be left unspecified for computing the reconciled point forecasts. Some heuristics for  setting $k_h$  in the probabilistic framework
have been proposed by \cite{corani_reconc_ecml}.
\cite{girolimetto2024cross} instead estimate $\Wbf_h$ using multi-step residuals. 
While our method can be applied with any choice of $\Wbf_h$,  in our experiments, for simplicity, we only consider  $h=1$.
In the following, we drop the $h$ subscript for ease of notation.

\section{Probabilistic reconciliation modeling the uncertainty on $\Wbf$} \label{sec: t-rec}
We propose a Bayesian approach  which treats  $\Wbf$
as a random variable, rather than as a fixed matrix. 
We set a prior distribution on $\Wbf$ and update it using the information from the in-sample residuals. 
We account for the uncertainty on $\Wbf$ by modeling its entire posterior distribution, rather than taking a point estimate $\WH$ as in previous works.
By adopting an Inverse-Wishart prior and assuming Gaussian residuals, the resulting incoherent predictive distribution is a multivariate t.
This predictive distribution is the result of the Bayesian treatment of $\mathbf{W}$ under Gaussian errors; 
we do not make any heavy-tailed assumption on the base forecasting model. 
Finally, we reconcile the incoherent predictive distribution via conditioning, 
obtaining a coherent multivariate t in closed-form.
We call our method \textit{\tRec}. 

\paragraph{Prior of $\Wbf$}
We adopt an Inverse-Wishart ($\iw$) prior distribution for $\Wbf$:
\begin{equation}\label{eq: prior on W}
\Wbf \;\sim\; \iw(\Psibf_0, \nu_0).
\end{equation}
The Inverse-Wishart is a distribution defined on the space of positive-definite matrices \citep{gupta2018matrix};
$\nu_0 \in \rr$ represents the degrees of freedom and $\Psibf_0 \in \rr^{n \times n}$ is the scale matrix, which is proportional to the mean.

\paragraph{Likelihood of the residuals}

We assume that the in-sample residuals 
$\rbf_1, \dots, \rbf_T$, defined as $\rbf_j = \yH_j - \ybf_j$ for $j=1,\dots,T$, are independent and identically distributed (IID).
This is a tenable assumption if the models that generate the base forecasts are selected within a wide set of competing models
(\citealp[Chap.~1.4.1]{svetunkov2023forecasting}; \citealp[Chap.~5.4]{hyndman2021forecasting}) via statistical  model selection. 
Moreover, as in \cite{corani_reconc_ecml, wickramasuriya2024probabilistic}, 
we assume that, conditionally on $\Wbf$, the residuals are distributed as a multivariate Gaussian with zero mean and covariance matrix $\Wbf$:
\begin{equation}\label{eq: likelihood}
\rbf_1, \dots, \rbf_T \;|\; \Wbf \;\simIID\; \nn(\bm{0}, \Wbf). 
\end{equation} 

\paragraph{Posterior of $\Wbf$}
Given the in-sample residuals $\Rbf = \begin{bmatrix} \rbf_1 & \dots & \rbf_T \end{bmatrix}$, we
compute the posterior distribution of $\Wbf$ via Bayes' rule \citep[Chap.~1.3]{bda2013}:
\[
\pi\left(\Wbf\,|\,\Rbf\right) = \frac{\pi\left(\Wbf\right)\pi\left(\Rbf\,|\,\Wbf\right)}{\pi\left(\Rbf\right)}.
\]
Generally, in Bayesian models, the posterior distribution  is not available in a closed parametric form, thus requiring sampling techniques (e.g., MCMC).
However, since the Inverse-Wishart prior distribution is conjugate to the Gaussian likelihood \citep[Chap.~3.6]{bda2013}, the posterior distribution remains Inverse-Wishart and can be computed analytically:
\begin{equation}\label{eq: posterior on W}
\Wbf\;|\;\Rbf \;\sim\;  \iw(\Wbf;\, \Psi', \nu'),
\end{equation}
with parameters $\nu' = \nu_0 + T$ and $\Psibf' = \Psibf_0 + \Rbf \Rbf^T \in \rr^{n \times n}$.


\paragraph{Multivariate t incoherent predictive distribution}
From Eq.~\eqref{eq: likelihood} follows that, given $\Wbf$, the conditional predictive distribution $\pihat\left(\ybf\,|\,\Wbf\right)$ is Gaussian.
To obtain the posterior predictive distribution $\pihat\left(\ybf\,|\,\Rbf\right)$, we compute the posterior joint distribution $\pihat\left(\ybf, \Wbf\,|\, \Rbf\right)$ and then we marginalize out $\Wbf$; this yields a multivariate t-distribution \citep[Chap.~3.6]{bda2013}:
\begin{align}
\pihat\left(\ybf\,|\,\Rbf\right)
&= \int \pihat\left(\ybf, \Wbf\,|\, \Rbf\right) \, d\Wbf \nonumber\\
&= \int \pihat\left(\ybf\,|\,\Wbf\right) \; \pi\left(\Wbf\,|\,\Rbf\right) \, d\Wbf \nonumber\\
&= \int \nn\left(\ybf;\,\yH,\,\Wbf\right) \; \iw(\Wbf;\, \Psi', \nu') \, d\Wbf \nonumber\\
&= \mt\left(\ybf;\, \yH,\, \frac{1}{\nu'-n+1} \Psibf',\, \nu'-n+1\right). \label{eq: posterior predictive mt}
\end{align}

\paragraph{Reconciliation}
From Eq.~\eqref{eq: posterior predictive mt}, the incoherent posterior predictive distribution is a multivariate t-distribution:
\[
\big[\UH^T,\; \BH^T \big]^T \;\sim\;  \mt\left(\yH,\, \frac{1}{\nu'-n+1} \Psibf',\, \nu'-n+1\right),
\]
where 
$\yH = \big[\uH^T,\; \bH^T \big]^T$, with $\uH \in \rr^{n_u}$ and $\bH \in \rr^{n_b}$, and $\Psibf'$ can be decomposed block-by-block into
$\Psibf'_U \in \rr^{n_u \times n_u}$, $\Psibf'_B \in \rr^{n_b \times n_b}$, $\Psibf'_{UB} \in \rr^{n_u \times n_b}$.
%
In the following theorem, we provide the  closed-form expression of the distribution reconciled  via conditioning, i.e.,  $\BT \,\sim\, \BH \,\Big|\, \big(\UH - \Abf\BH = 0 \big)$.
The proof, reported in \ref{app: rec via cond t},  exploits the closure of the multivariate t-distribution  under affine transformations \citep{Kotz_Nadarajah_2004} and conditioning \citep{ding2016conditional}.

\begin{theorem}
\label{theorem: t reconc via conditioning}
Under the previous assumptions on prior (Eq.~\eqref{eq: prior on W}) and likelihood (Eq.~\eqref{eq: likelihood}), 
the reconciled distribution via conditioning of the bottom time series is a multivariate t-distribution:


\begin{equation} \label{eq: reconciled bottom distribution posterior}
\BT \;\sim\; \mt\left(\bT,\, \SigmaT_B,\, \nutil \right),
\end{equation}
where
\begin{align}
\bT &= \bH + \left({\Psibf'_{UB}}^T - \Psibf'_B \Abf^T\right) \Qbf^{-1} (\Abf \bH - \uH), \label{eq: reconciled t mean bottom} \\
\SigmaT_B &= 
C
\left[\Psibf'_B - \left({\Psibf'_{UB}}^T - \Psibf'_B \Abf^T\right) \Qbf^{-1} \left({\Psibf'_{UB}}^T - \Psibf'_B \Abf^T\right)^T \right], \label{eq: reconciled t variance bottom} \\
\nutil &= \nu' - n_b + 1, \label{eq: reconciled t nu bottom}
\end{align}
%
and 
\begin{align}
&C =  \frac{1 +  (\Abf \bH - \uH)^T \Qbf^{-1} (\Abf \bH - \uH)}{\nutil}, \nonumber \\  
&\Qbf = \Psibf'_U - \Psibf'_{UB} \Abf^T - \Abf {\Psibf'_{UB}}^T + \Abf \Psibf'_B \Abf^T. \nonumber
\end{align}
\end{theorem}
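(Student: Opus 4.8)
The plan is to derive the reconciled law directly from the definition of reconciliation via conditioning, namely $\BT \sim \BH \mid (\UH - \Abf\BH = \bm{0})$, by exploiting the two closure properties of the multivariate t recalled above: closure under affine maps and closure under conditioning. The idea is first to turn the event $\{\UH - \Abf\BH = \bm{0}\}$ into a condition on a block of coordinates of a suitably transformed vector, and then to read off the conditional law from the standard multivariate t conditioning formula. I start from the base predictive distribution of Eq.~\eqref{eq: posterior predictive mt}, i.e.\ $\YH = [\UH^T, \BH^T]^T \sim \mt(\yH,\, \tfrac{1}{\nu'-n+1}\Psibf',\, \nu'-n+1)$, and I partition $\Psibf'$ into blocks $\Psibf'_U$, $\Psibf'_{UB}$, $\Psibf'_B$ conformably with the upper/bottom split.

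First I apply the affine map $M = \begin{bmatrix} \Ibf & -\Abf \\ \bm{0} & \Ibf\end{bmatrix}$ to $\YH$, so that $M\YH = [(\UH - \Abf\BH)^T,\, \BH^T]^T$. By closure under affine transformations this vector is again multivariate t, with the same degrees of freedom $\nu'-n+1$, location $M\yH = [(\uH - \Abf\bH)^T,\, \bH^T]^T$, and scale matrix $\tfrac{1}{\nu'-n+1}\, M\Psibf' M^T$. A direct block multiplication shows that the $(1,1)$ block of $M\Psibf' M^T$ equals $\Psibf'_U - \Abf{\Psibf'_{UB}}^T - \Psibf'_{UB}\Abf^T + \Abf\Psibf'_B\Abf^T$, which is exactly $\Qbf$; the off-diagonal block equals $\Psibf'_{UB} - \Abf\Psibf'_B$ (with transpose ${\Psibf'_{UB}}^T - \Psibf'_B\Abf^T$ in the $(2,1)$ position), and the $(2,2)$ block is $\Psibf'_B$.

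Then I condition the transformed vector on its first block being $\bm{0}$. Writing the multivariate t conditioning rule for a partition whose upper block has dimension $n_u$, the conditional law of the second block is again multivariate t: the degrees of freedom are raised by $n_u$, the location is shifted by $\Sigma_{21}\Sigma_{11}^{-1}(x_1-\mu_1)$, and the scale equals the Schur complement $\Sigma_{22}-\Sigma_{21}\Sigma_{11}^{-1}\Sigma_{12}$ rescaled by the data-dependent factor $(\nu + (x_1-\mu_1)^T\Sigma_{11}^{-1}(x_1-\mu_1))/(\nu + n_u)$. Substituting $x_1 = \bm{0}$, $\mu_1 = \uH - \Abf\bH = -(\Abf\bH - \uH)$, and the blocks above, the common scalar $\tfrac{1}{\nu'-n+1}$ cancels wherever $\Sigma_{11}^{-1}$ meets $\Sigma_{21}$ or $\Sigma_{12}$, and the three parameters follow: the degrees of freedom become $(\nu'-n+1)+n_u = \nu'-n_b+1 = \nutil$; the location becomes $\bH + ({\Psibf'_{UB}}^T - \Psibf'_B\Abf^T)\Qbf^{-1}(\Abf\bH - \uH) = \bT$; and the Schur complement yields the bracketed matrix of Eq.~\eqref{eq: reconciled t variance bottom}, while the data-dependent rescaling factor collapses to exactly the scalar $C$. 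Pre-multiplying by $\Sbf$ extends the statement to the whole hierarchy.

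I expect the main obstacle to be bookkeeping rather than conceptual. One must track the sign in $\mu_1 = -(\Abf\bH - \uH)$ and, above all, handle correctly the data-dependent scale factor of the multivariate t conditioning formula, which has no analogue in the Gaussian case and is precisely what produces the multiplicative constant $C$ (and hence the possibility of prediction intervals wider than the base ones). Checking that the several factors of $\tfrac{1}{\nu'-n+1}$ cancel consistently across the location, the scale, and the Mahalanobis term is the step most prone to error.
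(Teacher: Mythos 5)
Your proof is correct and follows essentially the same route as the paper's: an affine change of variables that isolates the incoherence $\UH - \Abf\BH$, followed by the closure of the multivariate t under affine maps and conditioning, with the factors of $\tfrac{1}{\nu'-n+1}$ cancelling exactly as you describe. The only cosmetic differences are that the paper orders the transformed blocks as $[\BH^T,\,(\UH-\Abf\BH)^T]^T$ rather than the reverse, and packages the conditioning step as a standalone proposition for a general $\mt(\yH,\PsiH,\nuhat)$ before substituting the posterior-predictive parameters.
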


Theorem~\ref{prop: t reconc via conditioning} provides the reconciled distribution for the bottom-level time series. 
From Eq.~\eqref{eq:rec_via_cond_Y}, the reconciled distribution for the entire hierarchy
is a multivariate t with mean $\yT = \Sbf \bT$, scale matrix $\Sbf \SigmaT_B \Sbf^T$, and degrees of freedom $\nutil$.

\subsection{Parameters of the prior distribution}
\label{sec: prior_params}

We now discuss our choices for the scale matrix $\Psibf_0$ and the degrees of freedom $\nu_0$ of the Inverse-Wishart prior.

\paragraph{Scale matrix}
$\Psibf_0 \in \rr^{n \times n}$ is a positive definite matrix, proportional to the prior mean $\Psibf$ of the IW distribution via $\Psibf_0 = (\nu_0 - n - 1) \Psibf$.
To set $\Psibf$, we look at the covariance of the residuals of two simple methods \citep[Ch.~5.2]{hyndman2021forecasting}:  the naive and seasonal naive.
The naive method forecasts  all future values as equal to the last observation,
while the seasonal naive method sets each forecast equal to the last observed value of the same season (e.g., for a monthly time series, the forecast for all future February is the last observed February value).
For each time series we select between naive and seasonal naive method using the Kwiatkowski-Phillips-Schmidt-Shin (KPSS) seasonality test \citep[Ch.~9.1]{hyndman2021forecasting}. 
We set  $\Psibf$ as the covariance matrix of the residuals of such simple methods,  estimated using the  shrinkage estimator \citep{schafer2005shrinkage} as in \textit{MinT}.

\paragraph{Degrees of freedom} The parameter $\nu_0$ controls the strength of the prior: a higher value indicates stronger confidence in the prior. 
We set $\nu_0 \in \rr$ by maximizing the out-of-sample log-score, estimated via Bayesian leave-one-out (LOO) cross-validation \citep{vehtari2017practical}:

\begin{align}
\nu_0 :&= \argmax_{\nu \in [n + 2, 5n]}\; \log\left(\prod_{i = 1}^T \pi(\mathbf{r}_i\,|\,\Rbf_{-i})\right) \nonumber \\
&= \argmax_{\nu \in [n + 2, 5n]}\; \sum_{i = 1}^T \log\left(\mt\Big(\mathbf{r}_i;\, 0,\, \frac{1}{\nu + T - n} (\Psibf_0 + \Rbf_{-i} \Rbf_{-i}^T),\, \nu + T - n\Big)\right), \label{eq: LOO}
\end{align}

where $\Rbf_{-i}\in \rr^{n \times (T - 1)}$ is the matrix of the in-sample residuals without the column $\rbf_i$,
and $\pi(\mathbf{r}_i\,|\,\Rbf_{-i})$ is the posterior predictive density of the residuals $\rbf_i$, given the observations $\Rbf_{-i}$; 
the derivation of Eq.~\eqref{eq: LOO} is analogous to that of Eq.~\eqref{eq: posterior predictive mt}.
The LOO approach is tenable for model selection in time series \citep{bergmeir2018note} if the residuals $\rbf_1, \dots, \rbf_T$ are approximately IID.
We restrict the search for the optimal value of $\nu_0$ to the interval $[n + 2, 5n]$. Indeed,  the expected value of the IW prior is not defined for $\nu_0 < n + 2$; on the other hand,
the optimal $\nu_0$ has been consistently below 
the upper bound $\nu_0 = 5n$.

A naive evaluation of the objective function in Eq.~\eqref{eq: LOO} is computationally expensive, as it requires  inverting $T$ matrices $\Psibf_0 + \Rbf_{-i} \Rbf_{-i}^T$, for $i=1,\dots,T$.
However, we achieve a significant speed-up by noting that these matrices are rank-1 updates of the same matrix:
$
\Psibf_0 + \Rbf_{-i} \Rbf_{-i}^T = \Psibf_0 + \Rbf \Rbf^T - \rbf_i \rbf_i^T.
$
We thus use the Sherman-Morrison formula \citep{sherman1950adjustment} to obtain
\[
\left(\Psibf_0 + \Rbf_{-i} \Rbf_{-i}^T\right)^{-1}
= \left(\Psibf_0 + \Rbf \Rbf^T\right)^{-1}
+ \frac{\left(\Psibf_0 + \Rbf \Rbf^T\right)^{-1} \rbf_i \rbf_i^T \left(\Psibf_0 + \Rbf \Rbf^T\right)^{-1}}
{1 - \rbf_i^T \left(\Psibf_0 + \Rbf \Rbf^T\right)^{-1} \rbf_i}.
\]
This requires computing only \textit{once} the inverse of $\Psibf_0 + \Rbf \Rbf^T$.
We optimize $\nu$ using the \texttt{nloptr} R package \citep{johnsonNLoptr}. 
Even in our largest experiment ($n=111,\, T=60$), this optimization requires less than $0.5$~s on a standard laptop,
representing the only significant computational overhead relative to \textit{MinT}.

\section{Reconciliation of  a minimal hierarchy}\label{sec: simulations}

\begin{figure}[ht]
\begin{center}
    \begin{tikzpicture}[every node/.style={rectangle,draw,align=center,rounded corners=4,font=\large},level distance=1.cm,
  level 1/.style={sibling distance=3.cm, edge from parent fork down}
]
  \node[fill=black!20,minimum width=1cm] {$U$}
    child {node[fill=black!5,minimum width=1cm] {$B_1$}}
    child {node[fill=black!5,minimum width=1cm] {$B_2$}};
\end{tikzpicture}
\end{center}
   \caption{Hierarchy with one upper and two bottom time series.}
\label{fig: min_hiearchy}
\end{figure}

We compare \textit{\tRec} and \textit{MinT} on the minimal hierarchy of Fig.~\ref{fig: min_hiearchy}, for which the formulas of the 
reconciled distributions are easier to interpret.
Let $\Psibf'$ be the scale matrix of the IW posterior distribution, specified by Eq.~\eqref{eq: posterior on W}, and let $\WH$ be the estimated covariance matrix used by MinT.
For simplicity, we focus here on the upper series, though analogous conclusions apply to the bottom series. 
The reconciled marginal distributions for \textit{MinT} (Gaussian) and \textit{\tRec} (univariate $t$) are characterized by the following mean and scale parameters:
\begin{align}
\util^{(MinT)} &= \Big( 1 - \frac{\ghat_u}{\Qhat}\Big)\, \uhat + \frac{\ghat_u}{\Qhat} \left(\bhat_1 + \bhat_2\right), 
\qquad \sigmatil_u^{(MinT)} = \sqrt{\widehat{W}_u - \frac{\ghat_u^2}{\Qhat}}, \label{eq:u_mint} \\
\util^{(t-Rec)} &= \Big( 1 - \frac{g'_u}{Q'}\Big)\, \uhat + \frac{g'_u}{Q'} \left(\bhat_1 + \bhat_2\right),
\qquad \sigmatil_u^{(t-Rec)} = \sqrt{C \left(\Psi_u' - \frac{{g'_u}^2}{Q'}\right)}, \label{eq:u_trec} 
\end{align}
where $g_u'$ and $Q'$ are scalar terms derived from $\Psibf'$, while $\ghat_u$ and $\Qhat$ are analogously derived from $\WH$; we denote by $\Psi'_u \in \rr$ and $\widehat{W}_u \in \rr$ the entries of $\Psibf'$ and $\WH$ corresponding to the upper series.
We report the complete expression of the reconciled distribution for \textit{\tRec} in~\ref{app:rec_minimal}, while for \textit{MinT} we refer to \citet{zambon2024properties}.

Both for \textit{MinT} and \textit{\tRec}, the reconciled mean $\util$ is a convex combination of $\uhat$ and $\bhat_1+\bhat_2$.
However, there is an important  difference  in the prediction intervals. 
After \textit{MinT} reconciliation, the width of the prediction interval  always decreases, as the reconciled variance is smaller than the variance of the base forecasts \citep{zambon2024properties}.
In contrast, the prediction intervals of \textit{\tRec} can be wider than those of the base forecasts, since the $t$ distribution has generally heavier tails than the Gaussian.
Moreover, the scale of the reconciled distribution is proportional to the square root of the factor
\[
C   = \dfrac{1}{\nutil} \left(1 + \dfrac{(\uhat -\bhat_1 - \bhat_2)^2}{Q'}\right), 
\]
where $\uhat -\bhat_1 - \bhat_2$ and $Q'$ are respectively the mean and the variance of the random variable $\UH-\BH_1-\BH_2$, which represents the incoherence. 
A large \textit{scaled incoherence}, defined as $|\uhat -\bhat_1 - \bhat_2| \,/\, Q'^{1/2}$, 
results in a large value of $C$, thus the scale of the reconciled distribution (Eq.~\eqref{eq:u_trec}) could be higher than the scale of the base forecast. 
Hence, unlike \textit{MinT}, \textit{\tRec} 
widens the prediction intervals when faced with high incoherence.

\subsection{Simulations}
\label{sec:simulations}

We perform a simulation study in which
we  generate time series, compute the Gaussian base forecasts and reconcile them with \textit{MinT} and \textit{\tRec}. We consider the minimal hierarchy in Fig.~\ref{fig: min_hiearchy} and, as in \citet{wickramasuriya2019optimal}, we simulate the two bottom series using a structural time series model with trend, seasonality (which is set to 4), and correlated errors.
We obtain the upper series  as the sum of the bottom series. See~\ref{app:simulation} for more details on the data generating process. 
The base forecasts are purposely misspecified as we
compute them using ets  with additive noise from the R package \texttt{forecast} \citep{hyndman2008automatic}.
We introduce the \textit{relative width} of the prediction intervals (PI) as:
\[
\text{relative width} = 
\frac{\text{width of the reconciled PI}}
{\text{width of the base forecast PI}}.
\]
In Fig.~\ref{fig:dist_PI_width_scatter}, we show  the distribution of the relative width  
(level 95\%) for \textit{MinT} and \textit{\tRec} with $T=12$. 
We report the results obtained with different values of $T$ in Fig.~\ref{fig:sim incoherence VS T}, \ref{app:simulation}.
As expected, the relative width of \textit{MinT} is consistently \textless  1 \citep{zambon2024properties}.
On the other hand, the relative width of \textit{\tRec} is sometimes above 1, and it is positively correlated to the scaled incoherence of the base forecasts.

\begin{figure}[h!]
    \centering
    \begin{overpic}[width=0.8\textwidth]{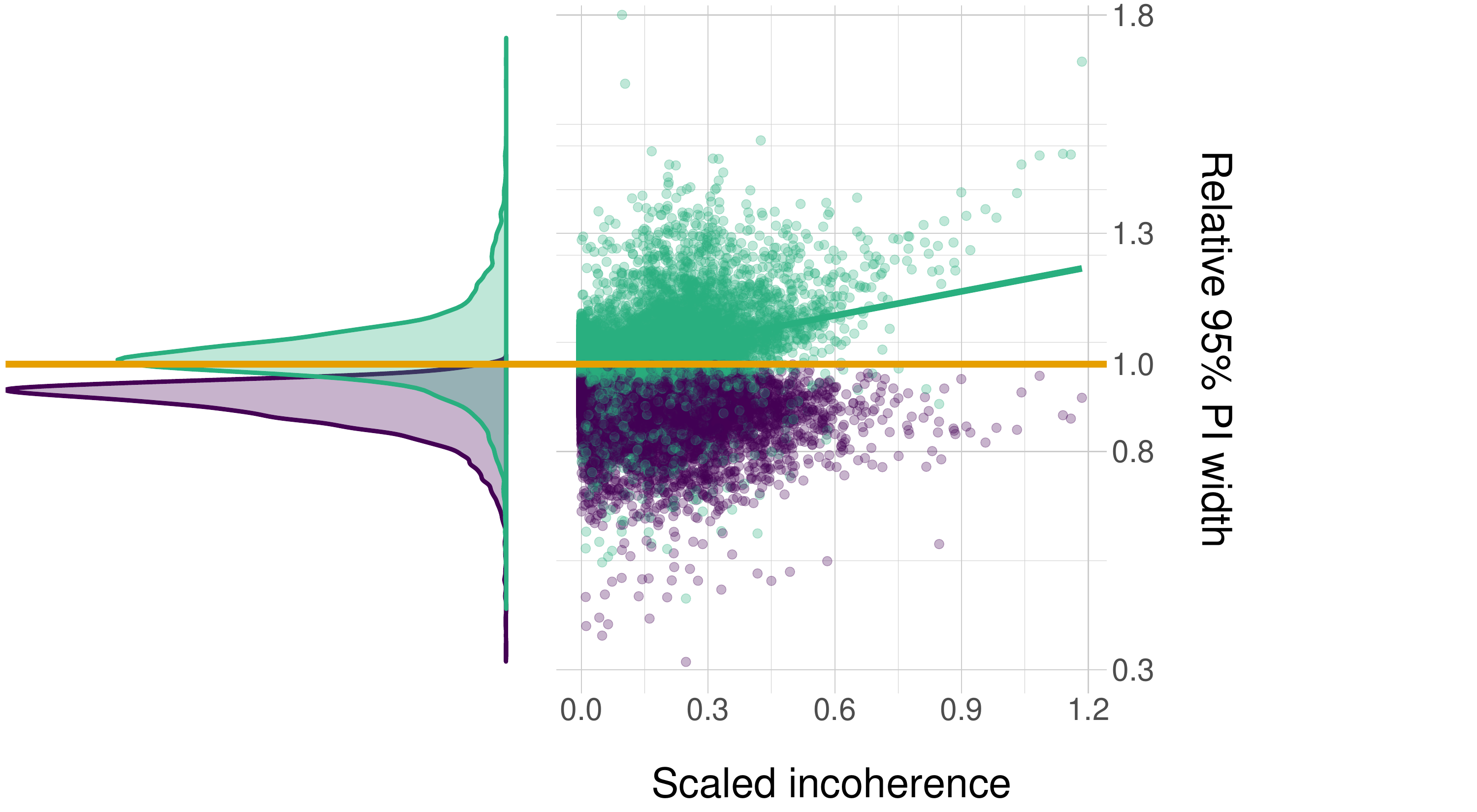}
            \put(5,10){\includegraphics[width=0.11\textwidth]{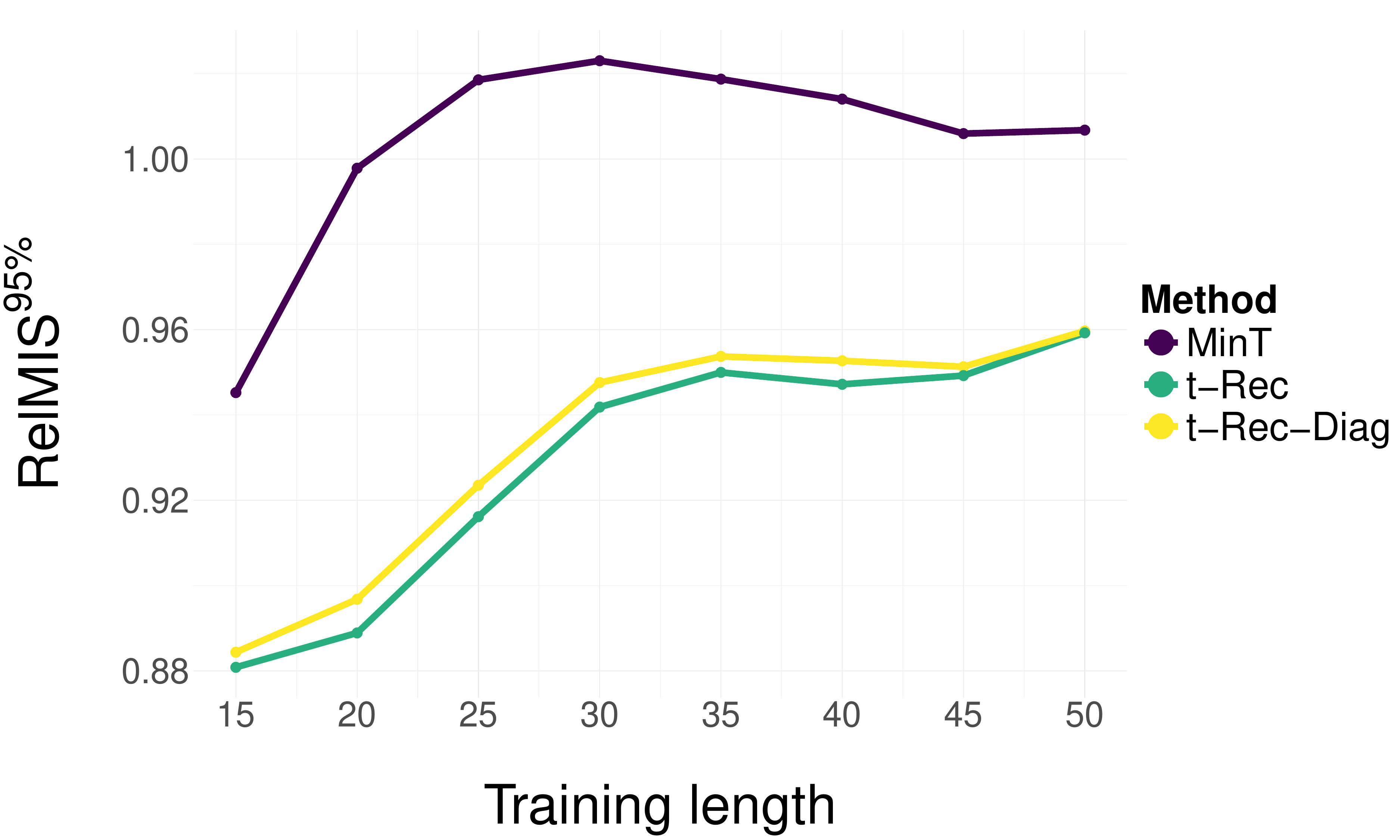}}
        \end{overpic}
    \caption{
    Left: distribution of the relative widths, i.e., reconciled PI width/base forecast PI width. We consider the 95\% prediction intervals for the upper series for \textit{MinT} (purple) and \textit{\tRec} (green), across 10\,000 simulations with T=12.
    While \textit{MinT} always narrows the prediction intervals, \textit{\tRec} can also widen them.
    Right: for each simulation, the relative widths of \textit{MinT} and \textit{\tRec} are plotted against the scaled incoherence $|\uhat -\bhat_1 - \bhat_2| \,/\, \sqrt{Q'}$.
    Notably, \textit{\tRec} exhibits a positive correlation between interval widths and scaled incoherence.
    }
    \label{fig:dist_PI_width_scatter}
\end{figure}

In Table~\ref{tab:interval_width_simulations}, we report the geometric mean, over the simulations, of the relative width of the 80\% and 95\% PIs.
The relative width of \textit{MinT}
depends on the ratio of the standard deviations of the reconciled and the base forecasts; it is thus the 
same at any  confidence level.
Instead, the relative width of \textit{\tRec} varies with the confidence level. Due to the heavier tails of the t-distribution compared to the Gaussian, 
the relative width of
\textit{\tRec} is higher at the 95\% level than at the 80\%.

We also evaluate the accuracy of the forecasts obtained by both methods. 
The results, reported in \ref{app:simulation}, show that
\textit{\tRec} and \textit{MinT} performs similarly
on the point forecasts, but
\textit{\tRec}  outperforms \textit{MinT} on the probabilistic scores.

\begin{table}[h!]
\centering

\begin{tabular}{p{.7cm} cc @{\hskip 18pt} cc}
\toprule
& \multicolumn{4}{c}{\textbf{Relative width}}\\
 & \multicolumn{2}{c}{\textbf{80\%}} & \multicolumn{2}{c}{\textbf{95\%}} \\
{} & \textit{MinT} & \textit{\tRec} & \textit{MinT} & \textit{\tRec} \\
\cmidrule{2-5}
U  & 0.90 & 0.99 & 0.90 & 1.02 \\
B1 & 0.95 & 1.01 & 0.95 & 1.05 \\
B2 & 0.95 & 1.01 & 0.95 & 1.05 \\
\bottomrule
\end{tabular}
\caption{Geometric average over 10\,000 experiments of the relative widths with $T = 12$.
}
\label{tab:interval_width_simulations}
\end{table}

Finally, we measure the effect of accounting for
the uncertainty on the covariance matrix $\Wbf$. 
We compare  \textit{\tRec} with
\textit{\tRec-MAP}, i.e., 
MinT equipped with a covariance matrix corresponding to the mode  of the Inverse-Wishart posterior distribution.
Thus \textit{\tRec-MAP} adopts the maximum a posteriori point estimate of $\Wbf$, while \textit{\tRec} accounts for the whole posterior distribution of $\Wbf$.
We run the simulations with increasing training lengths $T$
(Fig.~\ref{fig:convergence}). 
For small values of $T$,
the PIs of \textit{\tRec} are wider than the PIs of \textit{\tRec-MAP}, because of the  uncertainty on $\Wbf$.
However the posterior uncertainty on $\Wbf$ decreases with $T$ and eventually for large $T$ the PIs of \textit{\tRec} and  \textit{\tRec-MAP} become equivalent.



\begin{figure}[h!]
    \centering
    \includegraphics[width=0.7\linewidth]{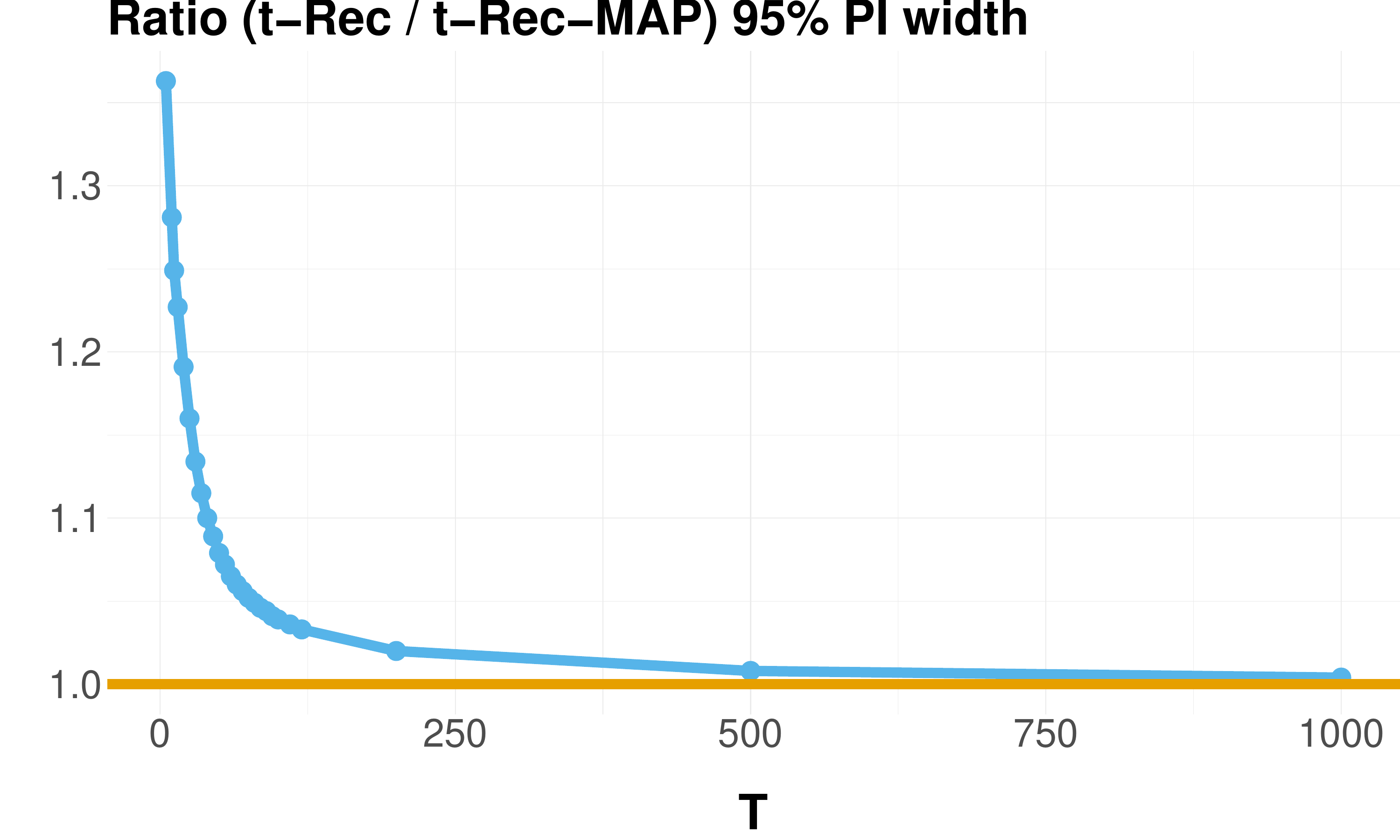}
    \caption{Geometric mean, computed over 10\,000 independent experiments for each value of $T$, of the ratios between the 95\% prediction interval widths of \textit{\tRec} and \textit{\tRec-MAP}.
    The average ratio converges to 1 as the training length grows. 
    }
    \label{fig:convergence}
\end{figure}

\section{Experiments on real datasets}\label{sec: experiment}

We compare \textit{\tRec} and \textit{MinT} on three real-world datasets. The first is \textit{Swiss tourism}, which we  introduce  for the first time;  the other two
are taken from previous papers and regard Australian tourism.
We release the code to reproduce the results in a GitHub repository \footnote{\href{https://anonymous.4open.science/r/submission-8361}{Link to GitHub}}.
The \textit{\tRec} method is implemented in the R package \texttt{bayesRecon} \citep{bayesRecon}.

\paragraph{Swiss tourism}
Switzerland is divided in 26 cantons, listed in Table~\ref{tab:cantons}, \ref{app:datasets}.
We consider monthly time series  of overnight stays, downloaded from the website of the Swiss Federal Statistical Office  \footnote{\href{https://www.pxweb.bfs.admin.ch/pxweb/en/px-x-1003020000_102/-/px-x-1003020000_102.px/table/tableViewLayout2/}{Swiss tourism data}}. 
The data are originally disaggregated by both canton and country of origin of the tourists.
However, 
we restrict our analysis to the canton level,
as further disaggregation yields time series with low counts and many zeros, for which 
the assumption of  Gaussianity of the residuals
is untenable.
The resulting hierarchy consists of 26 bottom-level series (one for each canton) and one top-level series (Swiss total). 
Each time series goes from January 2005 to January 2025, i.e.,
241 monthly observations.
We make available the Swiss tourism dataset in the R package \texttt{bayesRecon} \citep{bayesRecon}.

\paragraph{Australian tourism}
The other two datasets (\textit{Australian Tourism-M} and \textit{Australian Tourism-Q}) refer to Australian domestic overnight trips.
We obtain the
monthly dataset
(\textit{Australian Tourism-M})  from
Rob J Hyndman's website \footnote{\href{https://robjhyndman.com/data/TourismData\_v3.csv}{https://robjhyndman.com/data/TourismData\_v3.csv}}.
The bottom level includes 76 series representing Australian regions, which are then aggregated into 21 zones, 7 states, and the total. This dataset spans from January 1998 to December 2016, covering 228 time points. For more details, see 
Table~6 in \cite{wickramasuriya2019optimal}. 

Some regions are however composed by a single zone.
In \cite{wickramasuriya2019optimal} they are in practice counted twice, both as region and as zone.
We instead do not include zones composed by a single region;
such regions are aggregated at the state level.

We take the quarterly data set
\textit{Australian Tourism-Q} from the \texttt{tsibble} R package \citep{wang2020new}. 
This is a  two-level hierarchy, similar to the previous dataset but without the zone level. 
This dataset covers 20 years, from January 1998 to December 2017, for a total of 80 observations.
The main features of all datasets are summarized in Table~\ref{tab:dataset_summary}.

\begin{table}[ht]
\centering
\begin{tabular}{lcccccc}
\toprule
\textbf{Dataset} & Frequency & Levels & $n_b$ & $n_u$ &$n$ & Length \\
\midrule
Swiss Tourism       & Monthly & 1 & 26 & 1  & 27  & 241 \\
Australian Tourism-M & Monthly & 3 & 76 & 29 & 105 & 228 \\
Australian Tourism-Q & Quarterly & 2 & 76 & 8 & 84 & 80 \\
\bottomrule
\end{tabular}
\caption{Summary of the datasets.}
\label{tab:dataset_summary}
\end{table}

\paragraph{Experimental setting}
We compute the base forecasts for each time series using ets  from the \texttt{forecast} R package \cite{hyndman2008automatic}.
We only consider models with additive noise, as 
our reconciliation approaches assume a Gaussian
predictive distribution.
We perform reconciliation with both \textit{\tRec} and \textit{MinT}.

We evaluate performance across various \trains, using a rolling origin scheme for each.
We use 100 rolling origins for the monthly datasets (\textit{Swiss tourism} and \textit{Australian Tourism-M}).
For \textit{Australian Tourism-Q}, which has fewer observations, we use between 30 and 65 rolling origins, depending on the \train.

\paragraph{Evaluation metrics}
\label{subsec: eval_metrics}

We evaluate both point and probabilistic forecast accuracy. 
For the latter, to ensure a comprehensive evaluation, we employ univariate scores to measure marginal performance and multivariate scores to assess the joint predictive distribution.

We assess the point forecasts through the Mean Squared Error (MSE), averaged over the rolling origins: 
\begin{equation*}
    \text{MSE} = \frac{1}{R} \sum_{i = 1}^{R} \big\|\yT_i - \ybf_i\big\|_2^2,
\end{equation*}
where $||\cdot||_2$ is the Euclidean distance, $R$ is the number of rolling origins, $\yT_i \in \rr^n$ is the vector of the point forecasts for the $i$-th rolling origin, and $\ybf_i \in \rr^n$ is the vector of the corresponding actual values. 
We report the relative MSE, i.e., the ratio between the MSE of the reconciled and the base forecasts:
\begin{equation}
    \text{RelMSE} = \frac{\text{MSE}}{\text{MSE}_{base}}. \label{eq: relMSE}
\end{equation}
For each series $j$, with $j=1,\dots,n$, we assess the predictive distribution using the Continuous Ranked Probability Score (CRPS)
\citep{gneiting2007strictly}:
\begin{equation*}
    \text{CRPS}_j = \frac{1}{R} \sum_{i = 1}^{R} \int_{-\infty}^{+\infty} \big(F_{i,j}(x) - \mathbb{I}(x \geq y_{i,j})\big)^2 \,dx,
\end{equation*}
where $F_{i,j}$ is the predictive CDF for the $i$-th rolling origin and series $j$ and $y_{i,j}$ is the corresponding actual value.
We compute the CRPS for both the Gaussian and t-distributions in closed form using the \texttt{scoringRules} R package \citep{jordan_ScoringRules}.
We score the prediction intervals of each series using the Mean Interval Score (MIS) \citep{gneiting2007strictly}: 
\begin{equation*}
        \text{MIS}_j = \frac{1}{R} \sum_{i = 1}^{R} \left( (u^{\alpha}_{i,j} - l^{\alpha}_{i,j}) + \frac{2}{\alpha} \left[ (l^{\alpha}_{i,j} - y_{i,j}) \mathbb{I}(y_{i,j} < l^{\alpha}_{i,j}) + (y_{i,j} - u^{\alpha}_{i,j}) \mathbb{I}(y_{i,j} > u^{\alpha}_{i,j}) \right] \right),
\end{equation*}
where $l^{\alpha}_{i,j}$ and $u^{\alpha}_{i,j}$ are the lower and upper bounds of the $\alpha$-level prediction interval for the $i$-th rolling origin and series $j$, and $y_{i,j}$ is the corresponding actual value.
We consider the $80\%$ and $95\%$ prediction intervals. 
The CRPS and MIS are univariate scale-dependent scoring rules; following \cite{girolimetto2024cross}, we aggregate the $n$ scores into a unique one by taking the geometric mean of the relative scores with respect to the base forecasts:
\begin{equation}
    RelCRPS = \left(\prod_{j = 1}^n \frac{CRPS_j}{CRPS_{j,base}}\right)^{\frac{1}{n}}, \label{eq: relCRPS}
\end{equation}
and we do the same with the relative MIS scores for the $80\%$ and $95\%$ prediction intervals.

\begin{figure}[b!]
    \centering

    \begin{subfigure}[b]{0.42\textwidth}
        \centering
        \begin{overpic}[width=\linewidth]{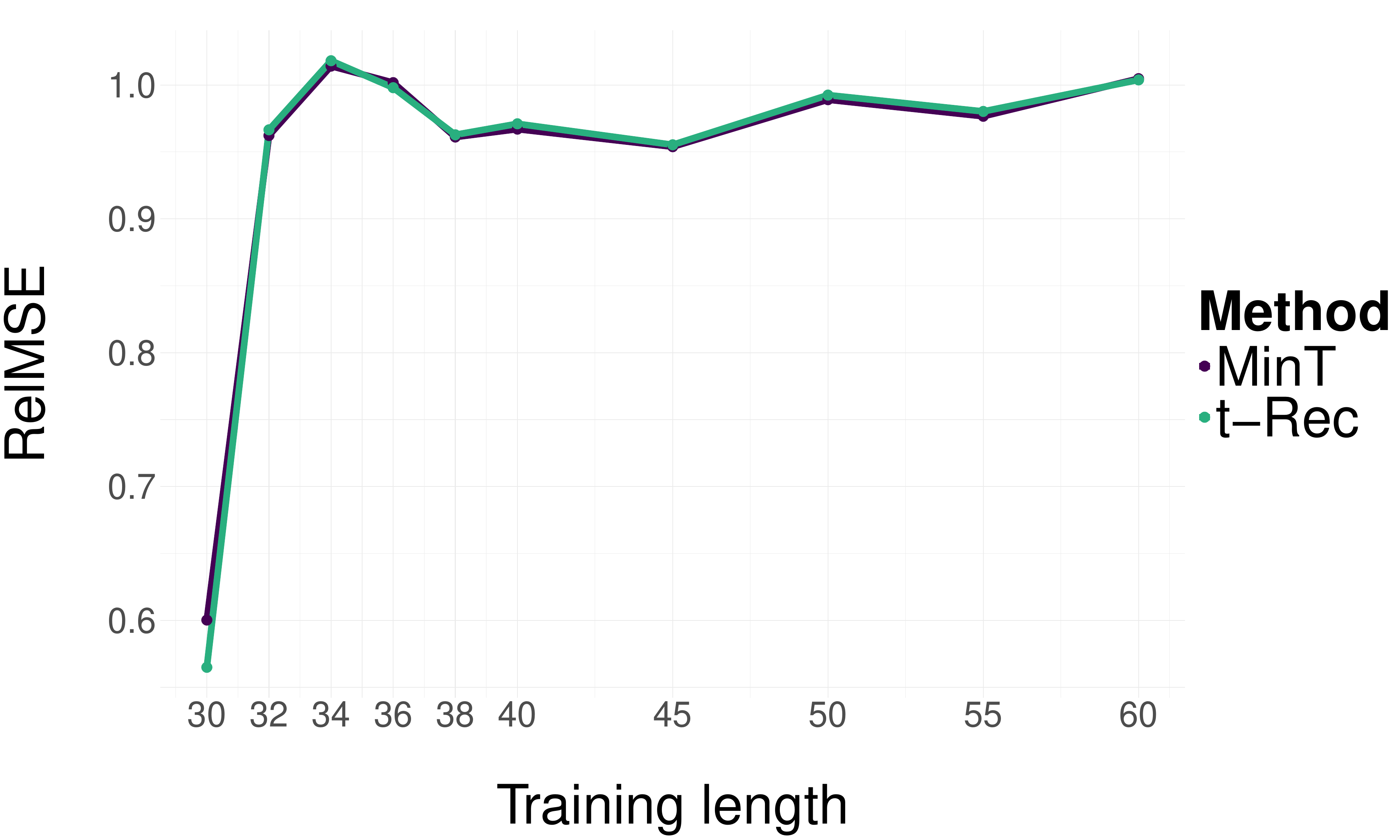}
            \put(80,10){\includegraphics[width=0.17\textwidth]{legend.pdf}}
        \end{overpic}
    \end{subfigure}
\hfill
    \begin{subfigure}[b]{0.42\textwidth}
        \centering
        \begin{overpic}[width=\linewidth]{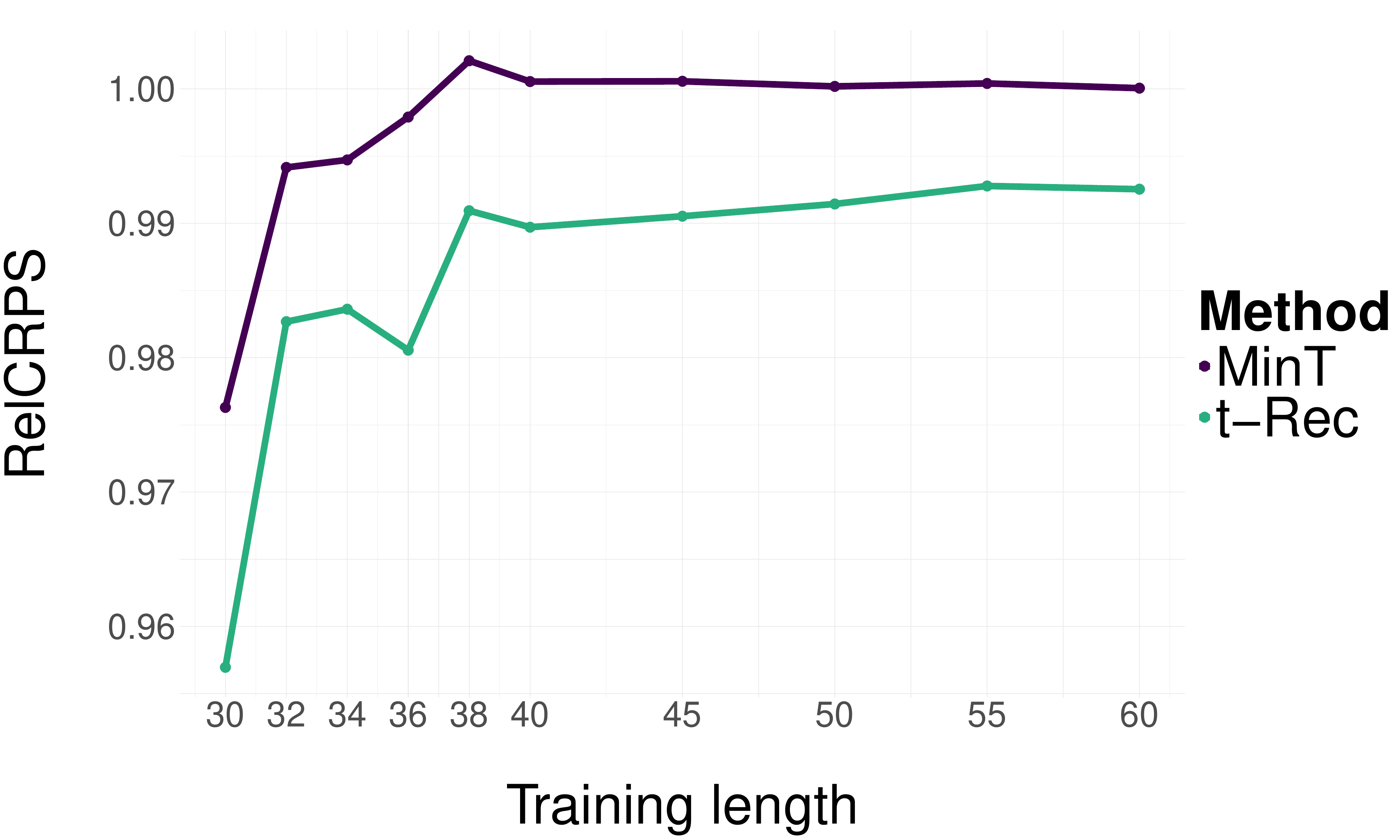}
            \put(80,10){\includegraphics[width=0.17\textwidth]{legend.pdf}}
        \end{overpic}
    \end{subfigure}

    \begin{subfigure}[b]{0.42\textwidth}
        \centering
        \begin{overpic}[width=\linewidth]{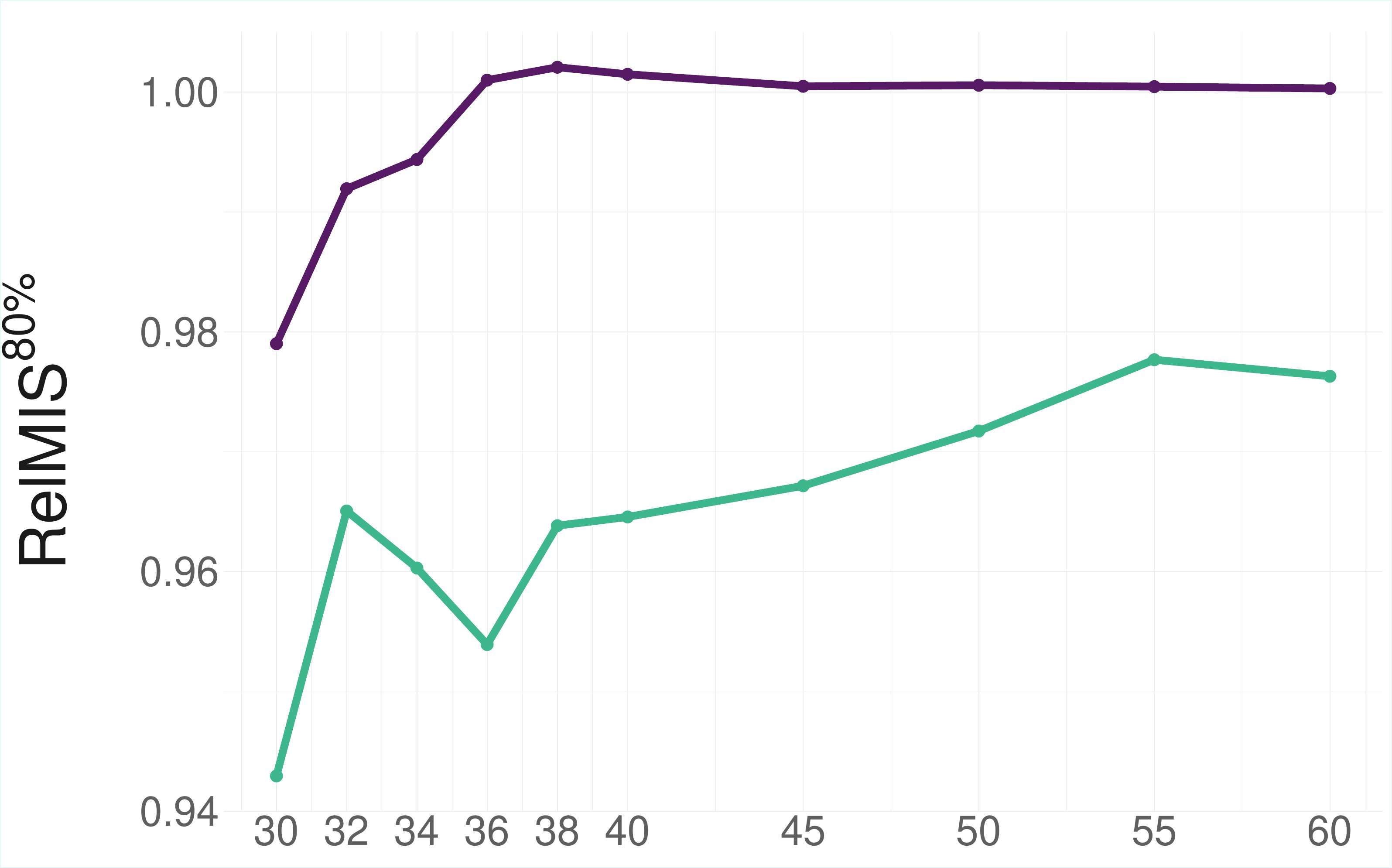}
            \put(80,15){\includegraphics[width=0.17\textwidth]{legend.pdf}}
        \end{overpic}
    \end{subfigure}
\hfill
    \begin{subfigure}[b]{0.42\textwidth}
        \centering
        \begin{overpic}[width=\linewidth]{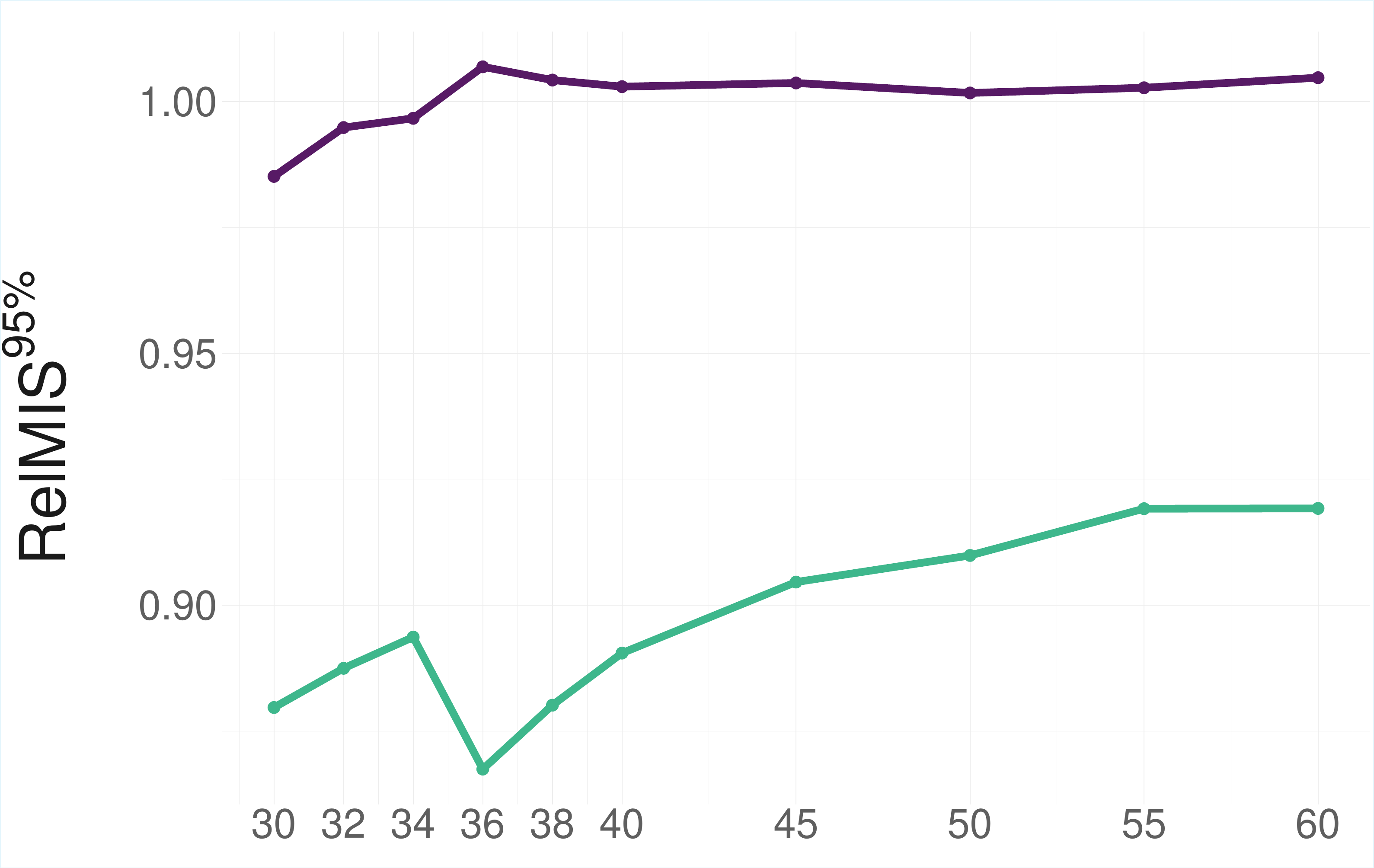}
            \put(80,15){\includegraphics[width=0.17\textwidth]{legend.pdf}}
        \end{overpic}
    \end{subfigure}

    \begin{subfigure}[b]{0.42\textwidth}
        \centering
        \begin{overpic}[width=\linewidth]{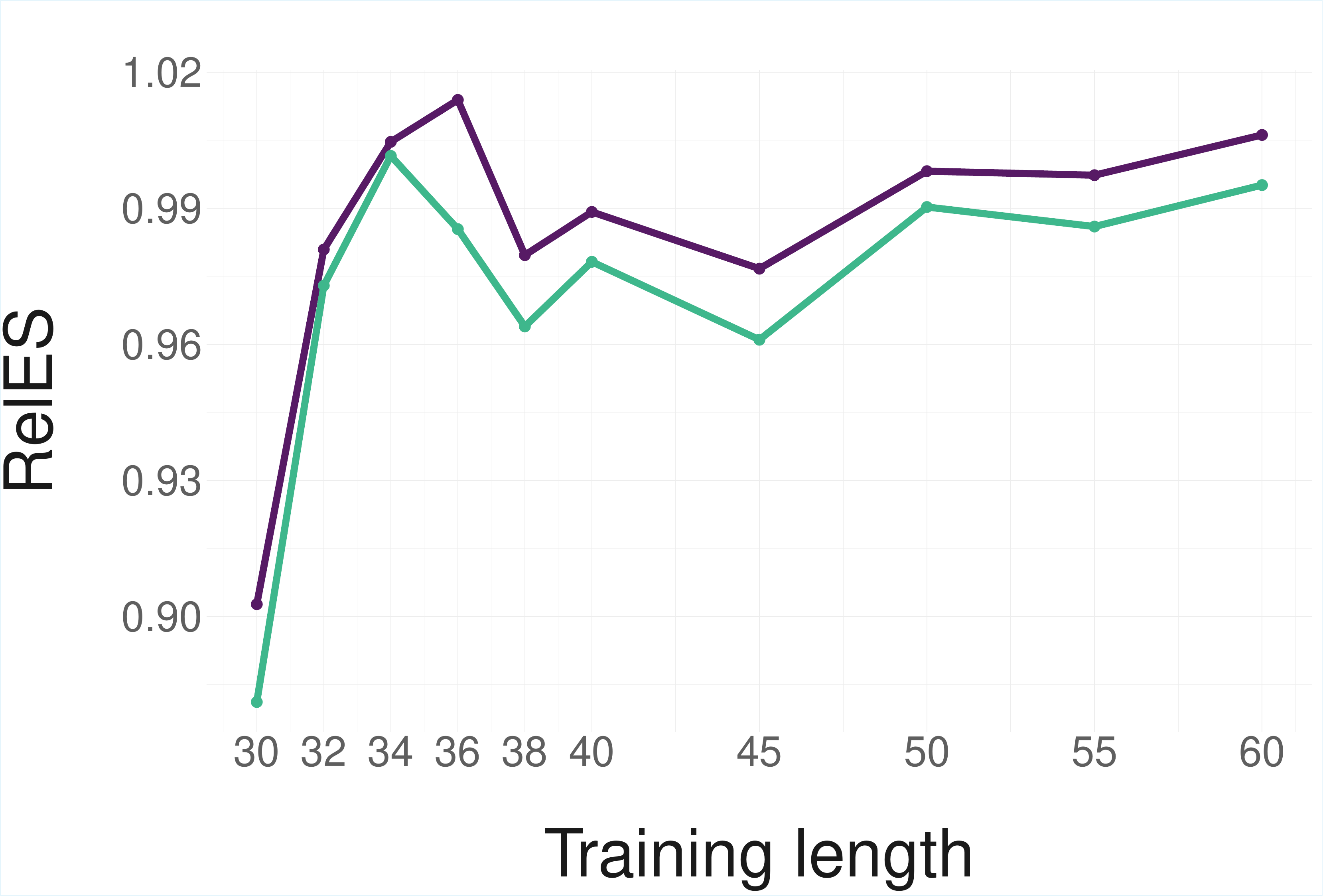}
             \put(80,18){\includegraphics[width=0.17\textwidth]{legend.pdf}}
        \end{overpic}
    \end{subfigure}

    \vspace{0.2cm}

    \caption{
    Results on the \textit{Swiss Tourism} dataset for \textit{MinT} (purple) and \textit{\tRec} (green),
    for the relative MSE (top-left), CRPS (top-right), MIS (middle-left: $80\%$, middle-right: $95\%$), and ES (bottom).
    A relative score $< 1$ means improvement over the base forecasts.}
    \label{fig:swiss score}
\end{figure}

We assess the joint predictive distribution using the energy score (ES) \citep{gneiting2007strictly}:
\begin{equation} \label{eq:def_ES}
\text{ES}_i = \E\|\Xbf_i - \ybf_i\|^\alpha - 
\frac{1}{2} \E\|\Xbf_i - \Xbf_i'\|^\alpha,
\end{equation}
where 
$\Xbf_i,\, \Xbf_i'$ are independent random vectors distributed as the joint forecast distribution for the $i$-th rolling origin,
$\ybf_i \in \rr^n$ is the vector of the corresponding actual values,
and $\alpha \in (0,2)$; we set $\alpha = 1$, as it is common convention. 
The ES is a multivariate generalization of the CRPS \citep{gneiting2008assessing}.
We compute the ES by approximating the expectations in Eq.~\eqref{eq:def_ES} via sampling.
We then compute the relative score with respect to the base forecasts:
\begin{equation}\label{eq:rel_ES}
RelES = \frac{\sum_{i=1}^R\text{ES}_i}{\sum_{i=1}^R\text{ES}_{i,base}}.
\end{equation}

Finally, we check the statistical significance of the differences between methods.
For each univariate score (CRPS, MIS, and the univariate version of the MSE), we rank the models on each time series.
We then use the  Friedman test \citep{demvsar2006statistical} followed by the  Nemenyi post hoc  to detect which models have a mean rank
which is significantly worse than the model with the best mean rank.
This procedure controls for multiple comparisons and it is also called MCB (Multiple Comparison with the Best, \citet{Koning_Franses_Hibon_Stekler_2005}). We use the implementation of the
\texttt{tsutils} R package \citep{tsutils}.

\begin{figure}[h!]
  \centering
  \begin{minipage}{0.36\textwidth}
    \resizebox{\textwidth}{!}{
      \includegraphics{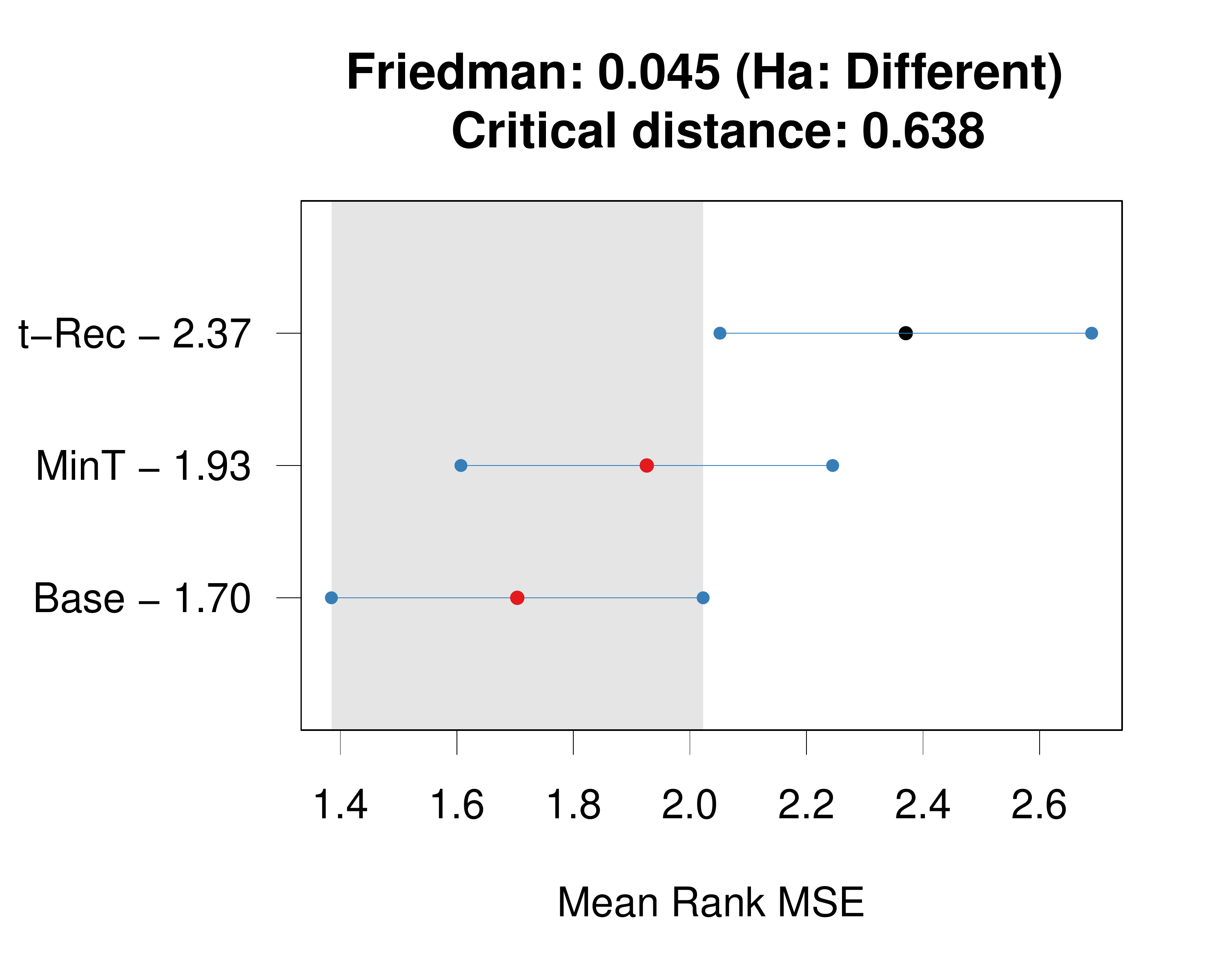}
    }
  \end{minipage}%
  \begin{minipage}{0.36\textwidth} 
    \resizebox{\textwidth}{!}{
      \includegraphics{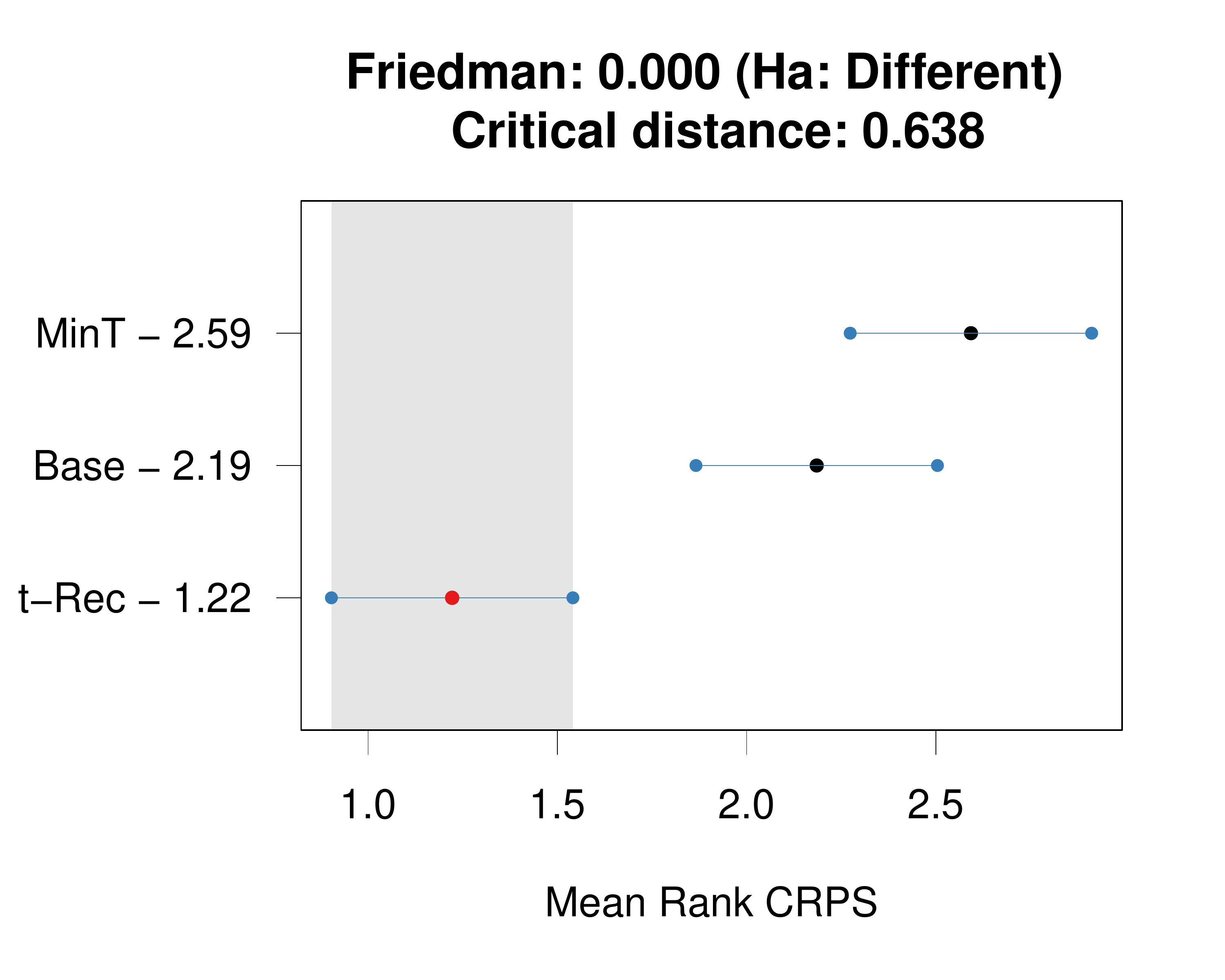}
    }
  \end{minipage}

  \begin{minipage}{0.36\textwidth}
    \resizebox{\textwidth}{!}{
      \includegraphics{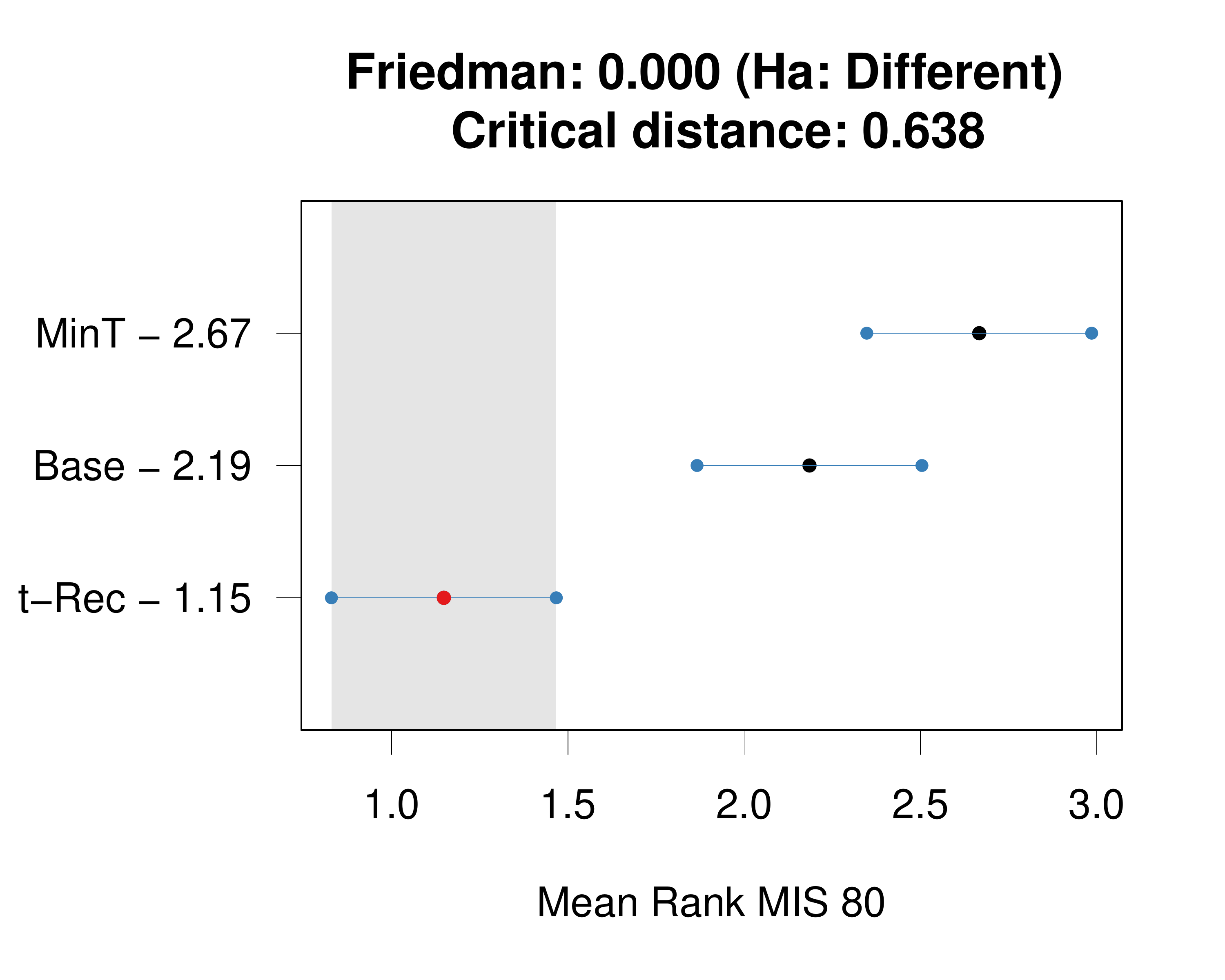}
    }
  \end{minipage}%
  \begin{minipage}{0.36\textwidth} 
    \resizebox{\textwidth}{!}{
      \includegraphics{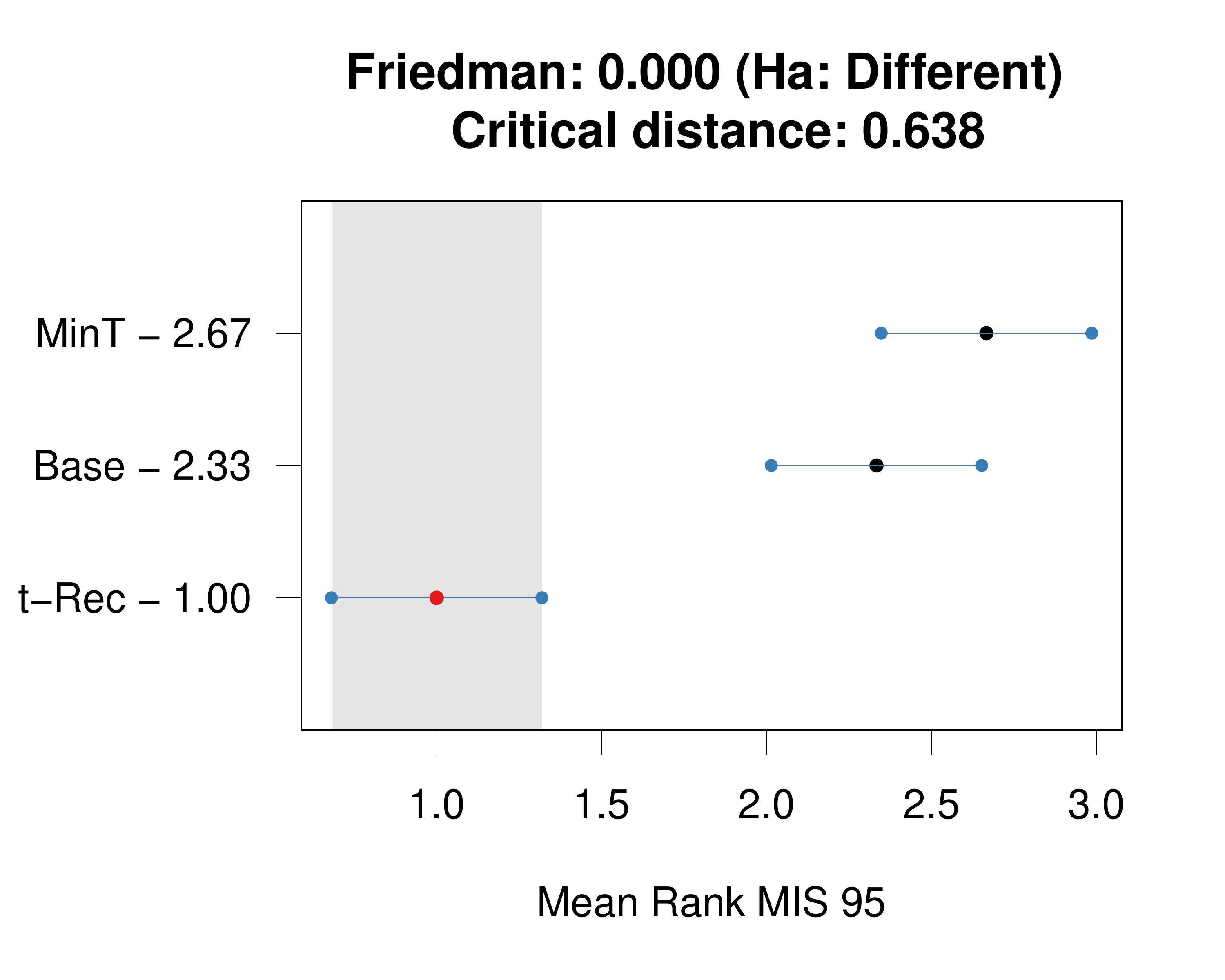}
    }
  \end{minipage}
  
 \caption{Friedman and post-hoc Nemenyi tests results for \train\ of 40 in the \textit{Swiss Tourism} dataset
 for the MSE (top-left), CRPS (top-right), and MIS (bottom-left: $80\%$, bottom-right: $95\%$).
 }
 \label{fig:swiss_MCB}
\end{figure}

\subsection{Swiss Tourism results}\label{sec:swiss_tourism}

We consider \trains\ from 30 (two and a half years) to 60 (five years); the results are shown in  Fig.~\ref{fig:swiss score} as a function of 
the \train.


On the MSE, \textit{\tRec} and \textit{MinT} perform similarly across all \trains\ (top-left plot).
However, \textit{\tRec} shows a clear improvement in the predictive distribution, according to both univariate (CRPS, top-right, and MIS, middle) and multivariate scores (ES, bottom).
Since both CRPS and ES average the quantile loss across all quantiles \citep[Chap.~5.9]{hyndman2021forecasting} and \textit{\tRec} particularly enhances the tails of the predictive distributions, the improvement is more pronounced for the MIS, especially for the $95\%$ intervals.
Indeed, the t-distribution has polynomial tails, which allocate more probability mass to the extremes compared to the exponential tails of the Gaussian distribution used by \textit{MinT}.
Moreover, \textit{\tRec} provides better coverage than \textit{MinT}; the average coverage of the $95\%$ prediction intervals is 88\% for \textit{\tRec} and 83\% for \textit{MinT}.
The improvement is consistent across all series, as shown in Fig.~\ref{fig:cov_swiss}, \ref{app:datasets}.

Fig.~\ref{fig:swiss_MCB} reports the results of the Friedman test with post-hoc for a \train\ of 40 observations. The titles report the p-value of the Friedman tests, the red dots represent the mean rank of the best method whose value is also reported on the y-axis label.
Methods that are outside of the shaded areas perform significantly worse than the best method. 

Fig.~\ref{fig:MIS95_maps} reports the relative MIS for the 95\% prediction intervals at the canton level for a \train\ of 40 observations. These results further confirm that \textit{\tRec} yields higher-scoring prediction intervals than \textit{MinT}.

\begin{figure}[h!]
    \begin{minipage}{0.46\textwidth}
    \centering
    \includegraphics[width=\linewidth]{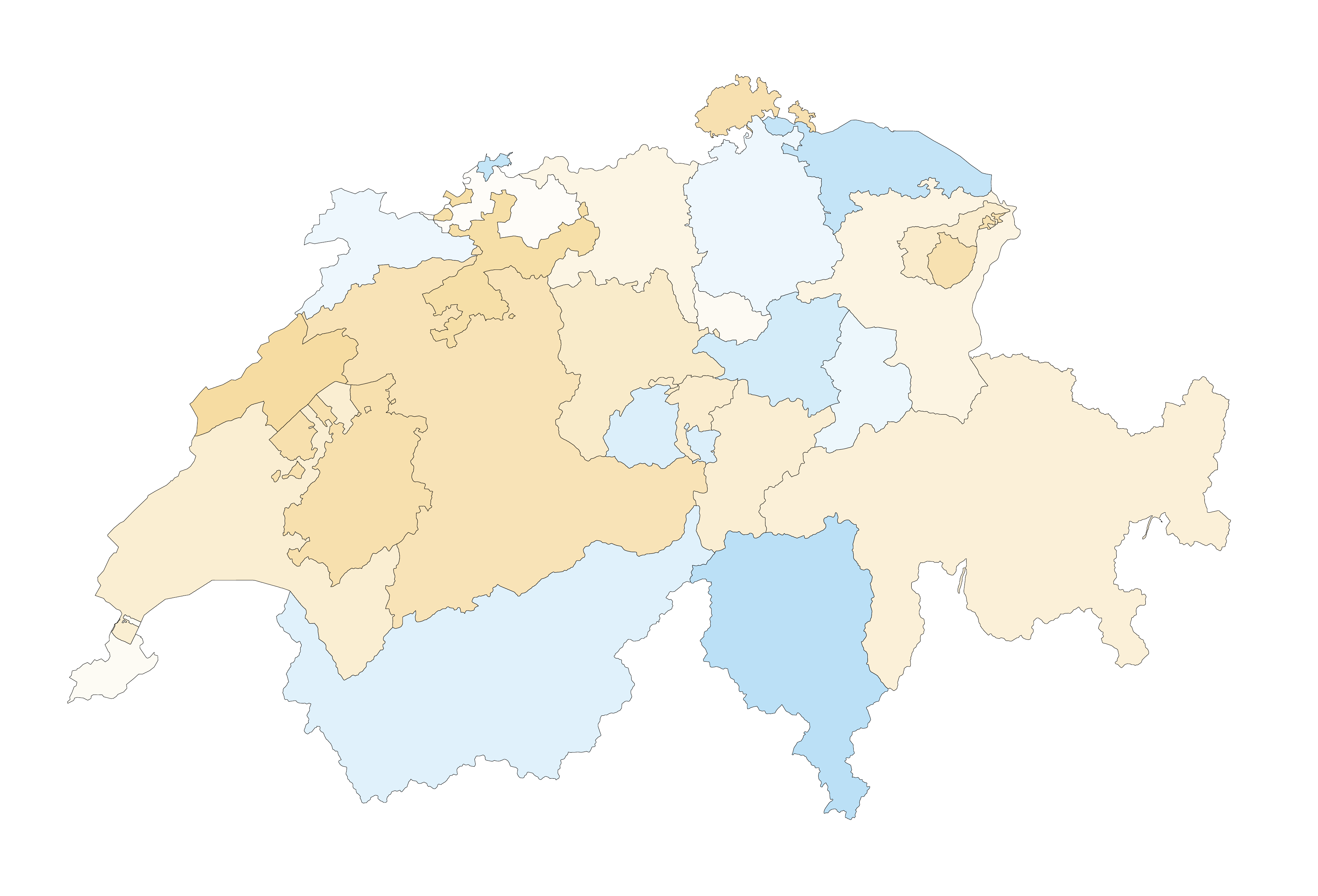}
\end{minipage}
\hfill
\begin{minipage}{0.49\textwidth}
    \centering
    \includegraphics[width=\linewidth]{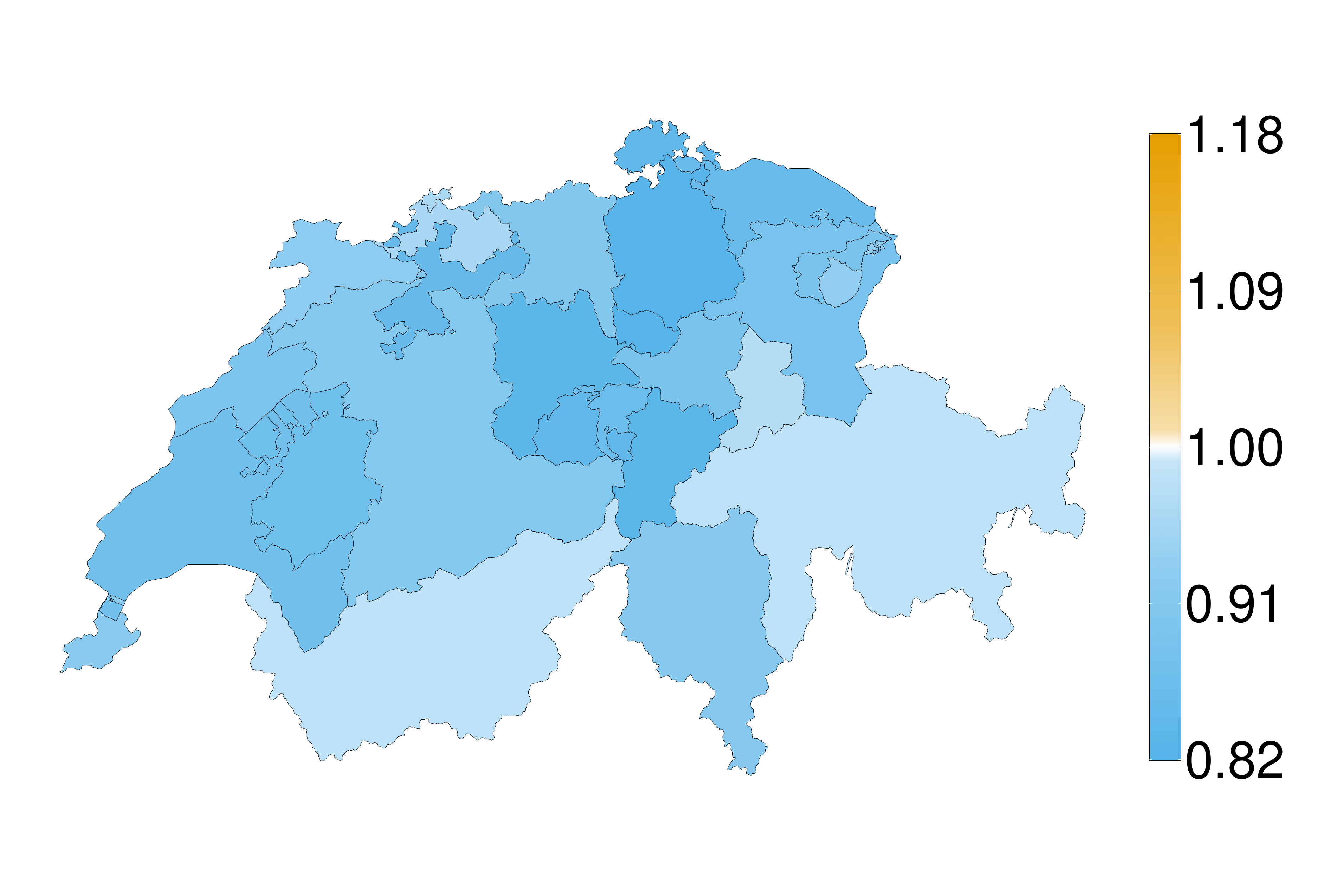}
\end{minipage}
    \caption{Relative MIS$^{95\%}$ of \textit{MinT} (left) and \textit{t‑Rec} (right) for each canton of the \textit{Swiss Tourism} dataset (\train $= 40$).
    Orange-colored cantons indicate worse performance compared to the base forecasts (ratio $> 1$). \textit{t‑Rec} consistently improves MIS$^{95\%}$ across all cantons and achieves greater absolute gains compared to \textit{MinT}.}
    \label{fig:MIS95_maps}
\end{figure}

\begin{figure}[h!]
    \centering

    \makebox[\textwidth]{%
        \parbox[b]{0.42\textwidth}{\centering \textbf{\textit{Australian Tourism-M}}}%
        \hfill
        \parbox[b]{0.42\textwidth}{\centering \textbf{\textit{Australian Tourism-Q}}}%
    }
    
    \begin{subfigure}[b]{0.42\textwidth}
        \centering
        \begin{overpic}[width=\linewidth]{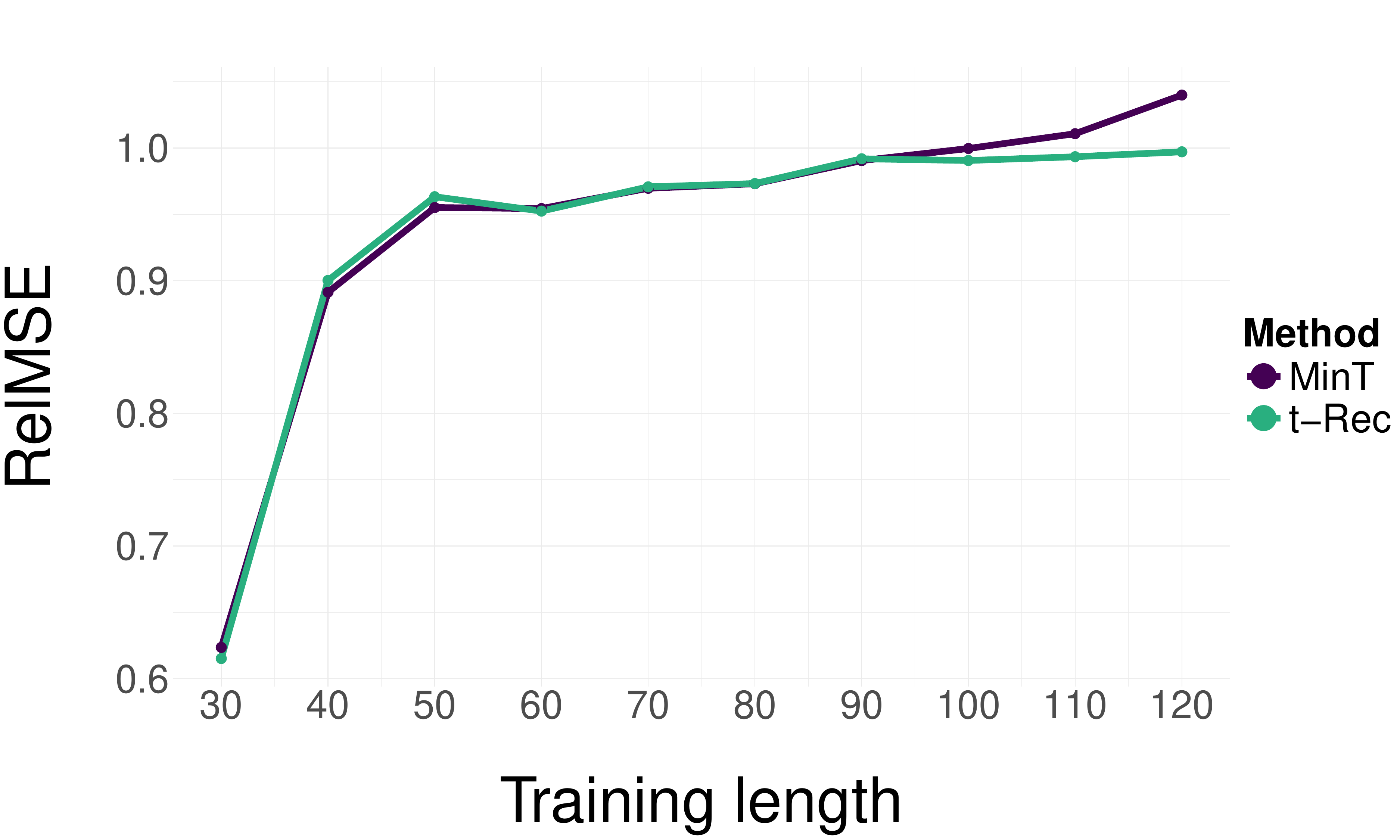}
            \put(80,10){\includegraphics[width=0.17\textwidth]{legend.pdf}}
        \end{overpic}
    \end{subfigure}
    \hfill
    \begin{subfigure}[b]{0.42\textwidth}
        \centering
        \begin{overpic}[width=\linewidth]{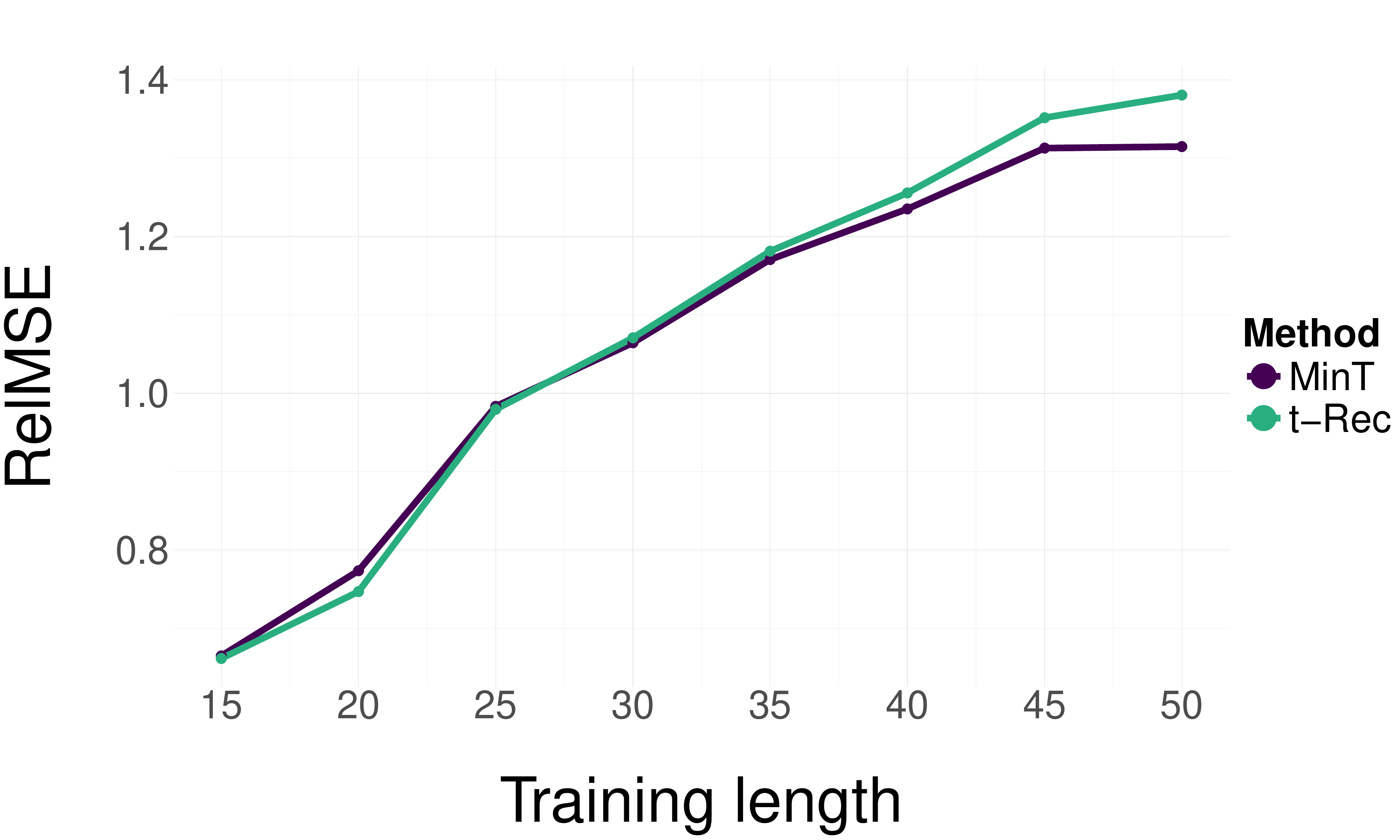}
            \put(80,10){\includegraphics[width=0.17\textwidth]{legend.pdf}}
        \end{overpic}
    \end{subfigure}
    
    \begin{subfigure}[b]{0.42\textwidth}
        \centering
        \begin{overpic}[width=\linewidth]{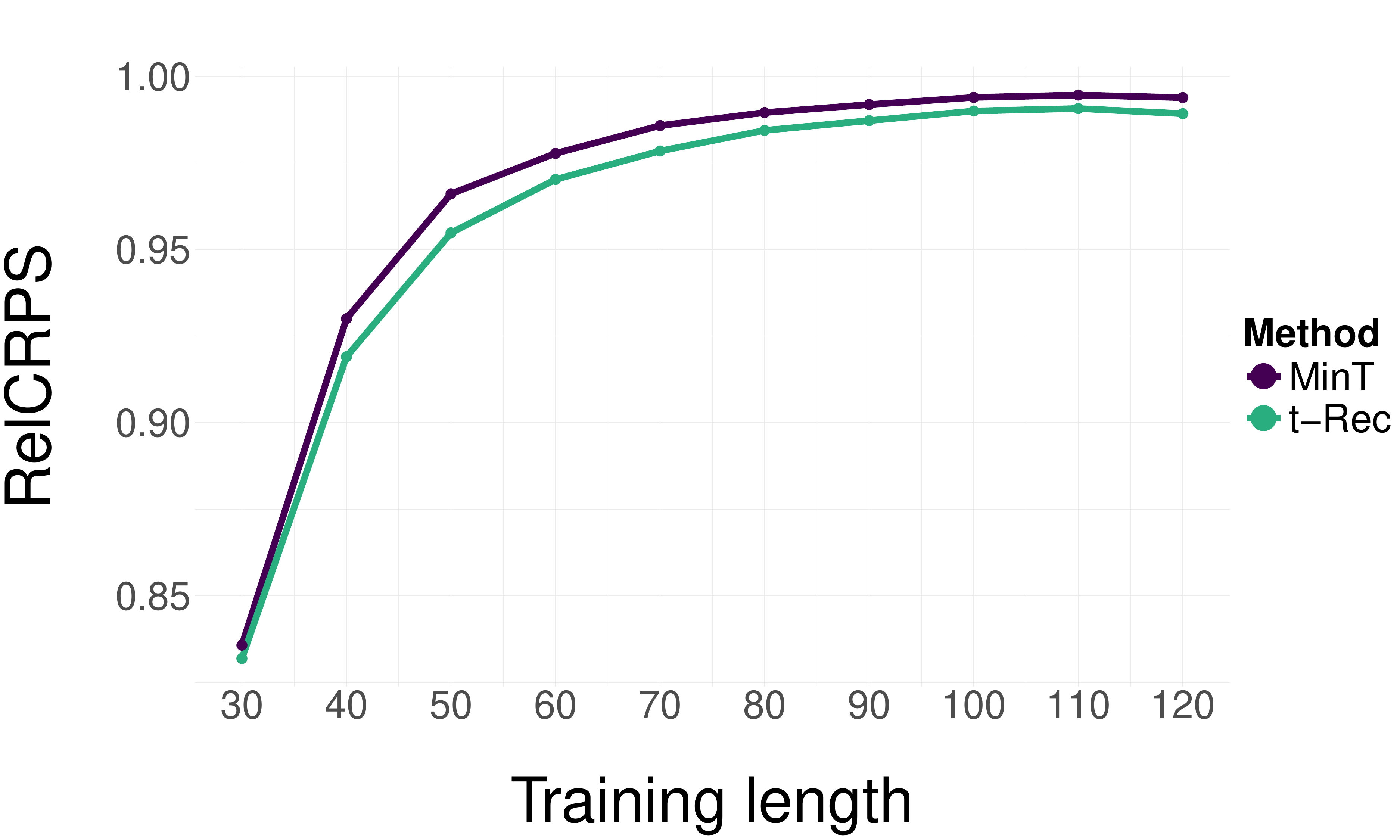}
            \put(80,10){\includegraphics[width=0.17\textwidth]{legend.pdf}}
        \end{overpic}
    \end{subfigure}
    \hfill
    \begin{subfigure}[b]{0.42\textwidth}
        \centering
        \begin{overpic}[width=\linewidth]{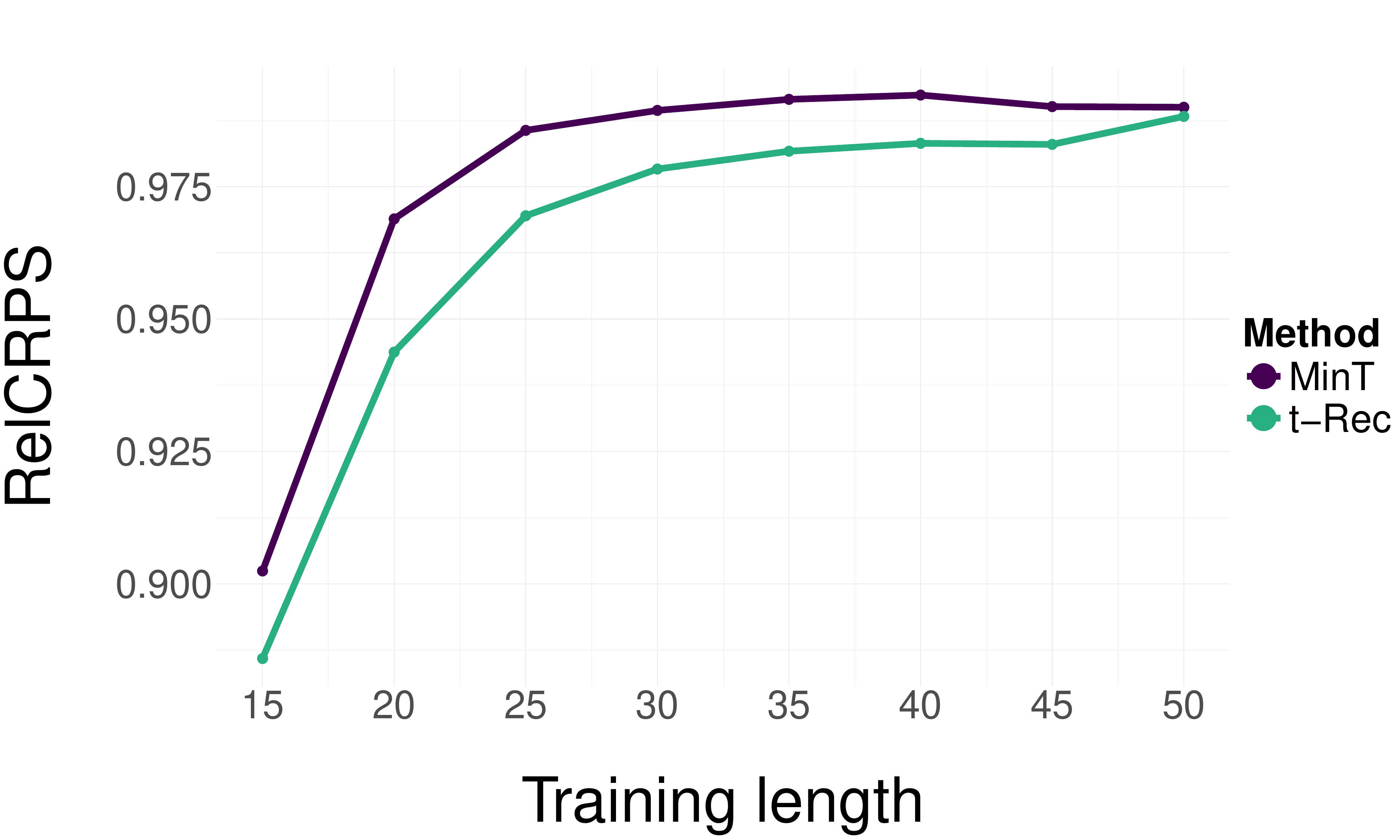}
            \put(80,10){\includegraphics[width=0.17\textwidth]{legend.pdf}}
        \end{overpic}
    \end{subfigure}

    \begin{subfigure}[b]{0.42\textwidth}
        \centering
        \begin{overpic}[width=\linewidth]{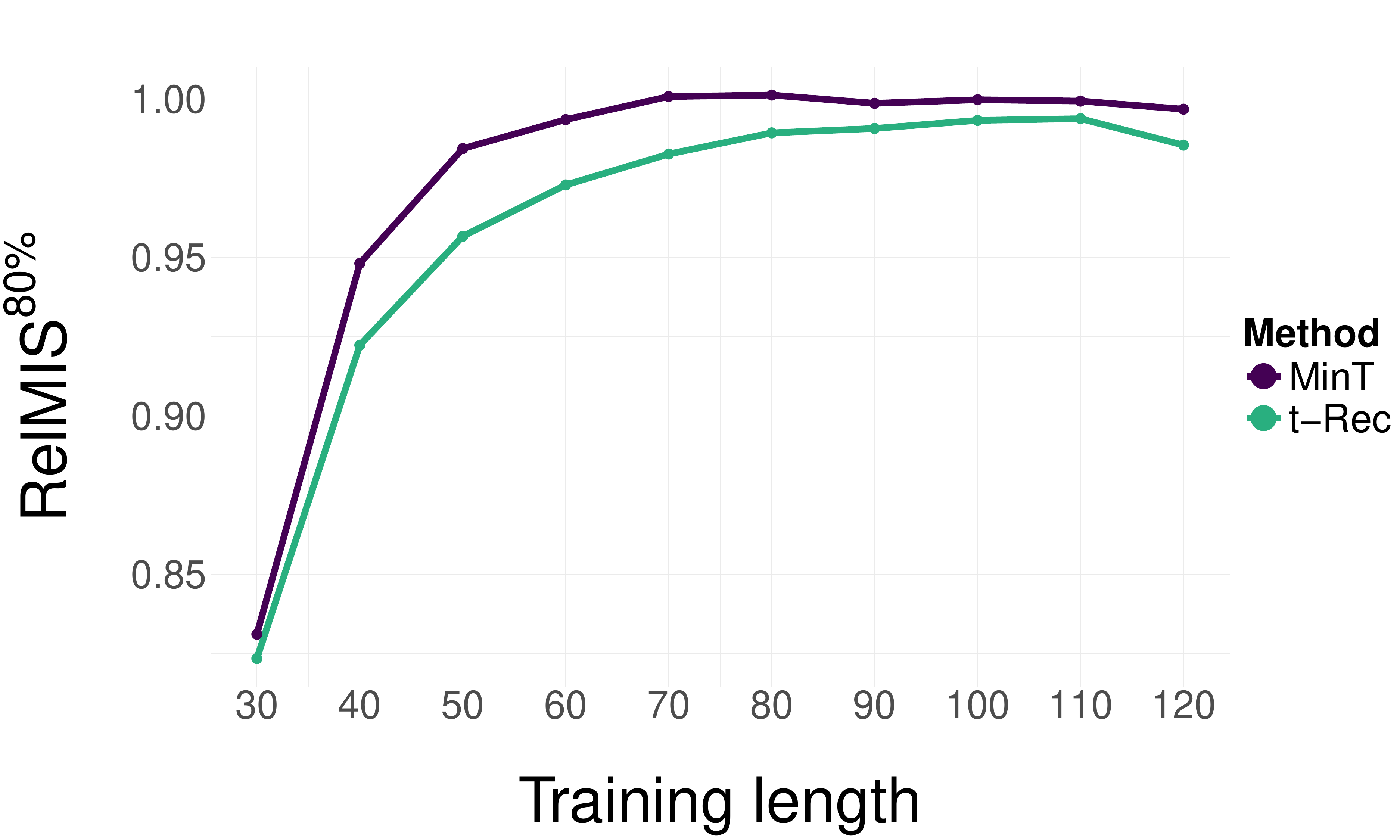}
            \put(80,10){\includegraphics[width=0.17\textwidth]{legend.pdf}}
        \end{overpic}
    \end{subfigure}
    \hfill
    \begin{subfigure}[b]{0.42\textwidth}
        \centering
        \begin{overpic}[width=\linewidth]{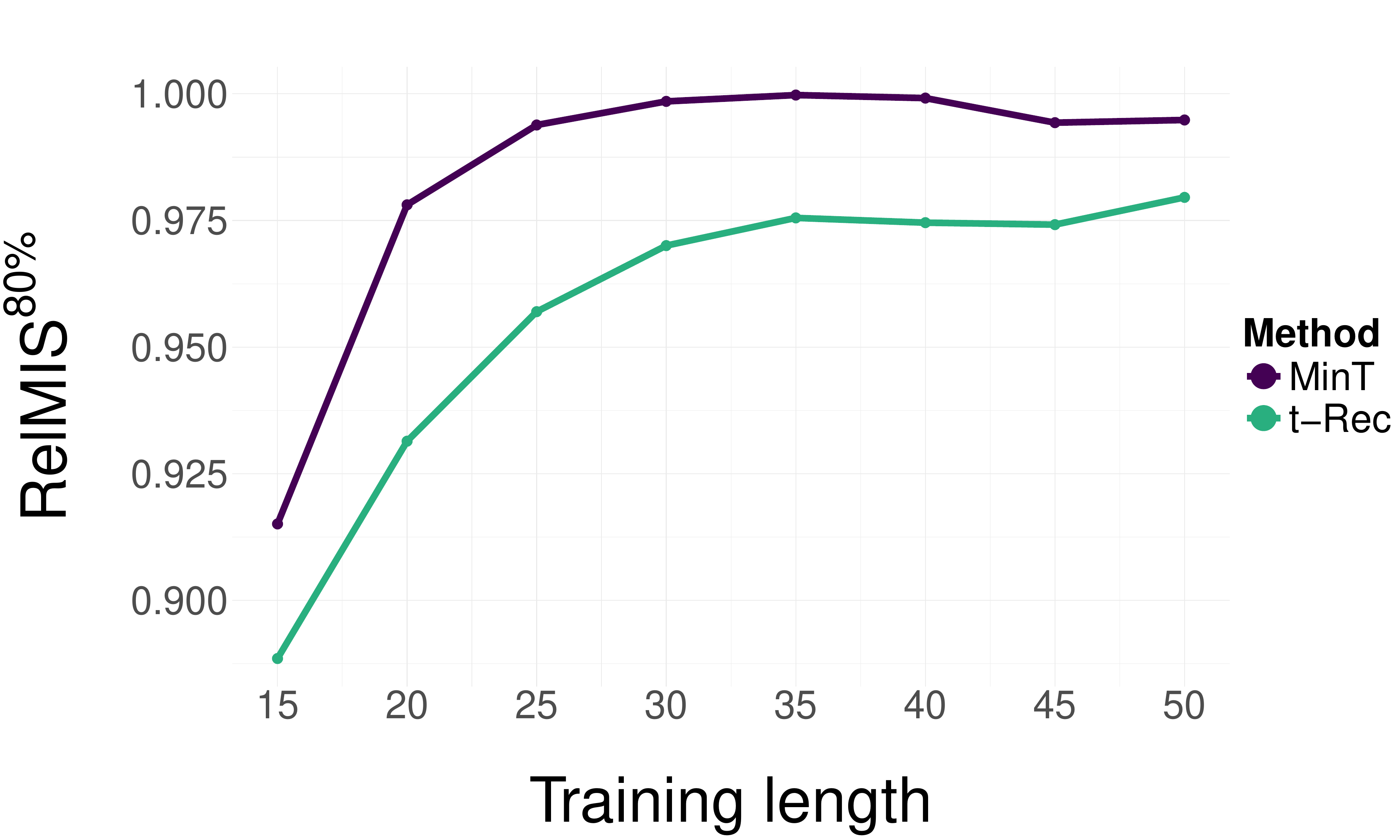}
            \put(80,10){\includegraphics[width=0.17\textwidth]{legend.pdf}}
        \end{overpic}
    \end{subfigure}

    \begin{subfigure}[b]{0.42\textwidth}
        \centering
        \begin{overpic}[width=\linewidth]{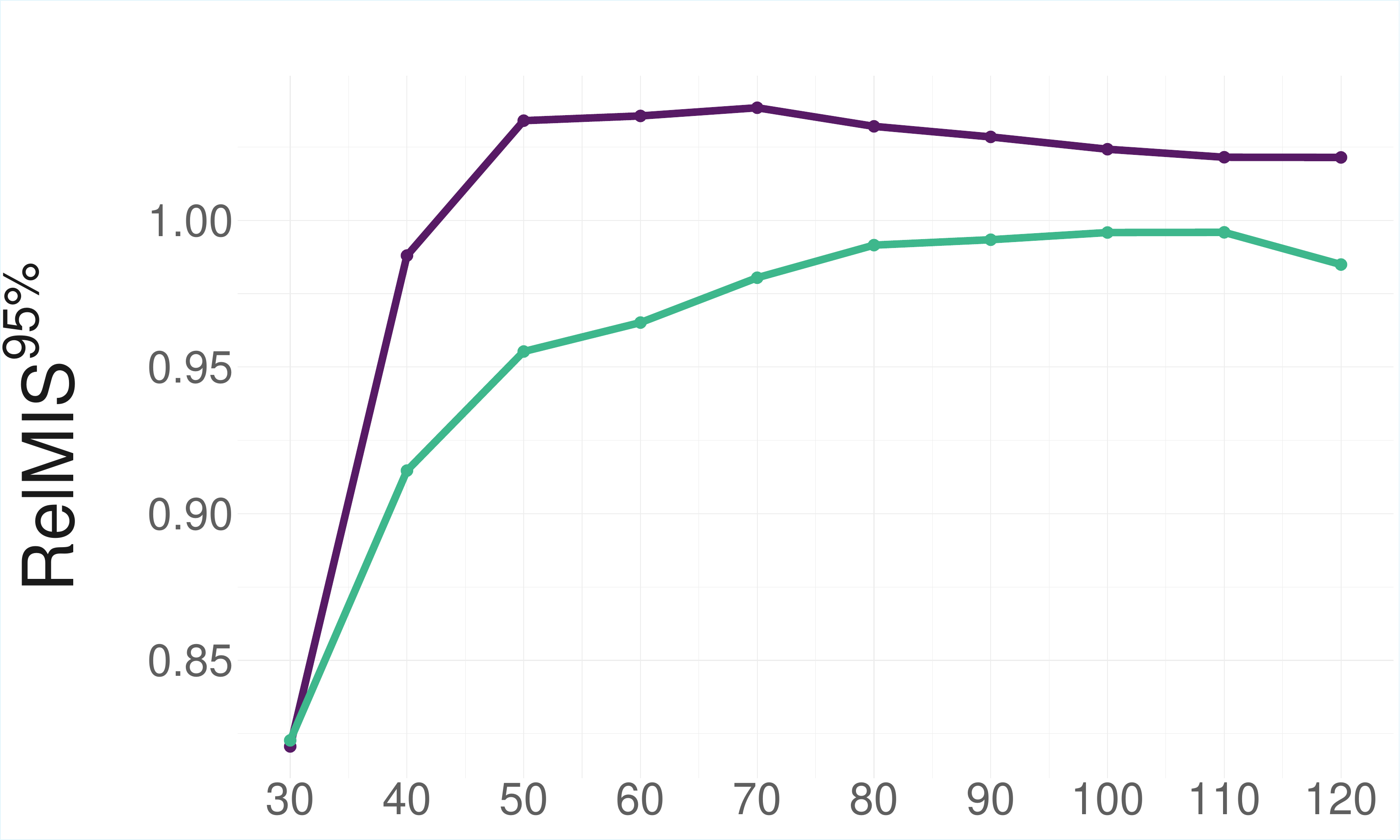}
            \put(80,15){\includegraphics[width=0.17\textwidth]{legend.pdf}}
        \end{overpic}
    \end{subfigure}
    \hfill
    \begin{subfigure}[b]{0.42\textwidth}
        \centering
        \begin{overpic}[width=\linewidth]{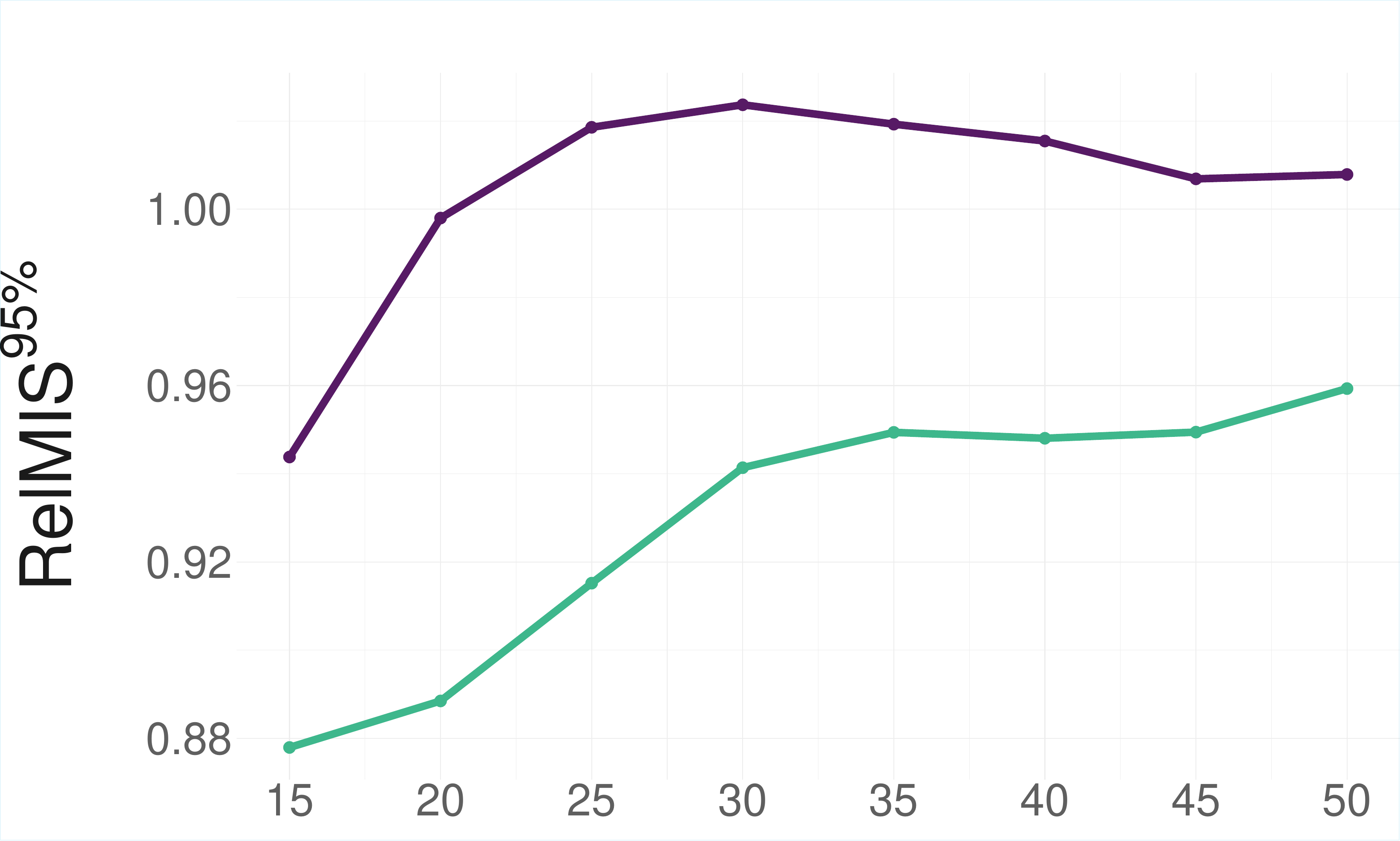}
            \put(80,15){\includegraphics[width=0.17\textwidth]{legend.pdf}}
        \end{overpic}
    \end{subfigure}
    
    \begin{subfigure}[b]{0.42\textwidth}
        \centering
        \begin{overpic}[width=\linewidth]{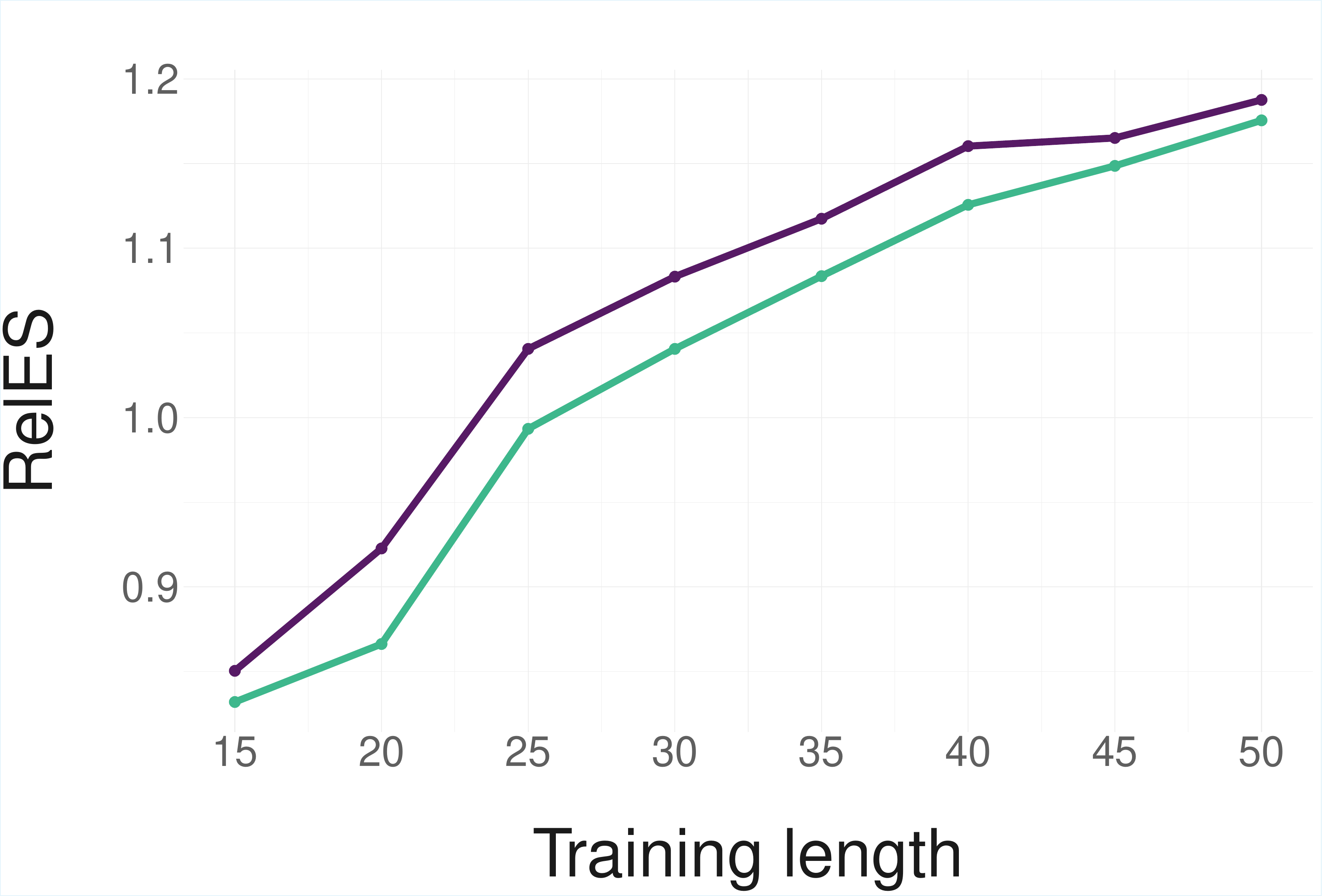}
            \put(80,15){\includegraphics[width=0.17\textwidth]{legend.pdf}}
        \end{overpic}
    \end{subfigure}
    \hfill
    \begin{subfigure}[b]{0.42\textwidth}
        \centering
        \begin{overpic}[width=\linewidth]{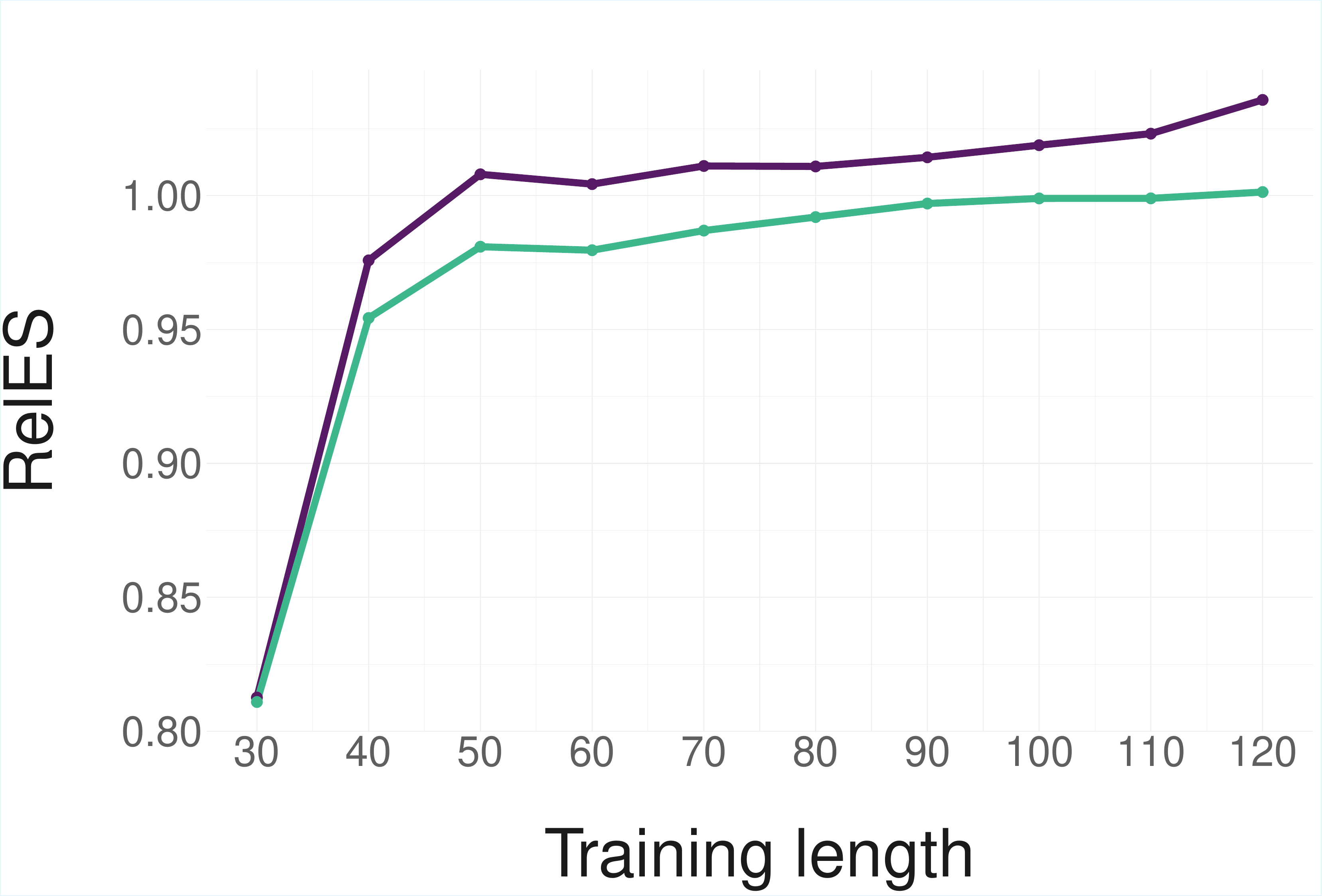}
            \put(80,15){\includegraphics[width=0.17\textwidth]{legend.pdf}}
        \end{overpic}
    \end{subfigure}

    \vspace{3 mm}

    \caption{   
    Results on the \textit{Australian Tourism-M} (left column) and \textit{Australian Tourism-Q} (right column) datasets for \textit{MinT} (purple) and \textit{\tRec} (green).
   A relative score lower than 1 means improvement over the base forecasts.
   }
    \label{fig: rel scores aus}
\end{figure}

\subsection{Australian Tourism results}

We observe the same patterns also
on  \textit{Australian tourism-M} and \textit{Australian tourism-Q}. 
In terms of MSE, \textit{\tRec} and \textit{MinT} perform similarly, indicating comparable point forecast accuracy.
However, \textit{\tRec} yields better probabilistic forecasts, with a slight advantage in CRPS and ES. There are substantial improvements in MIS, particularly at the 95\% level, reflecting more reliable prediction intervals (Fig. \ref{fig: rel scores aus}).

\begin{table}[h]
\centering
\begin{tabular}{l @{\hskip 20pt}l @{\hskip 20pt} c cccc c cccc}
\toprule
& & \multicolumn{4}{c}{\textbf{Australian Tourism - M}} & & \multicolumn{4}{c}{\textbf{Australian Tourism - Q}} \\
& & \multicolumn{2}{c}{\textbf{Train 55}} & \multicolumn{2}{c}{\textbf{Train 110}} & & \multicolumn{2}{c}{\textbf{Train 25}} & \multicolumn{2}{c}{\textbf{Train 40}} \\
\cmidrule(lr){3-6} \cmidrule(lr){8-11} 
 & \textbf{Level} & \textit{MinT} & \textit{\tRec} & \textit{MinT} & \textit{\tRec} & & \textit{MinT} & \textit{\tRec} & \textit{MinT} & \textit{\tRec} \\
\midrule
\multirow{2}{*}{\textbf{Relative Width}} & 80\% & 0.87 & 0.92 &0.91 & 0.94  & & 0.95 & 1.09 & 0.96 & 1.12 \\
 & 95\% & 0.87 & 0.93 & 0.91 & 0.94 & & 0.95 & 1.11 & 0.96 & 1.13 \\
 [1.8ex]
\multirow{2}{*}{\textbf{Coverage}} & 80\%& 0.73 & 0.76 & 0.78 & 0.80  & & 0.69 & 0.76 & 0.69  &  0.77 \\
 & 95\% & 0.88 & 0.90 & 0.91 & 0.92 & & 0.86 & 0.92 &  0.87 & 0.93 \\
\bottomrule
\end{tabular}
\caption{Relative width of the prediction intervals and coverage at the 80\% and 95\% confidence levels for the \textit{Australian Tourism-M} and \textit{Australian Tourism-Q} datasets.
Results are reported for various \trains.}
\label{tab:PI_width_cov_aus}
\end{table}

We also compute the geometric mean of the relative prediction interval widths and the arithmetic mean of coverage rates across the time series.
We report the results for two different \trains\ in Table~\ref{tab:PI_width_cov_aus}.
As for the \textit{Swiss Tourism} dataset,
\textit{\tRec} produces wider intervals than \textit{MinT}, resulting in average coverage rates closer to the nominal levels (80\% or 95\%).
The statistical analysis based on MCB confirms 
that \tRec~significantly improves the MIS score at both 80\% and 95\% levels
compared to both \textit{MinT} and the base forecasts.

\subsection{Ablation study}

To isolate the contributions of specific components of our method, we conduct an ablation study comparing \textit{\tRec} against three variants: 

\begin{itemize}
    \item \textit{\tRec-Diag}: the prior scale matrix $\Psibf_0$ is specified as in \textit{\tRec} but restricted to be diagonal, ignoring prior correlations;     
    \item \textit{\tRec-min\_$\nu_0$}: the $\nu_0$ parameter is fixed at its minimum admissible value ($n+2$), skipping the optimization procedure and making the prior weakly informative;
    \item \textit{\tRec-MAP} (see Sect.~\ref{sec:simulations}): 
    a maximum a posteriori point estimate is used for $\Wbf$, instead of marginalizing over its posterior distribution.
\end{itemize}
We present the results for the \textit{Swiss Tourism} dataset in Fig.~\ref{fig:swiss score_ablation}.
Additional experiments on the \textit{Australian Tourism-M} and \textit{Australian Tourism-Q} datasets yield similar results and are reported in \ref{app:datasets} (Fig.~\ref{fig: rel scores aus ablation}).
Point forecast accuracy is largely equivalent across methods, with the exception of \textit{\tRec-min\_$\nu_0$}, which suffers from insufficient shrinkage due to the minimal value of $\nu_0$.
However, significant differences emerge in probabilistic assessments, where \textit{\tRec} consistently achieves the best CRPS, MIS, and ES across different training lengths $T$.
Specifically,
\textit{\tRec-MAP} exhibits the poorest performance:
since it uses a point estimate of the covariance matrix, it does not account for its uncertainty, yielding prediction intervals that are systematically too short.
This highlights the fundamental role of the full Bayesian framework in incorporating covariance uncertainty. 
Furthermore, the suboptimal performance of \textit{\tRec-min\_$\nu_0$} validates the necessity of the optimization step for $\nu_0$.
Finally, while the improvement of \textit{\tRec} over \textit{\tRec-Diag} is the smallest among the observed effects, it nonetheless confirms that 
prior information on the correlations among residuals from naive methods is informative and contributes to improved predictive performance. 

\begin{figure}[h!]
    \centering

    \vspace{-2mm}

    \begin{subfigure}[b]{0.40\textwidth}
        \centering
        \begin{overpic}[width=\linewidth]{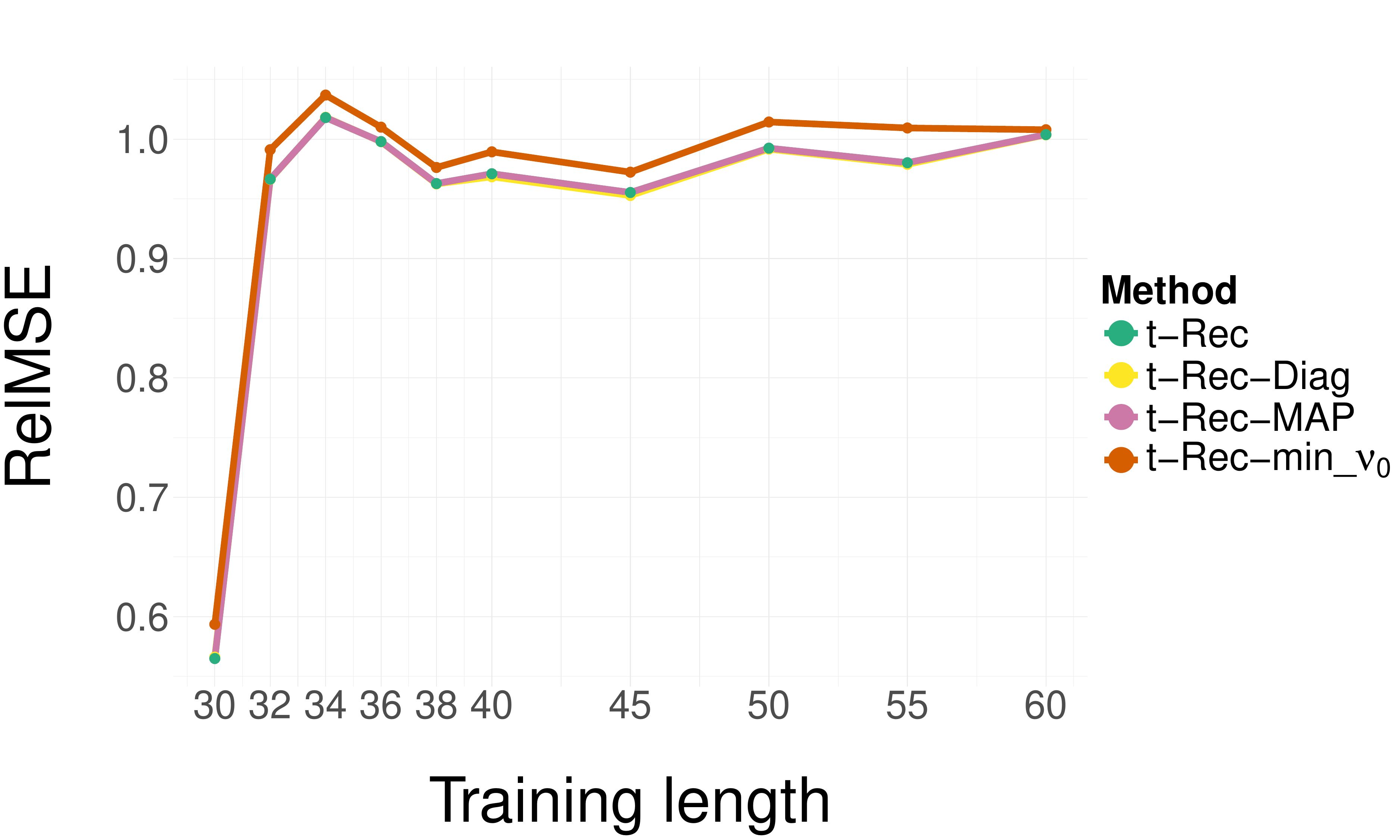}
            \put(68,7){\includegraphics[width=0.3\textwidth]{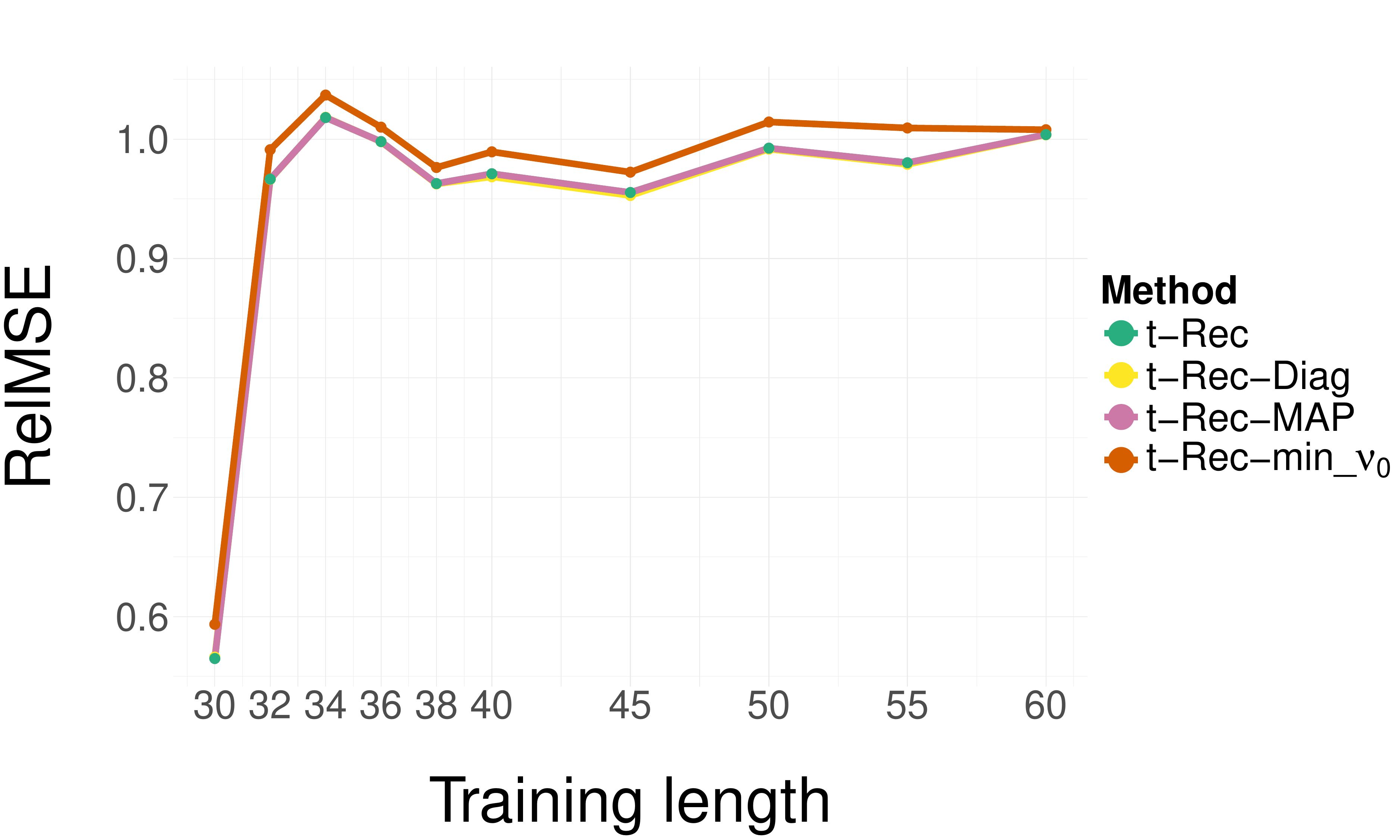}}
        \end{overpic}
    \end{subfigure}
\hfill
    \begin{subfigure}[b]{0.40\textwidth}
        \centering
        \begin{overpic}[width=\linewidth]{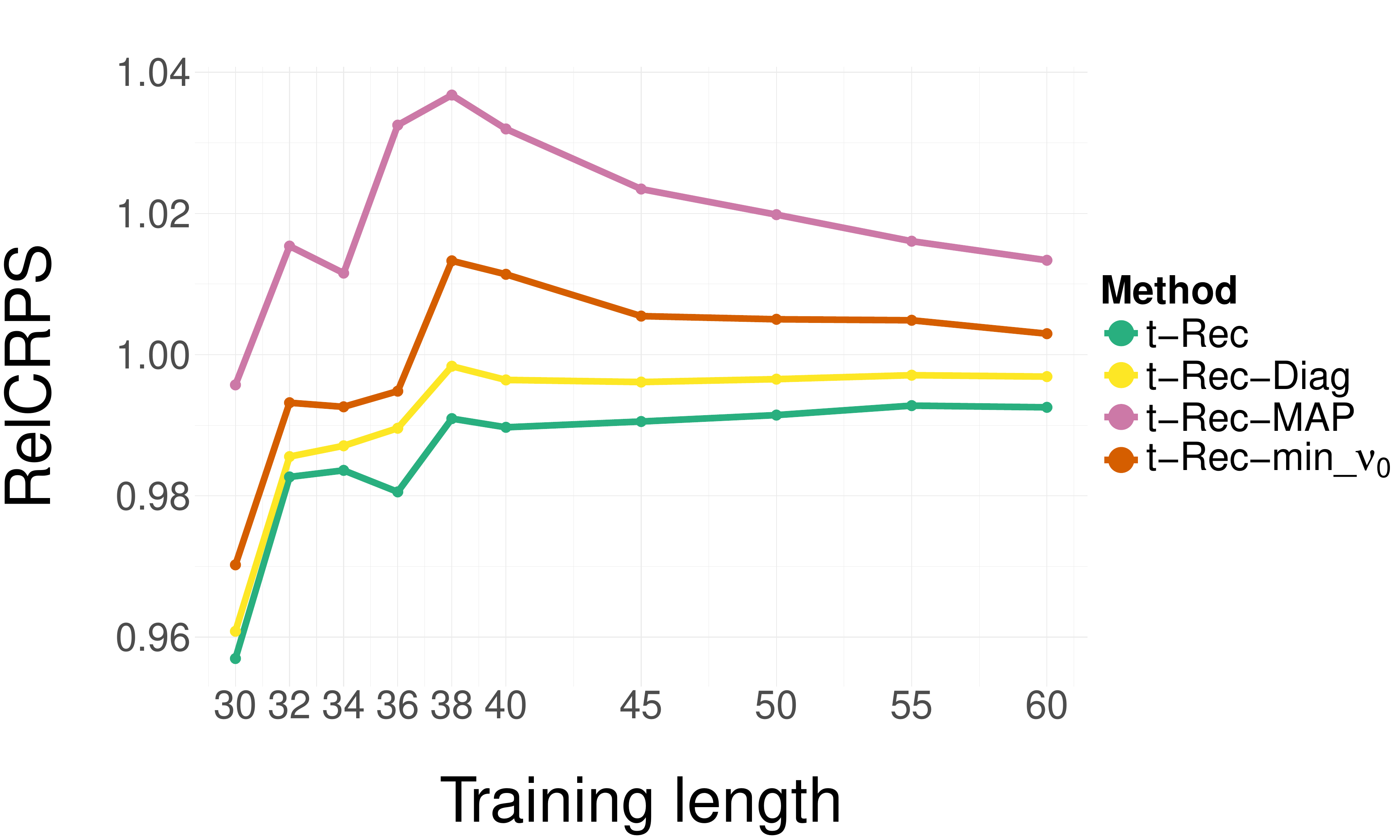}
            \put(68,7){\includegraphics[width=0\textwidth]{Legend_ablation.pdf}}
        \end{overpic}
    \end{subfigure}

    \begin{subfigure}[b]{0.40\textwidth}
        \centering
        \begin{overpic}[width=\linewidth]{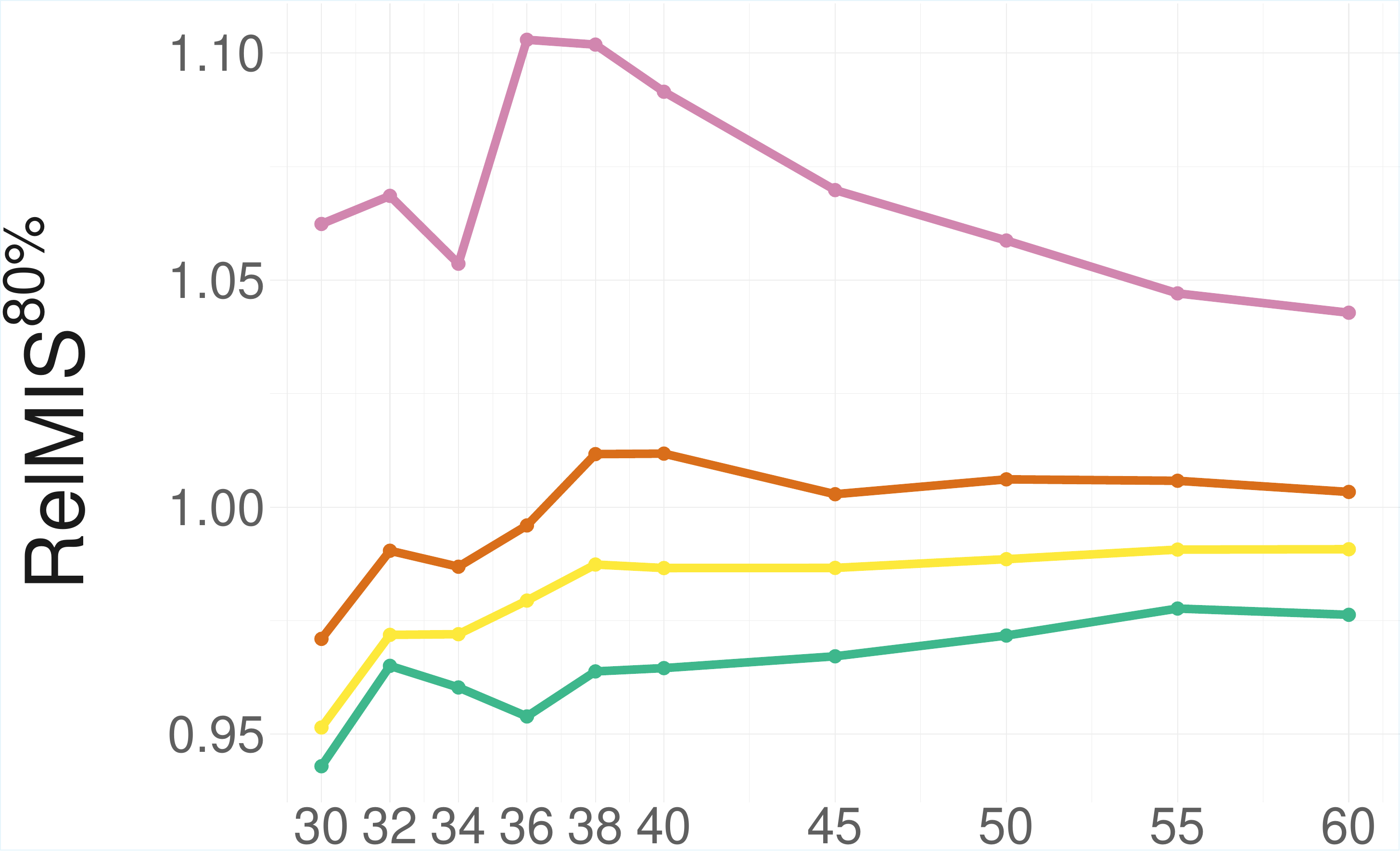}
            \put(80,15){\includegraphics[width=0\textwidth]{Legend_ablation.pdf}}
        \end{overpic}
    \end{subfigure}
\hfill
    \begin{subfigure}[b]{0.40\textwidth}
        \centering
        \begin{overpic}[width=\linewidth]{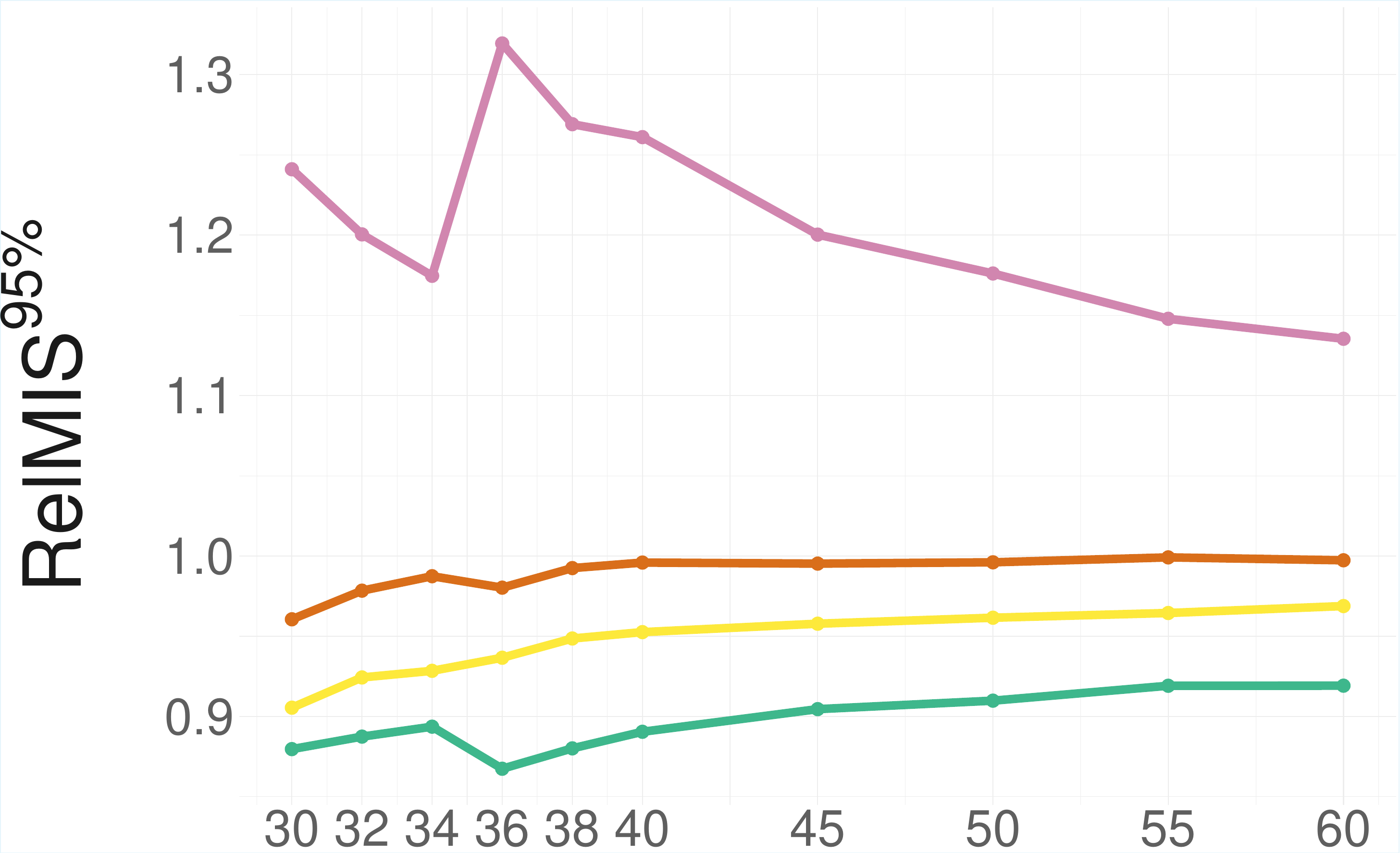}
            \put(80,15){\includegraphics[width=0\textwidth]{Legend_ablation.pdf}}
        \end{overpic}
    \end{subfigure}
    
    \begin{subfigure}[b]{0.40\textwidth}
        \centering
        \begin{overpic}[width=\linewidth]{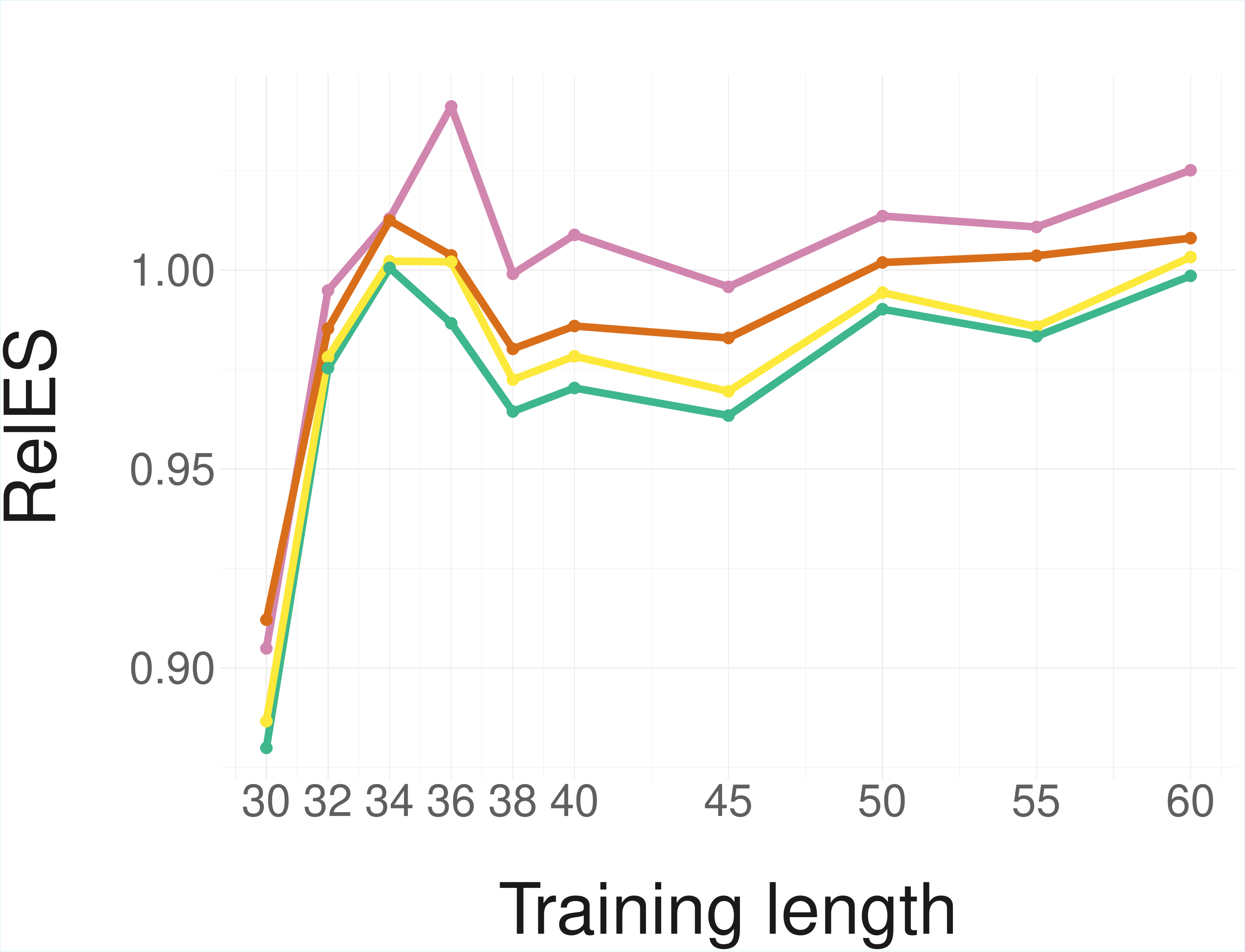}
        \end{overpic}
    \end{subfigure}

    \vspace{0.2cm}

    \caption{Results on \textit{Swiss Tourism} dataset for \textit{\tRec} (green), \textit{\tRec-Diag} (yellow), \textit{\tRec-MAP} (pink) and \textit{\tRec-min\_$\nu_0$} (orange). 
    A relative score lower than 1 means improvement over the base forecasts.
    }
    \label{fig:swiss score_ablation}
\end{figure}

\section{Conclusion}\label{sec: conclusion}

We propose \textit{\tRec}, a Bayesian method for probabilistic reconciliation that accounts for the uncertainty on the covariance matrix.
Given our choices of prior and likelihood,
we obtain a multivariate-t reconciled predictive distribution in closed-form. 
Empirical results show that \textit{\tRec} yields comparable point forecast accuracy to \textit{MinT} but consistently outperforms it on the prediction intervals.
Our results are robust across multiple datasets and various training set lengths.

Despite its robust performance, the effectiveness of \textit{\tRec} depends on the specification of the prior scale parameter $\Psi_0$.
Our approach based on the residuals of the naive and seasonal naive methods may fail for complex time series with multiple seasonalities or strong covariate dependencies.
In those cases, alternative definitions of $\Psi_0$ might be necessary to maintain a well-specified prior.
Moreover, we acknowledge the limitations of the Inverse-Wishart prior, particularly in high-dimensional settings \citep{gelman2006prior}. Future work will explore more modern priors for the covariance matrix.
Finally, our analysis focuses on the cross-sectional hierarchical setting; we leave for future studies the extension to temporal and cross-temporal hierarchies.

\bibliography{biblio}

\newpage

\appendix

\section{Reconciliation via conditioning of multivariate t}
\label{app: rec via cond t}

Similarly to the Gaussian case \citep{zambon2024properties}, if the joint base forecast distribution is a multivariate t, then the reconciled distribution and the parameters can be computed in closed-form. 
Theorem~\ref{theorem: t reconc via conditioning} is proved by applying the following result to Eq.~\eqref{eq: posterior predictive mt} with $\PsiH =  \frac{1}{\nu'-n+1} \Psibf'$ and $\nuhat = \nu'-n+1$.

\begin{proposition} \label{prop: t reconc via conditioning}
Let $\Abf \in \rr^{(n-n_b) \times n_b}$ be the aggregation matrix of a hierarchy with $n$ time series, and let the joint forecast distribution be a multivariate t:
$\YH \sim \mt\big(\yH, \PsiH, \nuhat\big) = \mt\left(\begin{bmatrix} \uH \\ \bH \end{bmatrix} , 
\begin{bmatrix} \Psibf'_{U} & \Psibf'_{UB}  \\ \Psibf'_{BU}  & \Psibf'_{B}  \end{bmatrix} , \nuhat\right)$.
Then, the reconciled distribution via conditioning of the $n_b$ bottom series is still a multivariate t:
\begin{equation}
\BT
\,\sim\, \mt\big(\bT,\, \SigmaT_B,\, \nutil \big),
\end{equation}
where
\begin{align}
\bT &= \bH + \left(\PsiH_{UB}^T - \PsiH_B \Abf^T\right) \Qbf^{-1} (\Abf \bH - \uH), \\
\SigmaT_B &= 
C
\left[\PsiH_B - \left(\PsiH_{UB}^T - \PsiH_B \Abf^T\right) \Qbf^{-1} \left(\PsiH_{UB}^T - \PsiH_B \Abf^T\right)^T \right], \\
\nutil &= \nuhat + n - n_b,
\end{align}
and 
\begin{align}
C &=  \frac{\nuhat +  (\Abf \bH - \uH)^T \Qbf^{-1} (\Abf \bH - \uH)}{\nuhat + (n-n_b)}, \nonumber \\  
\Qbf &= \PsiH_U - \PsiH_{UB} \Abf^T - \Abf \PsiH_{UB}^T + \Abf \PsiH_B \Abf^T.
\end{align}
\end{proposition}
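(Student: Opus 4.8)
The plan is to reduce the conditioning statement to the two closure properties of the multivariate t recalled in the excerpt, closely mirroring the Gaussian argument of \cite{zambon2024properties} while tracking the extra data-dependent scale factor that the t-distribution introduces. First I would apply the affine map $\Tbf = \begin{bmatrix} \Ibf & -\Abf \\ \bm{0} & \Ibf \end{bmatrix}$ to the stacked vector $\YH = [\UH^T,\, \BH^T]^T$, producing $\Tbf\YH = [(\UH - \Abf\BH)^T,\, \BH^T]^T$. Since the multivariate t is closed under affine transformations, this transformed vector is again multivariate t with the \emph{same} degrees of freedom $\nuhat$, location $\Tbf\yH = [(\uH - \Abf\bH)^T,\, \bH^T]^T$, and scale matrix $\Tbf\PsiH\Tbf^T$.

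Second I would read off the block structure of $\Tbf\PsiH\Tbf^T$. Writing $\PsiH = \begin{bmatrix} \PsiH_U & \PsiH_{UB} \\ \PsiH_{UB}^T & \PsiH_B \end{bmatrix}$, a direct multiplication shows that the block corresponding to $\UH - \Abf\BH$ is precisely $\Qbf = \PsiH_U - \PsiH_{UB}\Abf^T - \Abf\PsiH_{UB}^T + \Abf\PsiH_B\Abf^T$, the block corresponding to $\BH$ is $\PsiH_B$, and the cross block relevant for the regression update is $\PsiH_{UB}^T - \PsiH_B\Abf^T$ (with its transpose in the symmetric position). This step identifies all the quantities appearing in the statement and is the only genuinely computational part.

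Third I would invoke the conditioning formula for the multivariate t \citep{Kotz_Nadarajah_2004}: conditioning the $\BH$-block on the $(\UH-\Abf\BH)$-block taking a value $z$ yields a multivariate t whose degrees of freedom increase by the dimension $n - n_b$ of the conditioning block, whose location is the usual regression-type shift $\bH + (\PsiH_{UB}^T - \PsiH_B\Abf^T)\Qbf^{-1}(z - (\uH - \Abf\bH))$, and whose scale is the Schur complement $\PsiH_B - (\PsiH_{UB}^T - \PsiH_B\Abf^T)\Qbf^{-1}(\PsiH_{UB}^T - \PsiH_B\Abf^T)^T$ multiplied by the factor $\big(\nuhat + (z-\mu)^T\Qbf^{-1}(z-\mu)\big)/\big(\nuhat + n - n_b\big)$. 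Setting the reconciliation constraint $z = \bm{0}$, so that $z - \mu = \Abf\bH - \uH$, and substituting the blocks from the previous step reproduces $\bT$, the Schur complement with its prefactor $C$ giving $\SigmaT_B$, and $\nutil_B = \nuhat + n - n_b$.

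The main obstacle is the correct handling of the scale factor $C$, which is the sole departure from the Gaussian situation: unlike the Gaussian conditional covariance, the conditional scale of a multivariate t depends on the realized value of the conditioning variable through the quadratic form $(\Abf\bH - \uH)^T\Qbf^{-1}(\Abf\bH - \uH)$. Care is needed to match the exact normalization both in the degrees-of-freedom update and in the numerator and denominator of $C$, since alternative parameterizations of the multivariate t (for instance, absorbing $\nuhat$ into the scale) yield superficially different but equivalent expressions; verifying that the chosen convention in Eq.~\eqref{eq: posterior predictive mt} is carried consistently through the transformation and the conditioning is what guarantees the stated formulas.
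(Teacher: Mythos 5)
Your proposal is correct and follows essentially the same route as the paper's proof: an invertible affine map sending $\YH$ to the pair $(\BH,\, \UH-\Abf\BH)$ (your $\Tbf$ is just a block-row permutation of the one used in the appendix), followed by reading off the blocks of $\Tbf\PsiH\Tbf^T$ and applying the multivariate-t conditioning formula at $\UH-\Abf\BH=\bm{0}$, with correct handling of the data-dependent factor $C$. No gaps.
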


To prove Proposition~\ref{prop: t reconc via conditioning}, we first state two technical lemmas about key properties of the multivariate t distribution. 
We refer to \cite{Kotz_Nadarajah_2004} for the proof of Lemma~1 and to \cite{ding2016conditional} for the proof of Lemma~2.

\begin{lemma} \label{lemma: affine transform of multiv t}
Let $\Xbf$ be a $n$-dimensional random vector distributed as a multivariate t:
\[\Xbf \sim \mt\big(\mubf, \Sigmabf, \nu\big),\]
and let $\Vbf \in \rr^{n \times n}$ be full-rank.
Then, the random vector $\Vbf \Xbf$ is distributed as a multivariate t:
\[\Vbf \Xbf \sim \mt\big(\Vbf \mubf, \Vbf \Sigmabf \Vbf^T, \nu\big).\]
\end{lemma}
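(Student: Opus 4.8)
The plan is to prove the claim directly by a change of variables in the density. Recall that the multivariate-t density of $\Xbf \sim \mt(\mubf, \Sigmabf, \nu)$ on $\rr^n$ has the form
\[
f_{\Xbf}(\xbf) = \kappa_{n,\nu}\,|\Sigmabf|^{-1/2}\left(1 + \tfrac{1}{\nu}(\xbf - \mubf)^T \Sigmabf^{-1}(\xbf - \mubf)\right)^{-(\nu+n)/2},
\]
where the normalizing constant $\kappa_{n,\nu}$ depends only on the dimension $n$ and the degrees of freedom $\nu$. Setting $\Ybf = \Vbf\Xbf$ with $\Vbf$ full-rank, the inverse map is $\xbf = \Vbf^{-1}\ybf$ with Jacobian determinant $|\Vbf|^{-1}$, so that $f_{\Ybf}(\ybf) = f_{\Xbf}(\Vbf^{-1}\ybf)\,|\Vbf|^{-1}$. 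The entire proof then reduces to substituting this and simplifying the two parameter-dependent factors.

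First I would rewrite the quadratic form. Since $\Vbf^{-1}\ybf - \mubf = \Vbf^{-1}(\ybf - \Vbf\mubf)$, it becomes
\[
(\ybf - \Vbf\mubf)^T (\Vbf^{-1})^T \Sigmabf^{-1}\Vbf^{-1}(\ybf - \Vbf\mubf) = (\ybf - \Vbf\mubf)^T (\Vbf\Sigmabf\Vbf^T)^{-1}(\ybf - \Vbf\mubf),
\]
using the identity $(\Vbf\Sigmabf\Vbf^T)^{-1} = (\Vbf^{-1})^T\Sigmabf^{-1}\Vbf^{-1}$. Next I would check that the leading factors recombine correctly: from $|\Vbf\Sigmabf\Vbf^T| = |\Vbf|^2|\Sigmabf|$ we get $|\Sigmabf|^{-1/2}|\Vbf|^{-1} = |\Vbf\Sigmabf\Vbf^T|^{-1/2}$. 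Substituting both identities into $f_{\Ybf}$ yields exactly the $\mt(\Vbf\mubf,\,\Vbf\Sigmabf\Vbf^T,\,\nu)$ density, with the same $\nu$ and the same constant $\kappa_{n,\nu}$ (unchanged because $n$ and $\nu$ are preserved), completing the proof.

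A more conceptual alternative exploits the scale-mixture representation $\Xbf \mid \tau \sim \mvn(\mubf, \tau^{-1}\Sigmabf)$ with $\tau$ following a $\mathrm{Gamma}(\nu/2,\nu/2)$ mixing law, which is precisely the mechanism through which the multivariate-t arises in Eq.~\eqref{eq: posterior predictive mt}. Conditionally on $\tau$, the affine property of the Gaussian gives $\Vbf\Xbf \mid \tau \sim \mvn(\Vbf\mubf, \tau^{-1}\Vbf\Sigmabf\Vbf^T)$, and marginalizing over the same $\tau$ reproduces a multivariate t with the transformed location and scale and unchanged degrees of freedom. The main obstacle is purely bookkeeping rather than conceptual: in the direct route one must carefully match the Jacobian against the $|\Sigmabf|^{-1/2}$ factor and verify the matrix identity $(\Vbf\Sigmabf\Vbf^T)^{-1} = (\Vbf^{-1})^T\Sigmabf^{-1}\Vbf^{-1}$, both of which hinge on $\Vbf$ being full-rank so that $\Vbf^{-1}$ exists and $|\Vbf|\neq 0$. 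No genuine difficulty arises beyond ensuring these factors cancel as claimed.
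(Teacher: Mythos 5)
Your proof is correct. Note that the paper does not actually prove Lemma~\ref{lemma: affine transform of multiv t}: it defers both technical lemmas to \cite{Kotz_Nadarajah_2004}, so there is no in-paper argument to compare against. Your direct change-of-variables computation is the standard one and all the steps check out: the identity $(\Vbf\Sigmabf\Vbf^T)^{-1} = (\Vbf^{-1})^T\Sigmabf^{-1}\Vbf^{-1}$ handles the quadratic form, and the determinant identity $|\Vbf\Sigmabf\Vbf^T| = |\Vbf|^2\,|\Sigmabf|$ absorbs the Jacobian factor (strictly, the Jacobian is the absolute value $|\det\Vbf|^{-1}$, but since $\Vbf\Sigmabf\Vbf^T$ is positive definite its determinant equals $(\det\Vbf)^2|\Sigmabf| > 0$, so the square root produces exactly that absolute value and the factors cancel as you claim). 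The degrees of freedom and the normalizing constant are untouched because both depend only on $n$ and $\nu$. Your alternative via the Gamma scale-mixture representation is equally valid and arguably more in the spirit of how the multivariate t arises in Eq.~\eqref{eq: posterior predictive mt} of the paper, though there the mixing is over the Inverse-Wishart rather than a scalar Gamma; conditioning on the mixing variable, applying the Gaussian affine property, and re-marginalizing gives the result with no density bookkeeping. Either route is a complete and self-contained proof of the lemma.
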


\begin{lemma} \label{lemma: condit of multiv t}
Let $\Xbf = \begin{bmatrix} \Xbf_1 \\ \Xbf_2 \end{bmatrix} \sim 
\mt\left(\begin{bmatrix} \mubf_1 \\ \mubf_2 \end{bmatrix} , 
\begin{bmatrix} \Sigmabf_{11} & \Sigmabf_{12} \\ \Sigmabf_{21} & \Sigmabf_{22} \end{bmatrix}, \nu\right)$, where $\Xbf \in \rr^{n}$, $\Xbf_1 \in \rr^{n_1}$, $\Xbf_2 \in \rr^{n_2}$ and $n_1 + n_2 = n$.
Then, the conditional  distribution of $\Xbf_1$ given $\Xbf_2$ is still a multivariate t:
\begin{equation}
\Xbf_1\,|\,\Xbf_2 \sim \mt\big(\mubf_{1|2},\, \Sigmabf_{1|2},\, \nu_{1|2}\big),    
\end{equation}
where
\begin{align}
\mubf_{1|2} &= \mubf_1 + \Sigmabf_{12} \Sigmabf_{22}^{-1} (\xbf_2 - \mubf_2), \\
\Sigmabf_{1|2} &= \frac{\nu + (\xbf_2 - \mubf_2)^T \Sigmabf_{22}^{-1} (\xbf_2 - \mubf_2)}{\nu + n_2} 
\left[ \Sigmabf_{11} - \Sigmabf_{12} \Sigmabf_{22}^{-1} \Sigmabf_{21} \right], \\
\nu_{1|2} &= \nu + n_2.
\end{align}
\end{lemma}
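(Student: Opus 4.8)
The plan is to exploit the representation of the multivariate t as a Gaussian scale mixture and to perform the conditioning in two stages: first conditionally on an auxiliary mixing variable, and then by marginalizing that variable out. Concretely, I would write $\Xbf \mid \tau \sim \mvn(\mubf, \tau^{-1}\Sigmabf)$ with $\tau \sim \operatorname{Gamma}(\nu/2,\, \nu/2)$ in the shape--rate parameterization; integrating $\tau$ out recovers exactly $\Xbf \sim \mt(\mubf, \Sigmabf, \nu)$, so the representation is faithful. The conformable block partition of $\mubf$ and $\Sigmabf$ matching that of $\Xbf$ is inherited by the conditional Gaussian, which is the feature the whole argument rests on.

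Conditionally on $\tau$, standard Gaussian conditioning gives
\[
\Xbf_1 \mid \Xbf_2 = \xbf_2,\, \tau \;\sim\; \mvn\big(\mubf_{1|2},\; \tau^{-1}(\Sigmabf_{11} - \Sigmabf_{12}\Sigmabf_{22}^{-1}\Sigmabf_{21})\big),
\]
with $\mubf_{1|2} = \mubf_1 + \Sigmabf_{12}\Sigmabf_{22}^{-1}(\xbf_2 - \mubf_2)$; the scalar factor $\tau^{-1}$ rescales both blocks uniformly, so it passes through the conditioning untouched. Next I would update $\tau$ using $\Xbf_2 = \xbf_2$: since $\Xbf_2 \mid \tau \sim \mvn(\mubf_2, \tau^{-1}\Sigmabf_{22})$, its density as a function of $\tau$ is proportional to $\tau^{n_2/2}\exp(-\tfrac{\tau}{2}Q_2)$ with $Q_2 := (\xbf_2-\mubf_2)^T\Sigmabf_{22}^{-1}(\xbf_2-\mubf_2)$, and by Gamma--Gaussian conjugacy the posterior is $\tau \mid \xbf_2 \sim \operatorname{Gamma}\big(\tfrac{\nu+n_2}{2},\; \tfrac{\nu+Q_2}{2}\big)$.

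It then remains to marginalize $\tau$ out of the conditional Gaussian. The key identity is that mixing $\mvn(m, \tau^{-1}\Phi)$ over $\tau \sim \operatorname{Gamma}(\alpha,\beta)$ yields a multivariate t with $2\alpha$ degrees of freedom, location $m$, and scale matrix $\tfrac{\beta}{\alpha}\Phi$ --- the same fact used to justify the representation in the first step, obtained by a one-dimensional integral over $\tau$ (or, equivalently, by rescaling $\tau$ so that its shape and rate coincide, returning it to the standard t-mixing form). Applying it with $\alpha = \tfrac{\nu+n_2}{2}$, $\beta = \tfrac{\nu+Q_2}{2}$, and $\Phi = \Sigmabf_{11} - \Sigmabf_{12}\Sigmabf_{22}^{-1}\Sigmabf_{21}$ gives degrees of freedom $2\alpha = \nu + n_2 = \nu_{1|2}$ and scale matrix $\tfrac{\beta}{\alpha}\Phi = \tfrac{\nu+Q_2}{\nu+n_2}(\Sigmabf_{11} - \Sigmabf_{12}\Sigmabf_{22}^{-1}\Sigmabf_{21})$, together with location $\mubf_{1|2}$ --- precisely the claimed $\nu_{1|2}$, $\Sigmabf_{1|2}$, and $\mubf_{1|2}$. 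I expect the main obstacle to be exactly this recombination: the Gamma posterior is \emph{non-standard}, with shape $\alpha$ and rate $\beta$ no longer equal, so one must verify carefully that the mixture is still a multivariate t and that the shape--rate mismatch is absorbed precisely into the scale inflation factor $\tfrac{\nu+Q_2}{\nu+n_2}$. A fully equivalent route is to argue directly on densities: decompose the joint Mahalanobis form by the Schur complement, divide by the marginal $\mt(\mubf_2,\Sigmabf_{22},\nu)$ density of $\Xbf_2$, and recognize the quotient as the stated multivariate t.
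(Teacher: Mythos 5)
Your proof is correct, and it is worth noting that the paper itself does not prove this lemma at all: it defers to \cite{Kotz_Nadarajah_2004}, so your argument supplies a self-contained derivation where the paper relies on a citation. Each step of your scale-mixture route checks out. The representation $\Xbf \mid \tau \sim \mvn(\mubf, \tau^{-1}\Sigmabf)$ with $\tau \sim \operatorname{Gamma}(\nu/2, \nu/2)$ is faithful; Gaussian conditioning given $\tau$ yields the Schur-complement mean and the covariance $\tau^{-1}(\Sigmabf_{11}-\Sigmabf_{12}\Sigmabf_{22}^{-1}\Sigmabf_{21})$, with the cross terms $\Sigmabf_{12}\Sigmabf_{22}^{-1}$ unaffected by $\tau$; conjugacy gives $\tau \mid \xbf_2 \sim \operatorname{Gamma}\big(\tfrac{\nu+n_2}{2}, \tfrac{\nu+Q_2}{2}\big)$; and the final recombination is handled correctly, since rescaling $\tilde\tau = (\beta/\alpha)\tau$ restores a $\operatorname{Gamma}(\alpha,\alpha)$ mixing variable while pushing the factor $\beta/\alpha = (\nu+Q_2)/(\nu+n_2)$ into the scale matrix, which is exactly the stated inflation factor in $\Sigmabf_{1|2}$, with degrees of freedom $2\alpha = \nu+n_2$. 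The density-quotient argument you sketch at the end (dividing the joint $\mt$ density by the marginal $\mt(\mubf_2,\Sigmabf_{22},\nu)$ density of $\Xbf_2$ and splitting the Mahalanobis form via the Schur complement) is the route typically found in the cited reference; your mixture argument buys a cleaner derivation that avoids explicit manipulation of normalizing constants and quadratic-form identities, at the cost of first establishing the mixture representation and the Gamma--Gaussian conjugacy, both of which are standard.
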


\begin{proof}[Proof of Proposition~\ref{prop: t reconc via conditioning}]

Let us define $\Tbf \in \rr^{n \times n}$ as
\begin{equation*}
\Tbf = \begin{bmatrix} \textbf{0} & \Ibf_{n_b} \\
\Ibf_{n-n_b} & -\Abf \end{bmatrix},
\end{equation*}
and let $\Zbf := \Tbf\YH$.
Since $\Tbf$ is full-rank, from Lemma~\ref{lemma: affine transform of multiv t}, $\Zbf$ is a multivariate t:
\begin{equation}
\Zbf \sim \mt\left( \Tbf \yH, \, \Tbf\PsiH\Tbf^T,\, \nuhat \right),
\end{equation}
where
\begin{align}\label{eq: T Sigma T}
\Tbf \yH &= \begin{bmatrix} \bH \\ \uH - \Abf \bH \end{bmatrix}, \nonumber \\
\Tbf \PsiH \Tbf^T &= \begin{bmatrix} \PsiH_B & \PsiH_{UB}^T - \PsiH_B \Abf^T \\
\PsiH_{UB} - \Abf \PsiH_B & \Qbf \end{bmatrix}, \end{align}
and $\Qbf= \PsiH_U - \PsiH_{UB} \Abf^T - \Abf \PsiH_{UB}^T + \Abf \PsiH_B \Abf^T$.
Since 
\[\Zbf = \begin{bmatrix} \BH \\ \UH - \Abf\BH \end{bmatrix} =: \begin{bmatrix} \Zbf_1 \\ \Zbf_2 \end{bmatrix},\]
the reconciled bottom distribution, specified by Eq.~\eqref{eq:rec_via_cond}, is given by the conditional distribution of $\Zbf_1$ given $\Zbf_2=0$.
From Lemma~\ref{lemma: condit of multiv t}, we have that 
\begin{equation*}
\Zbf_1 \,|\, \Zbf_2 = 0 \sim \mt\left(\mubf_{1|2}, \,\Sigmabf_{1|2}, \nu_{1|2}\right), 
\end{equation*}
where
\begin{align*}
\mubf_{1|2} &= \bH + \left(\PsiH_{UB}^T - \PsiH_B \Abf^T\right) \Qbf^{-1} (\Abf \bH - \uH) = \bT, \\
\nu_{1|2} &= \nuhat + (n-n_b) = \nutil, \\
\Sigmabf_{1|2} &= \frac{\nuhat + (\Abf \bH - \uH)^T \Qbf^{-1} (\Abf \bH - \uH)}{\nuhat + (n-n_b)} \; \left[\PsiH_B - \left(\PsiH_{UB}^T - \PsiH_B \Abf^T\right) \Qbf^{-1} \left(\PsiH_{UB}^T - \PsiH_B \Abf^T\right)^T \right]  \\
&= C \; \left[\PsiH_B - \left(\PsiH_{UB}^T - \PsiH_B \Abf^T\right) \Qbf^{-1} \left(\PsiH_{UB}^T - \PsiH_B \Abf^T\right)^T \right] 
= \SigmaT_B.
\end{align*}
\end{proof}

\subsection{Reconciliation of a minimal hierarchy}
\label{app:rec_minimal}

We explicitly write the expression of the reconciled t-distribution on a minimal hierarchy with 1 upper and 2 bottom series (Fig.~\ref{fig: min_hiearchy}) using Theorem~\ref{theorem: t reconc via conditioning}.
The reconciled bottom and upper distribution are multivariate t and univariate t:
\begin{align}\label{eq: bot-up-t-stud-small-hierarchy}
    \BT \sim \mt\left(\bT,\, \SigmaT_B,\,\nutil\right), && \Util \sim \text{t}\left(\util,\, \sigmatil_u,\, \nutil\right),
\end{align}
where
\begin{align}
&\nutil = \nu' - 1, \notag \\
&\bT = \begin{bmatrix}
        \left( 1 - \frac{g_1}{Q}\right) \bhat_1 + \frac{g_1}{Q} \left(\uhat - \bhat_2\right) \\
        \left( 1 - \frac{g_2}{Q}\right) \bhat_2 + \frac{g_2}{Q} \left(\uhat - \bhat_1\right) 
    \end{bmatrix}, && \util = \Big( 1 - \frac{g_u}{Q}\Big)\, \uhat + \frac{g_u}{Q} \left(\bhat_1 + \bhat_2\right), \label{eq: u_tilde} \\
    \label{eq: sigma_u_tilde}
&\SigmaT_B = C \begin{bmatrix} \Psi_1' - \frac{g_1^2}{Q} & \Psi_{1,2}' - \frac{g_1 g_2}{Q}  \\ \Psi_{1,2}' - \frac{g_1 g_2}{Q} & \Psi_2' - \frac{g_2^2}{Q}
\end{bmatrix}, && \sigmatil_u^2 = C \left(\Psi_u' - \frac{g_u^2}{Q} \right),
\end{align}
and
\[
\begin{array}{l@{\hskip 6em}l}
C   = \dfrac{1}{\nutil} \left(1 + \dfrac{(\bhat_1 + \bhat_2 - \uhat)^2}{Q}\right), 
& Q = g_1 + g_2 + g_u, \\
g_1 = (\Psi_1' + \Psi'_{1,2}) - \Psi'_{u,1}, 
& \multirow{3}{*}{\(\Psibf' = \begin{bmatrix} 
\Psi'_u & \Psi'_{u,1} & \Psi'_{u,2} \\ 
\Psi'_{u,1} & \Psi'_1 & \Psi'_{1,2} \\ 
\Psi'_{u,2} & \Psi'_{1,2} & \Psi'_2 
\end{bmatrix}.\)} \\
g_2 = (\Psi_2' + \Psi'_{1,2}) - \Psi'_{u,2}, 
 \\
g_u = \Psi'_{u} - \Psi'_{u,1} - \Psi'_{u,2}, 
\end{array}
\]

\section{Additional details and results for the simulations}\label{app:simulation}

\paragraph{Simulation setup}

For the experiments in Sect.~\ref{sec: simulations}, we adopt a simulation framework similar to that described in Sect.~3.4 of \cite{wickramasuriya2019optimal}.
We consider the minimal hierarchy of Fig.~\ref{fig: min_hiearchy}.
The length of the time series is set to 12.
The bottom-level time series are simulated using a basic structural time series model defined as:
$$
\bbf_t = \mubf_t + \gammabf_t + \etabf_t,
$$
where $\mubf_t$ is the trend component, $\gammabf_t$ is the seasonal component, and $\etabf_t$ is the error component.
The trend component evolves according to a local linear trend model:
\begin{align*}
\mubf_t &= \mubf_{t-1} + \nubf_t + \epsilonbf_t, \qquad \epsilonbf_t \sim \mathcal{N}(0,\, 2 I_2),   \\
\nubf_t &= \nubf_{t-1} + \zetabf_t, \qquad\qquad\;\; \zetabf_t \sim \mathcal{N}(0,\, 0.007 I_2),
\end{align*}
where $\epsilonbf_t$, $\zetabf_t$, and $\omegabf_t$ are mutually independent and also independent over time.
The seasonal component is defined by:
$$
\gammabf_t = -\sum_{i=1}^{s-1} \gammabf_{t-i} + \omegabf_t, \quad \omegabf_t \sim \mathcal{N}(0,\, 7 I_2),
$$
with the number of seasons per year set to $s = 4$ to reflect quarterly data.
The initial states $\mubf_0$, $\nubf_0$, $\gammabf_0$, $\gammabf_1$, and $\gammabf_2$ are independently drawn from a multivariate normal distribution with mean zero and identity covariance matrix.
The error component $\etabf_t$ for each series is generated from an ARIMA$(1,0,1)$ model with $\phi_1 = 0.3$ and $\theta_1 = 0.5$;
we use  
$\begin{bmatrix}
    5 & 3\\
    3 & 4
\end{bmatrix}$
as contemporaneous error covariance matrix, the same of series AA and AB in \cite{wickramasuriya2019optimal}.
Finally, the upper series is obtained by summing the bottom series.



\paragraph{Simulation study for different training lengths}
In Fig.~\ref{fig:sim incoherence VS T}, we show the distribution of the relative widths (i.e., reconciled PI width/base forecast PI width) of \textit{\tRec} and \textit{MinT} for different values of $T$.

\begin{figure}[H]
    \centering

\makebox[\textwidth]{%
        \parbox[b]{0.465\textwidth}{\centering \textbf{\textit{T = 5}}}%
        \hfill
        \parbox[b]{0.465\textwidth}{\centering \textbf{\textit{T = 20}}}%
    }

    \begin{subfigure}[b]{0.45\textwidth}
        \centering
        \includegraphics[width=\textwidth]{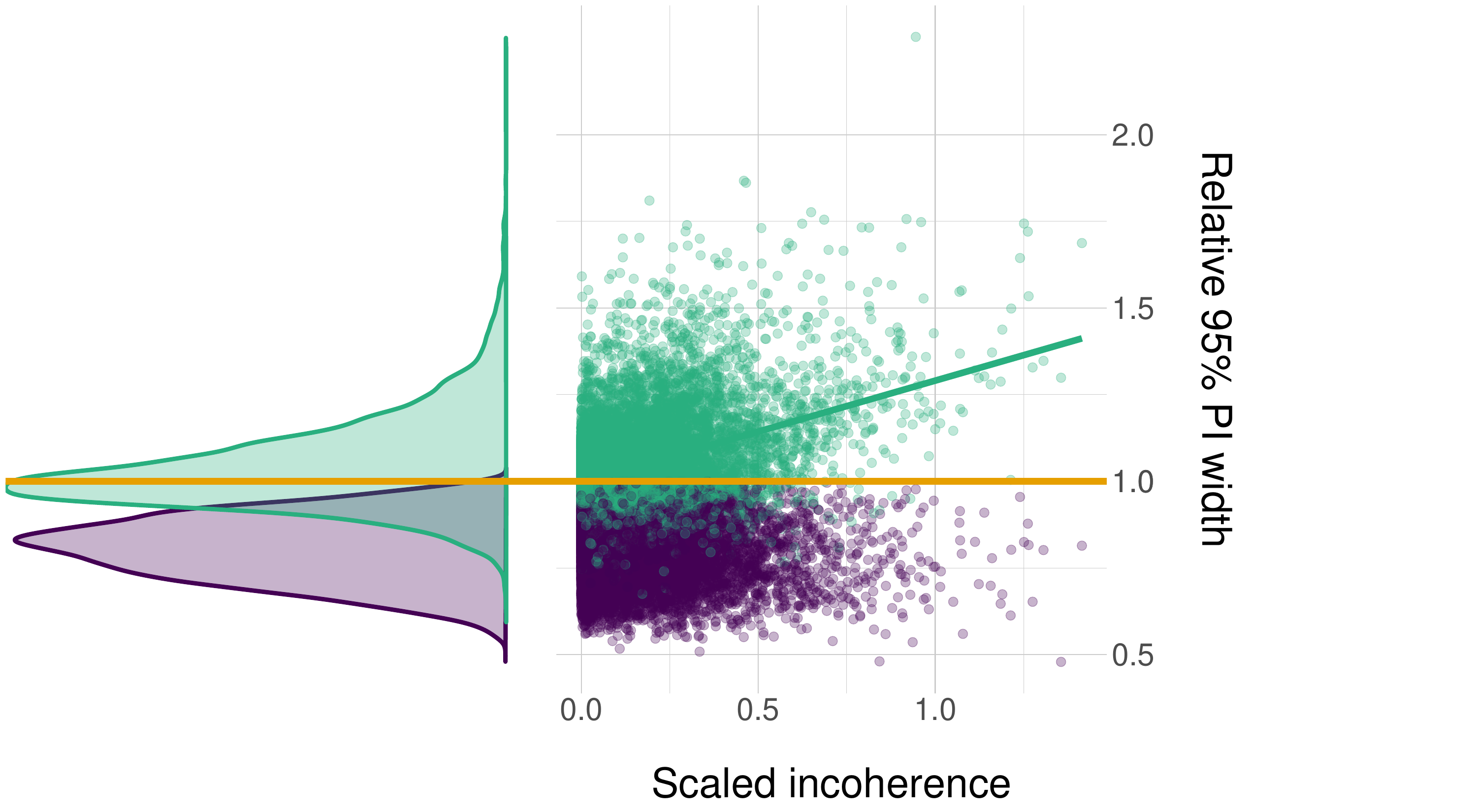}
    \end{subfigure}
\hfill
    \begin{subfigure}[b]{0.45\textwidth}
        \centering
        \includegraphics[width=\textwidth]{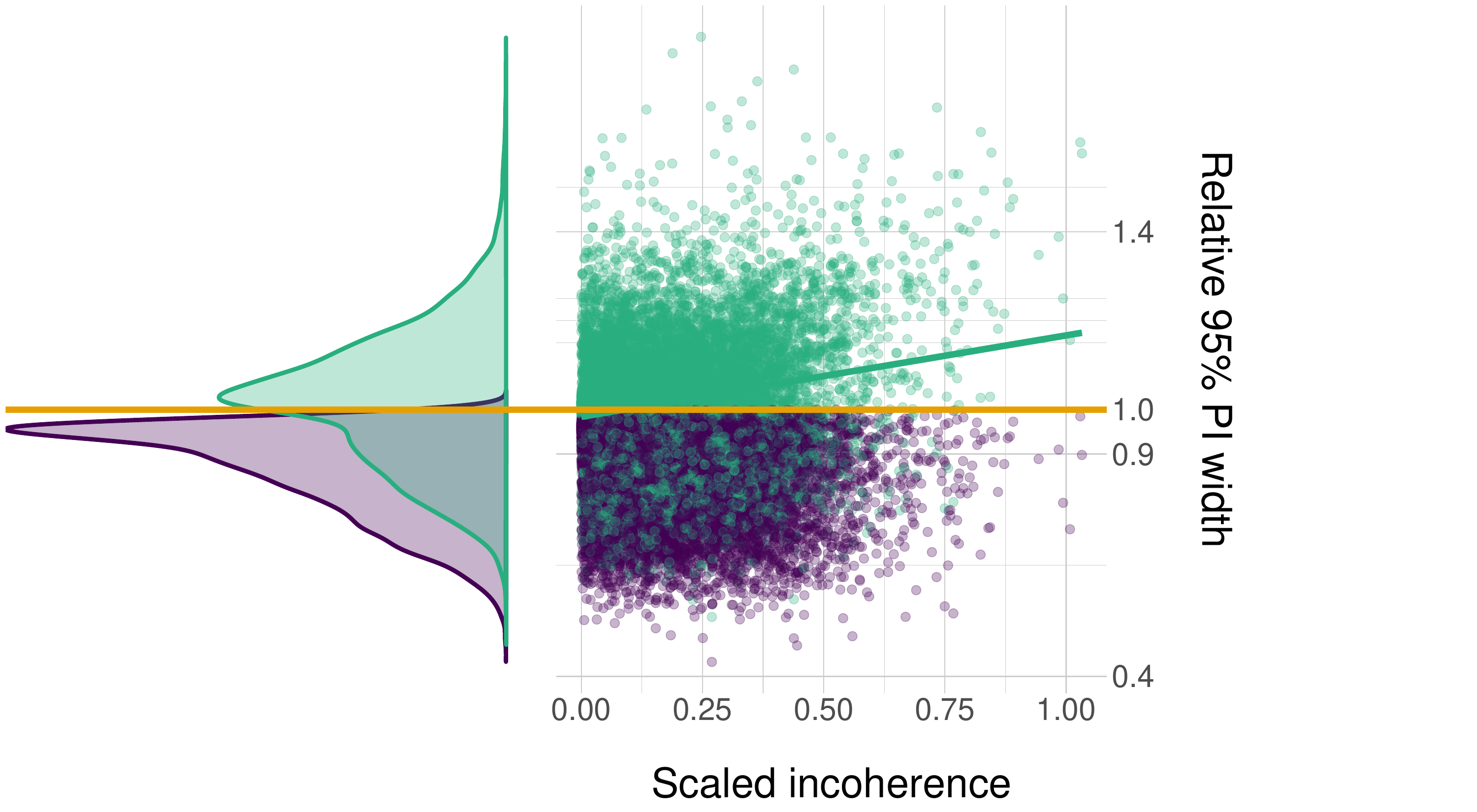}
    \end{subfigure}

    \makebox[\textwidth]{%
        \parbox[b]{0.465\textwidth}{\centering \textbf{\textit{T = 30}}}%
        \hfill
        \parbox[b]{0.465\textwidth}{\centering \textbf{\textit{T = 55}}}%
    }

    \begin{subfigure}[b]{0.45\textwidth}
        \centering
        \includegraphics[width=\textwidth]{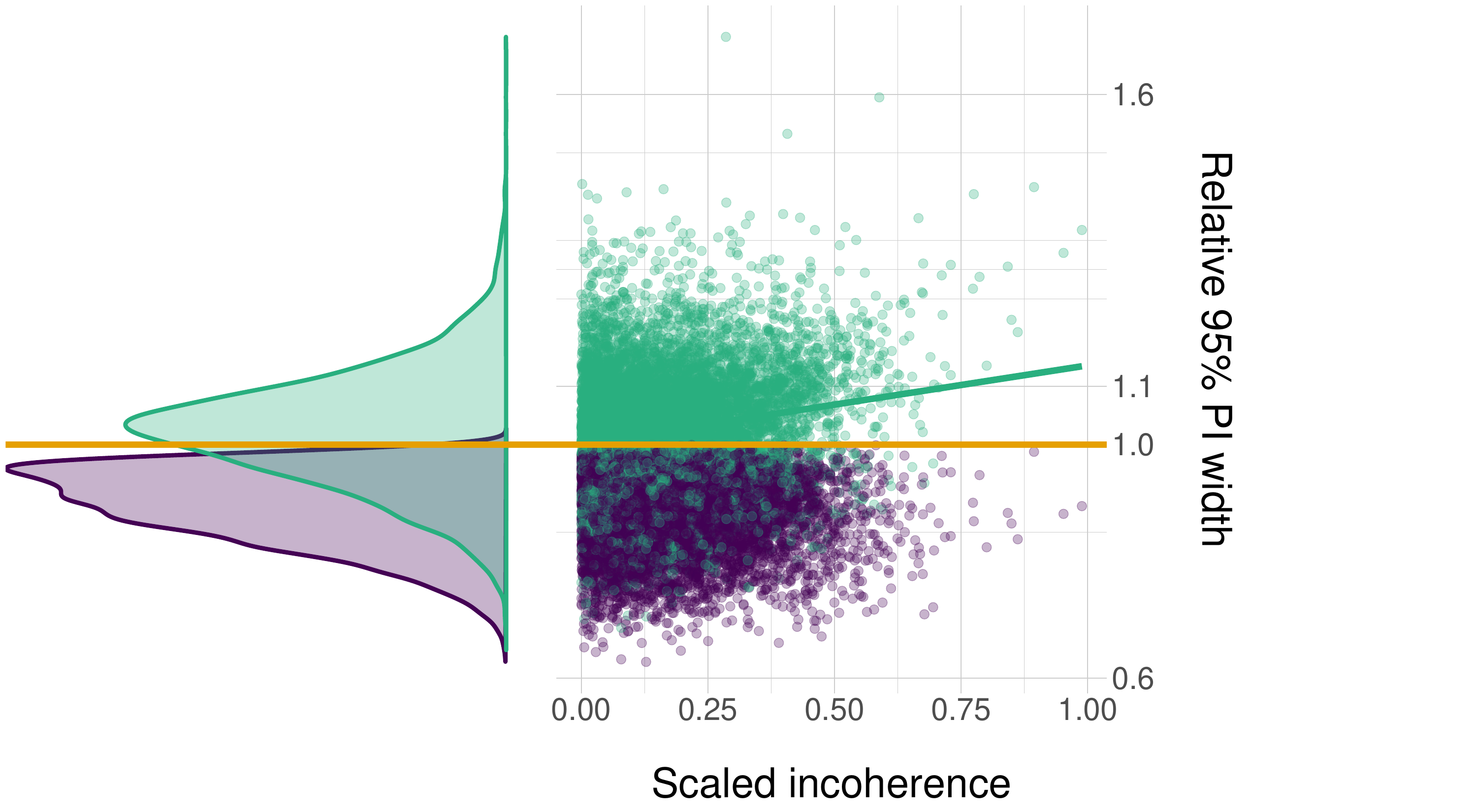}
    \end{subfigure}
\hfill
    \begin{subfigure}[b]{0.45\textwidth}
        \centering
        \includegraphics[width=\textwidth]{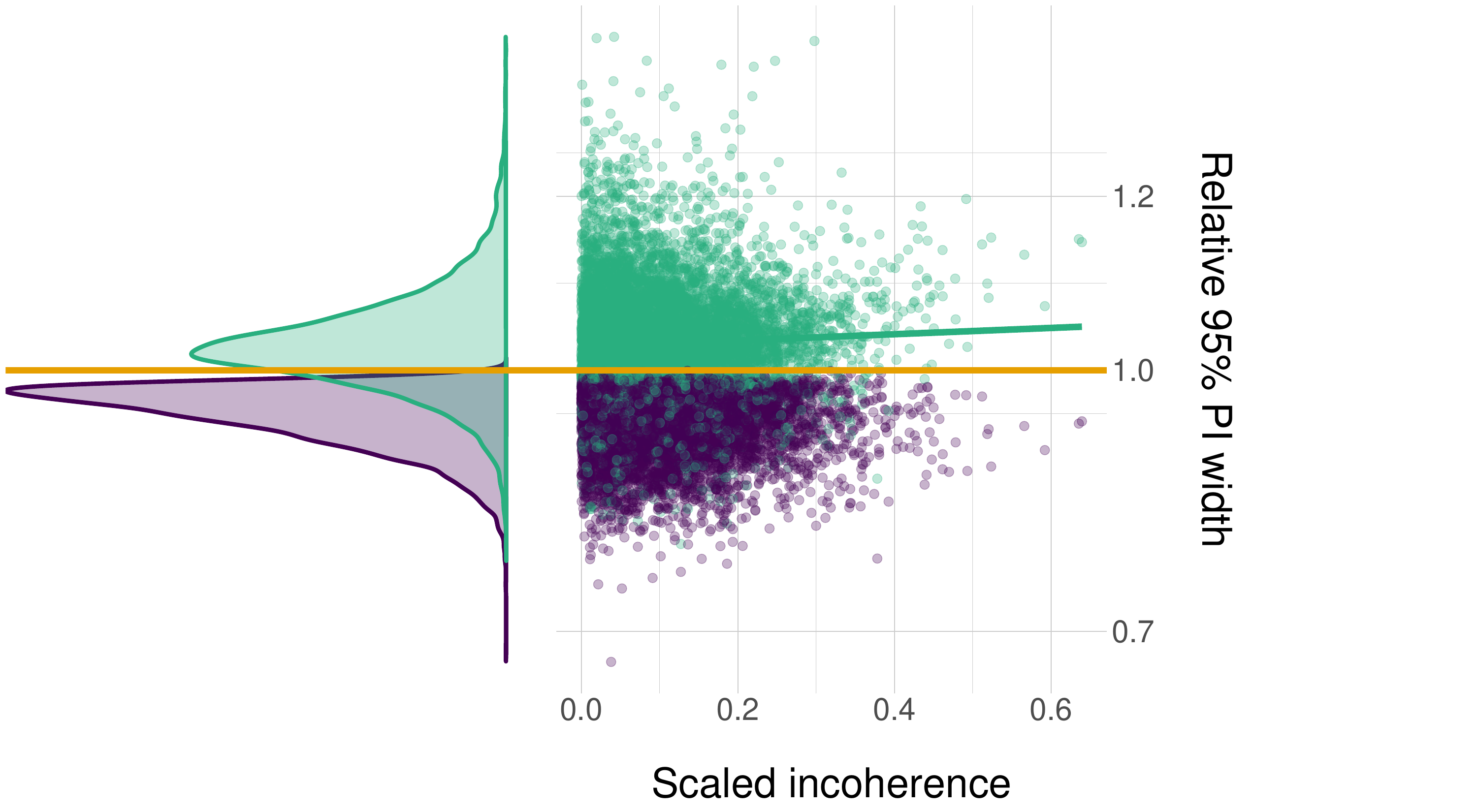}
    \end{subfigure}

    \caption{Results of the simulations for different values of the training length $T$. In each plot:
Left: distribution of the relative widths of the 95\% PIs of the upper series for \textit{MinT} (purple) and \textit{\tRec} (green), across 10\,000 simulations.
Right: for each simulation, the relative widths of \textit{MinT} and \textit{\tRec} are plotted against the scaled incoherence $|\uhat -\bhat_1 - \bhat_2| \,/\, Q'^{1/2}$.
Notably, for \textit{\tRec} the dependence between the relative PI width and the scaled incoherence diminishes as $T$ grows. This happens because, as $T$ becomes large, the uncertainty on $\Wbf$ decreases and the predictive $t$ distribution converges to a Gaussian. Since, in the Gaussian case, incoherence does not affect interval widths, the influence of the incoherence progressively vanishes.
}
    \label{fig:sim incoherence VS T}
\end{figure}

\paragraph{Scores on simulations}

We assess point and probabilistic accuracy of \textit{MinT} and \textit{\tRec} using relative scores, defined in Eq.~\eqref{eq: relMSE}, Eq.~\eqref{eq: relCRPS}, and Eq.~\eqref{eq:rel_ES}; the averages are taken over the $10\,000$ independent experiments.
In Table~\ref{tab:combined_scores_sim}, we report the results obtained for different values of $T$; a relative score $<1$ means improvement over the base forecasts.
While the two methods behave similarly on the point forecasts (MSE),
\textit{\tRec}  outperforms \textit{MinT} on the probabilistic scores. 
The performance gains are  more pronounced for smaller values of $T$, for which the uncertainty on $\Wbf$ is larger.
%

\begin{table}[H]
\centering
\begin{tabular}{llccccc}
\toprule
\textbf{Score} & \textbf{Method } & $T = 5$ & $T = 12$ & $T = 20$ & $T = 30$ & $T = 55$ \\
\midrule

\multirow{2}{*}{\textbf{RelMSE}}
 & \textit{MinT}  & \textbf{0.99} & \textbf{1.00} & \textbf{1.00} & \textbf{1.00} & \textbf{1.00} \\
 & \textit{\tRec} & \textbf{0.99} & \textbf{1.00} & 1.01 & \textbf{1.00} & \textbf{1.00} \\

\midrule
\multirow{2}{*}{\textbf{RelCRPS}}
 & \textit{MinT}  & 1.02 & 1.01 & 1.01 & 1.01 & \textbf{1.00} \\
 & \textit{\tRec} & \textbf{0.99} & \textbf{1.00} & \textbf{1.00} & \textbf{1.00} & \textbf{1.00} \\

\midrule
\multirow{2}{*}{\textbf{$\text{RelMIS}^{80\%}$}}
 & \textit{MinT}  & 1.05 & 1.03 & 1.02 & 1.01 & \textbf{1.00} \\
 & \textit{\tRec} & \textbf{0.99} & \textbf{1.00} & \textbf{1.00} & \textbf{1.00} & \textbf{1.00} \\

\midrule
\multirow{2}{*}{\textbf{$\text{RelMIS}^{95\%}$}}
 & \textit{MinT}  & 1.12 & 1.06 & 1.04 & 1.02 & 1.01 \\
 & \textit{\tRec} & \textbf{0.93} & \textbf{0.97} & \textbf{0.99} & \textbf{0.99} & \textbf{0.99} \\

\midrule
 \multirow{2}{*}{\textbf{RelES}}
 & \textit{MinT}  & 1.01 & 1.01 & 1.01 & 1.01 & \textbf{1.00} \\
 & \textit{\tRec} & \textbf{0.98} & \textbf{1.00} & \textbf{1.00} & \textbf{1.00} & \textbf{1.00} \\

\bottomrule
\end{tabular}
\caption{Relative scores of \textit{MinT} and \textit{\tRec} across different training lengths $T$. 
A relative score $<1$ means improvement over the base forecasts.
For each score and $T$, the best performing method is indicated in bold.
}
\label{tab:combined_scores_sim}
\end{table}

\section{Additional results on real datasets}
\label{app:datasets}


\begin{table}[H]
\centering
\small
\begin{tabular}{ll@{\hspace{1cm}}ll@{\hspace{1cm}}ll}
\toprule
\textbf{Abbr.} & \textbf{Canton} &
\textbf{Abbr.} & \textbf{Canton} &
\textbf{Abbr.} & \textbf{Canton} \\
\midrule
AG & Aargau                 & NW & Nidwalden              & AI & Appenzell Innerrhoden \\
OW & Obwalden              & AR & Appenzell Ausserrhoden & SG & St. Gallen \\
BE & Bern                  & SH & Schaffhausen           & SO & Solothurn \\
BL & Basel-Landschaft      & BS & Basel-Stadt            & FR & Fribourg \\
GE & Geneva                & GL & Glarus                 & GR & Graubünden \\
JU & Jura                  & LU & Lucerne                & NE & Neuchâtel \\
SZ & Schwyz                & TG & Thurgau                & TI & Ticino \\
UR & Uri                   & VD & Vaud                   & VS & Valais \\
ZG & Zug                   & ZH & Zurich                 &     &         \\
\bottomrule
\end{tabular}
\caption{List of the 26 Swiss cantons.
}
\label{tab:cantons}
\end{table}

\begin{figure}[H]
    \centering
    \includegraphics[width=0.8\linewidth]{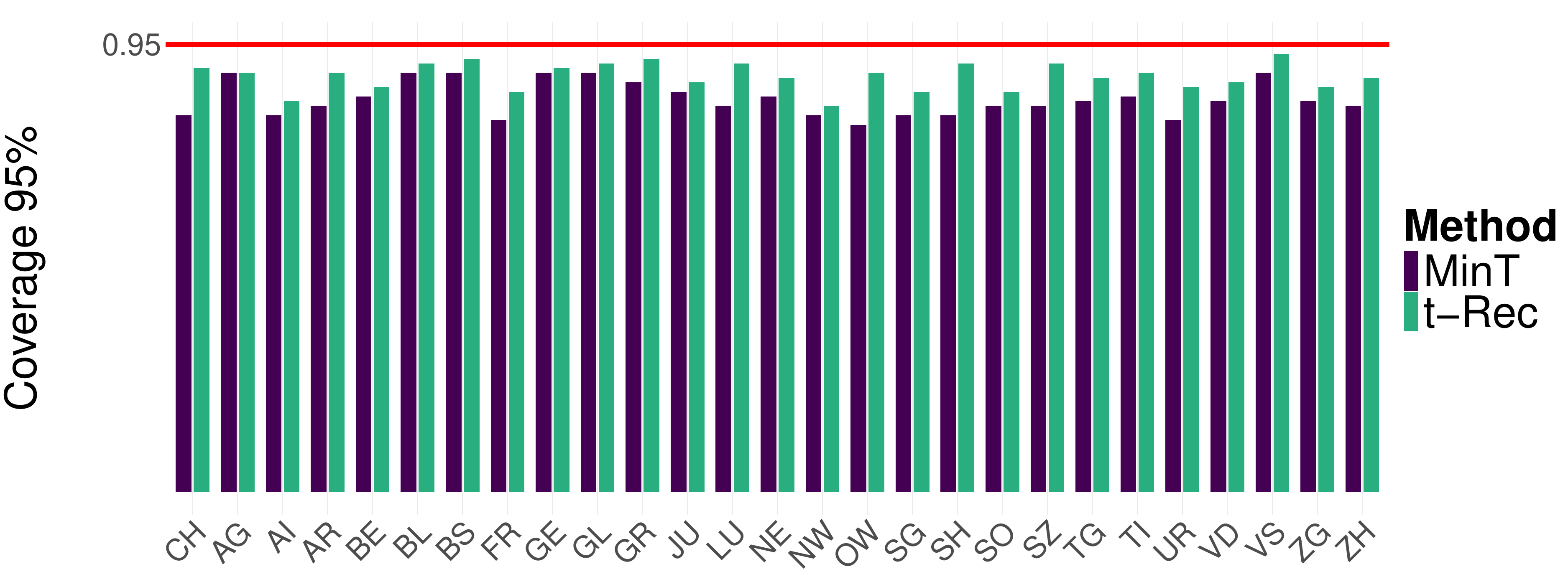}
    \caption{Coverage of the 95\% prediction intervals across 100 rolling origins (\train\ $= 40$) on the \textit{Swiss Tourism} dataset. For each series, \textit{\tRec} achieves coverage closer to the target (red line) than \textit{MinT}.
    }
    \label{fig:cov_swiss}
\end{figure}




\begin{figure}[h!]
    \centering

    \vspace{-0.6cm}
    
    \makebox[\textwidth]{%
        \parbox[b]{0.465\textwidth}{\centering \textbf{\textit{Australian Tourism-M}}}%
        \hfill
        \parbox[b]{0.465\textwidth}{\centering \textbf{\textit{Australian Tourism-Q}}}%
    }
    
    \begin{subfigure}[b]{0.42\textwidth}
        \centering
        \begin{overpic}[width=\linewidth]{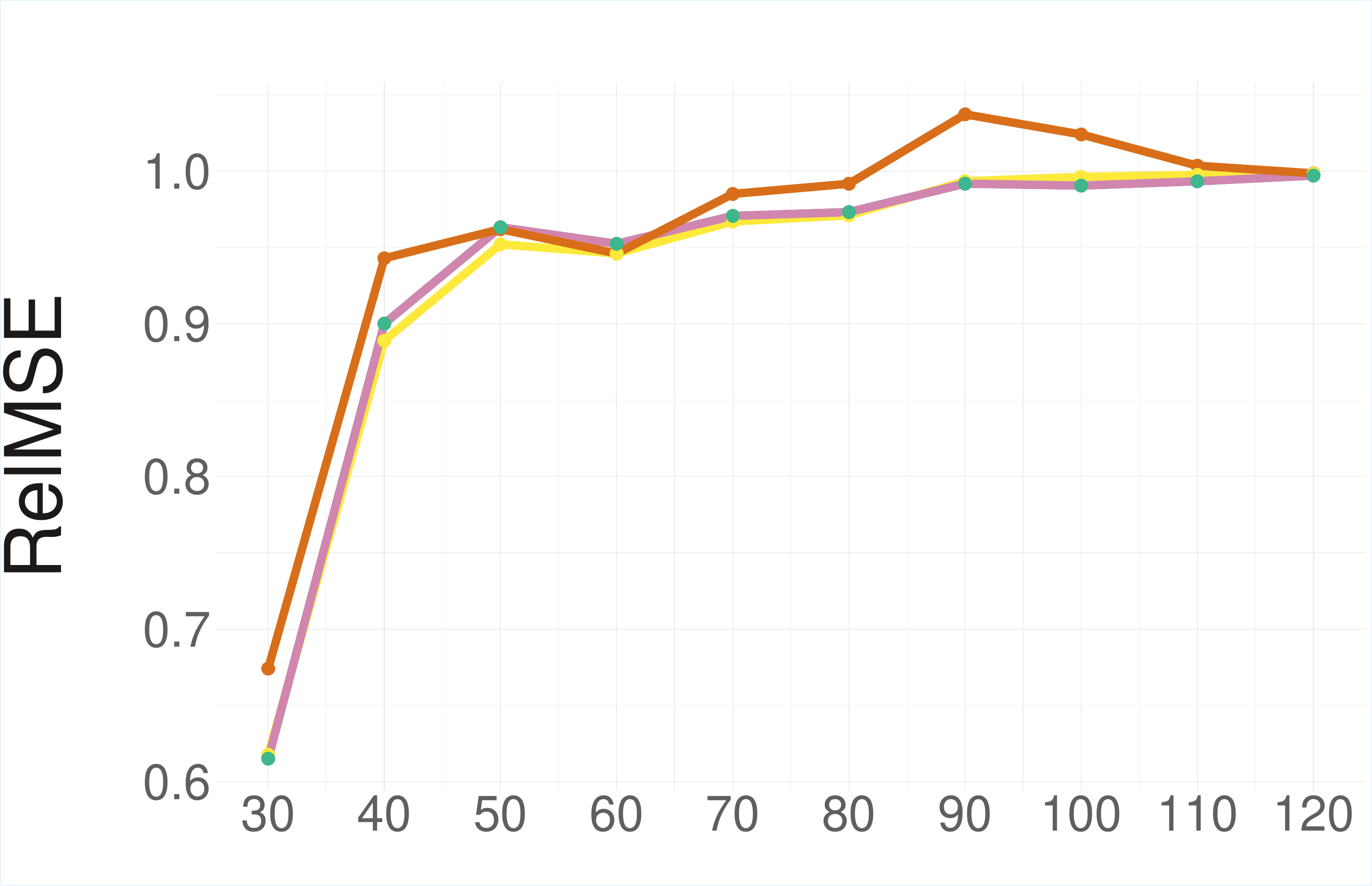}
            \put(68,20){\includegraphics[width=0.3\textwidth]{Legend_ablation.pdf}}
        \end{overpic}
    \end{subfigure}
    \hfill
    \begin{subfigure}[b]{0.42\textwidth}
        \centering
        \begin{overpic}[width=\linewidth]{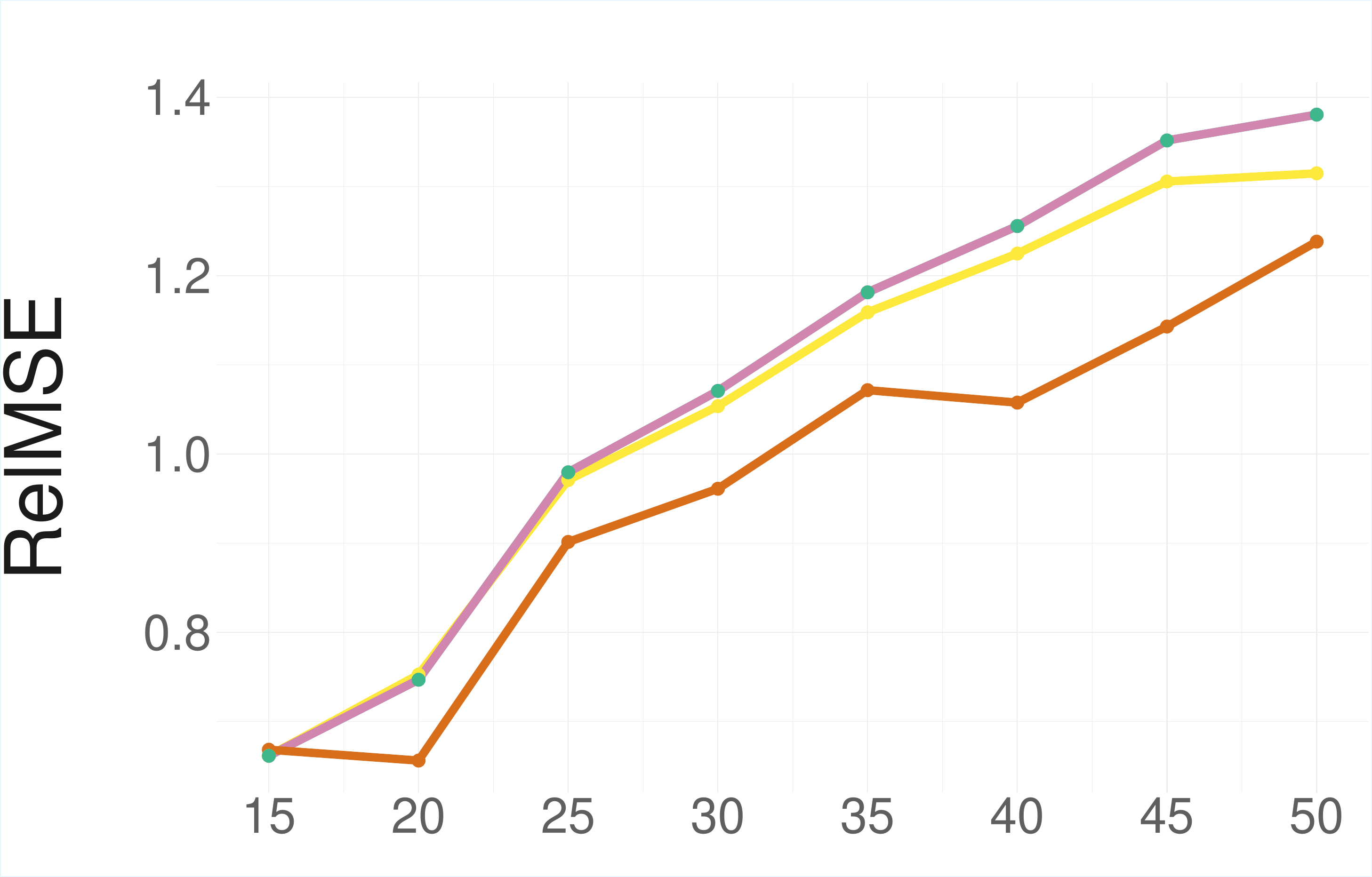}
        \end{overpic}
    \end{subfigure}

    \vspace{-0.2cm}
    
    \begin{subfigure}[b]{0.42\textwidth}
        \centering
        \begin{overpic}[width=\linewidth]{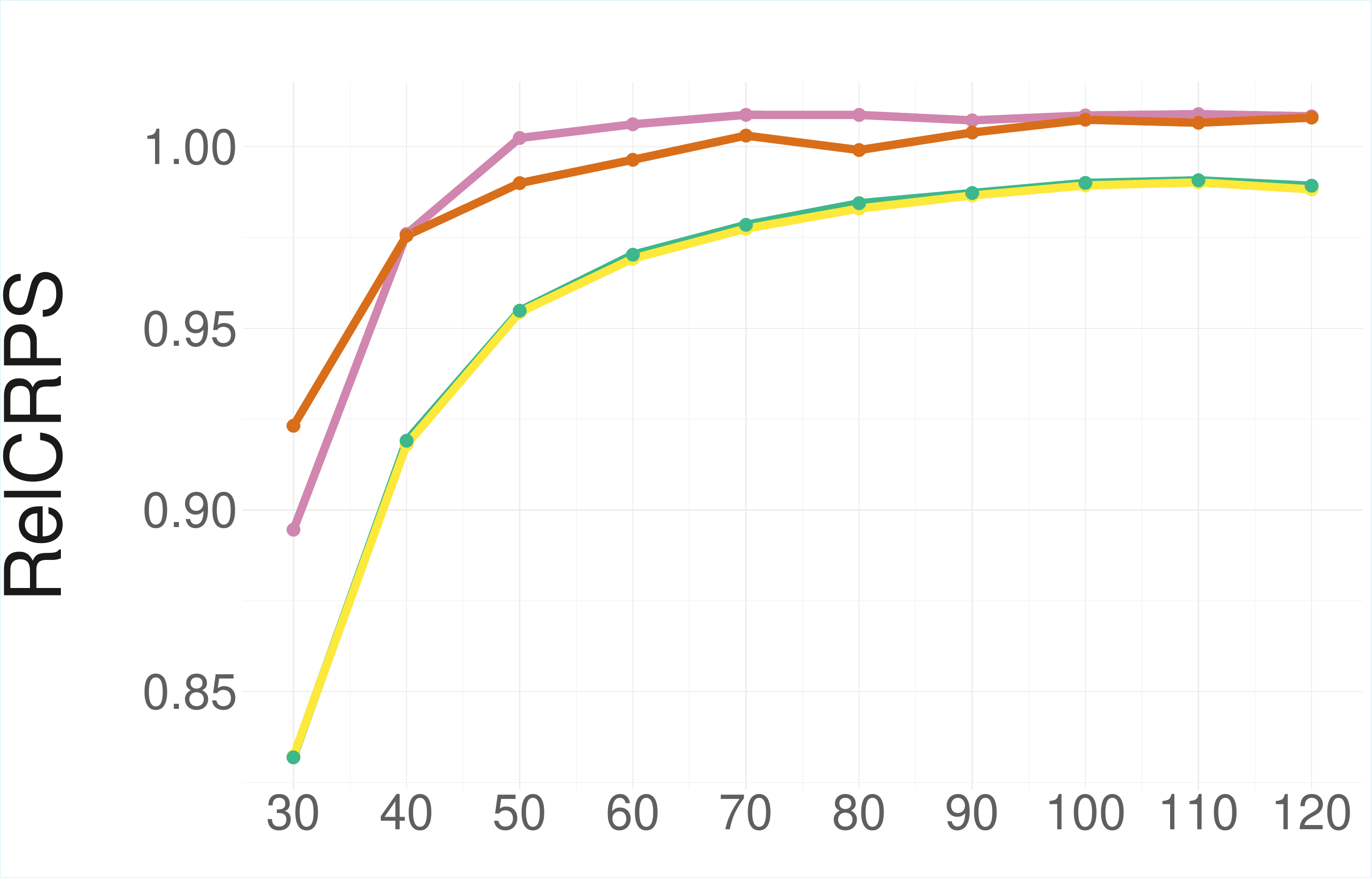}
        \end{overpic}
    \end{subfigure}
    \hfill
    \begin{subfigure}[b]{0.42\textwidth}
        \centering
        \begin{overpic}[width=\linewidth]{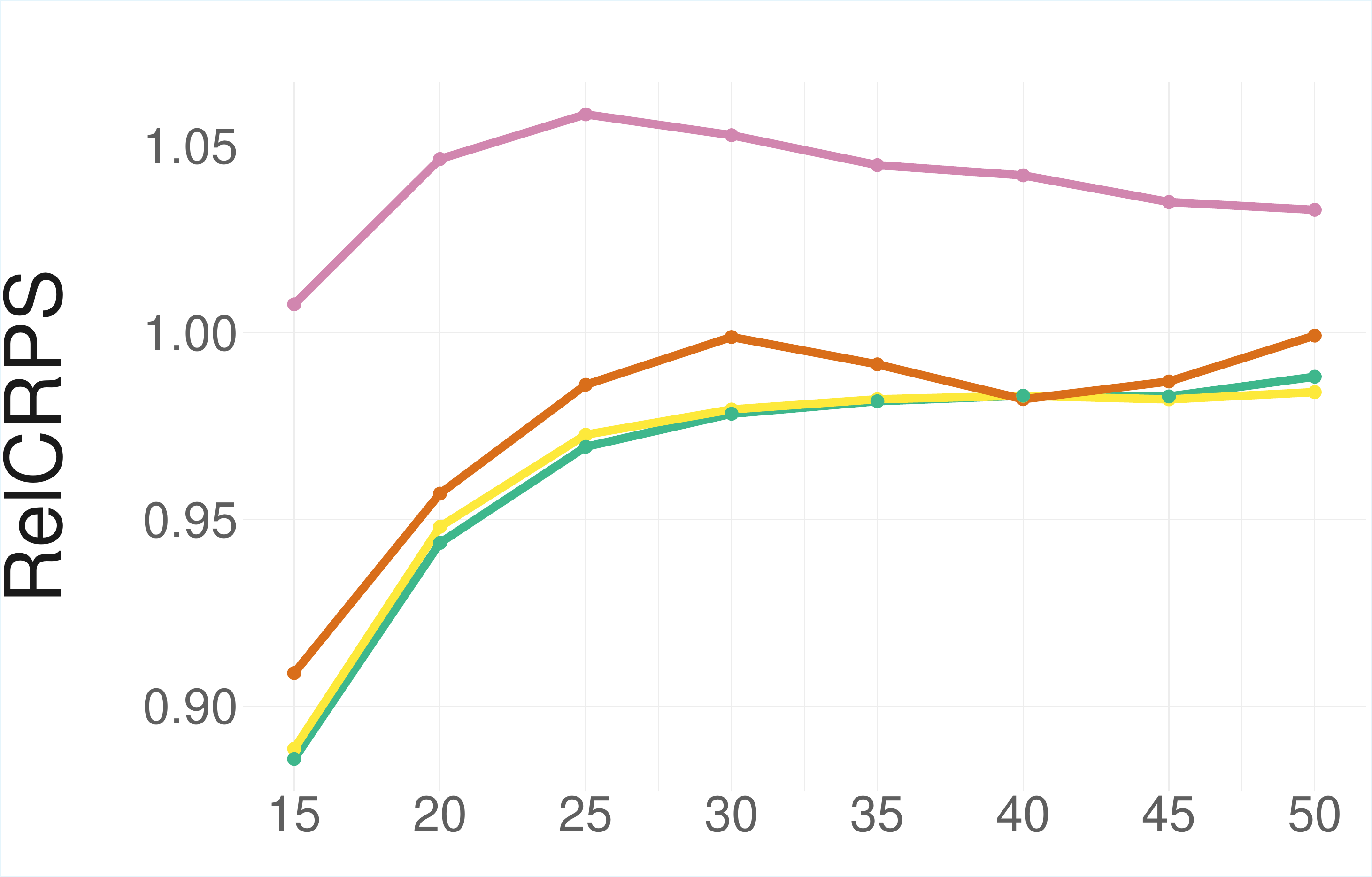}
        \end{overpic}
    \end{subfigure}

    \vspace{-0.2cm}  

    \begin{subfigure}[b]{0.42\textwidth}
        \centering
        \begin{overpic}[width=\linewidth]{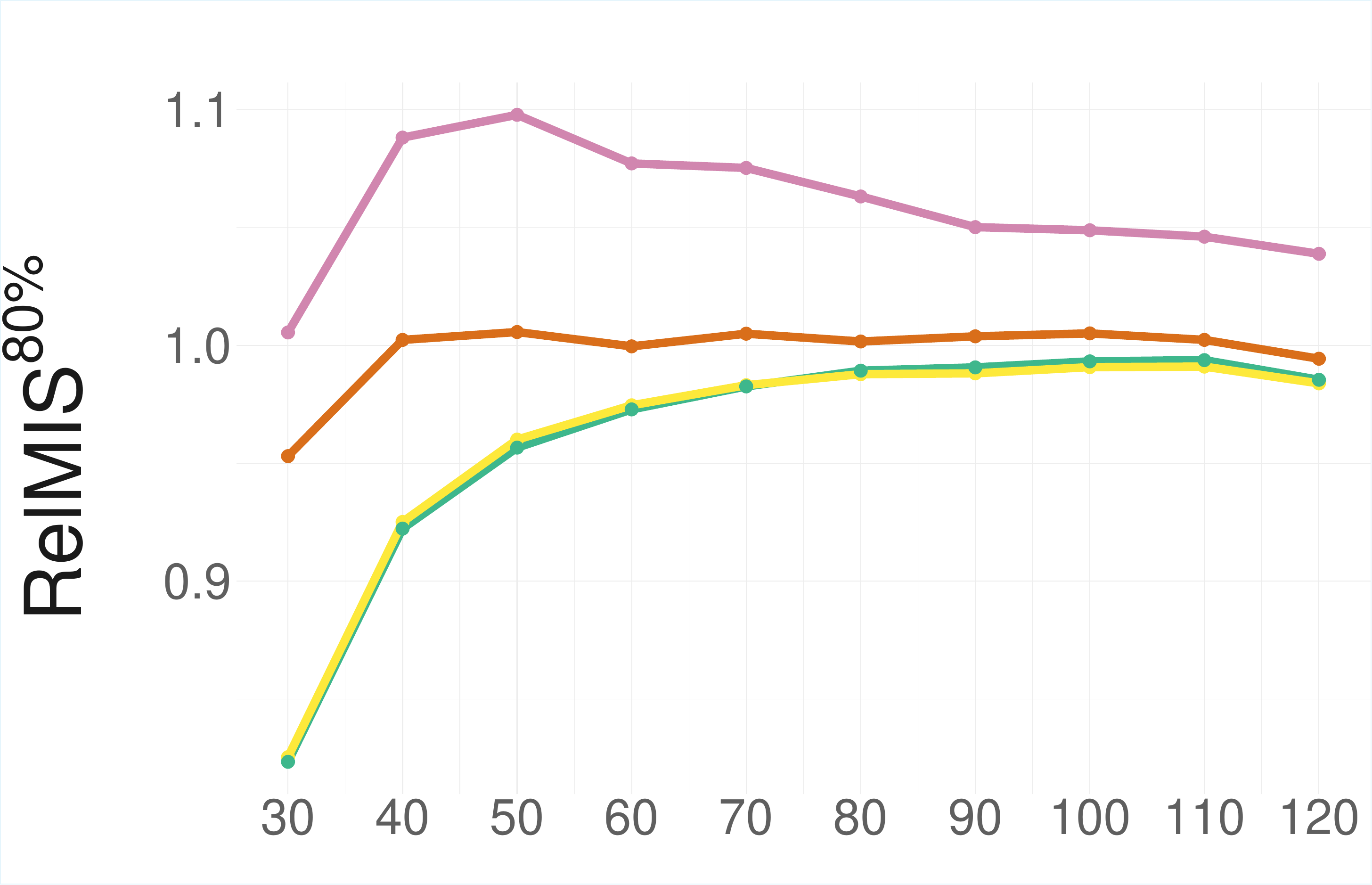}
        \end{overpic}
    \end{subfigure}
    \hfill
    \begin{subfigure}[b]{0.42\textwidth}
        \centering
        \begin{overpic}[width=\linewidth]{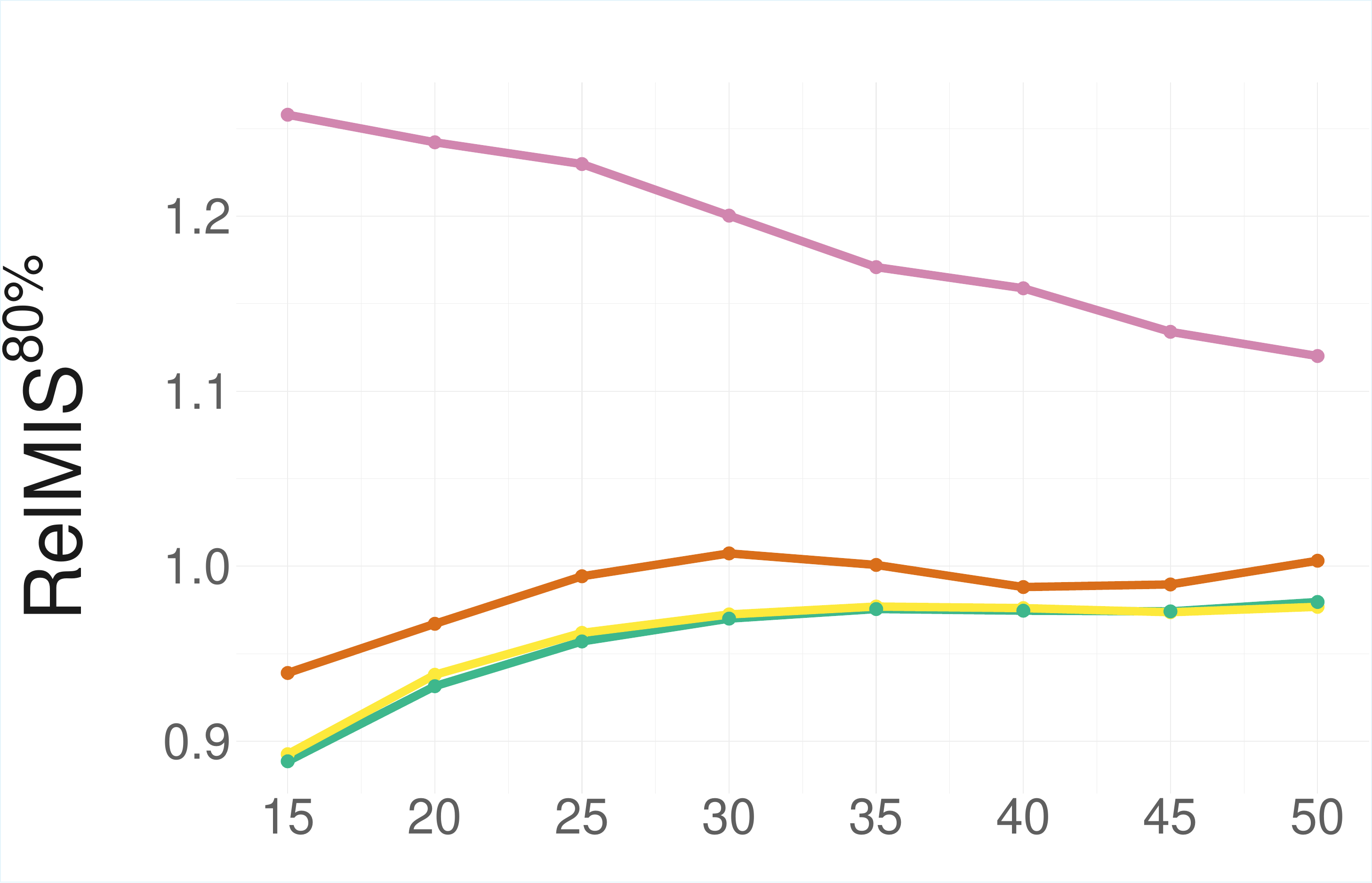}
        \end{overpic}
    \end{subfigure}

    \vspace{-0.2cm} 

    \begin{subfigure}[b]{0.42\textwidth}
        \centering
        \begin{overpic}[width=\linewidth]{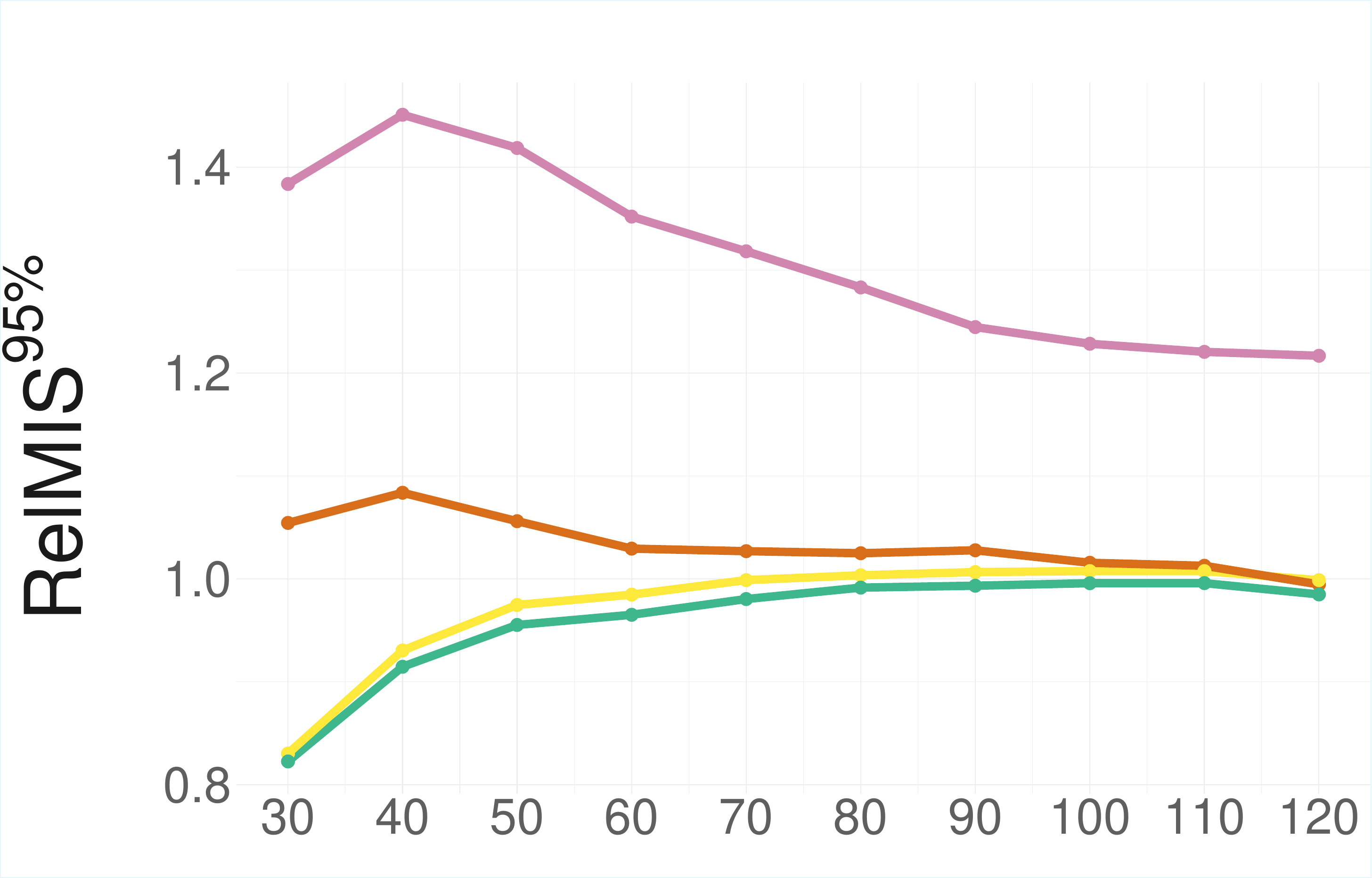}
        \end{overpic}
    \end{subfigure}
    \hfill
    \begin{subfigure}[b]{0.42\textwidth}
        \centering
        \begin{overpic}[width=\linewidth]{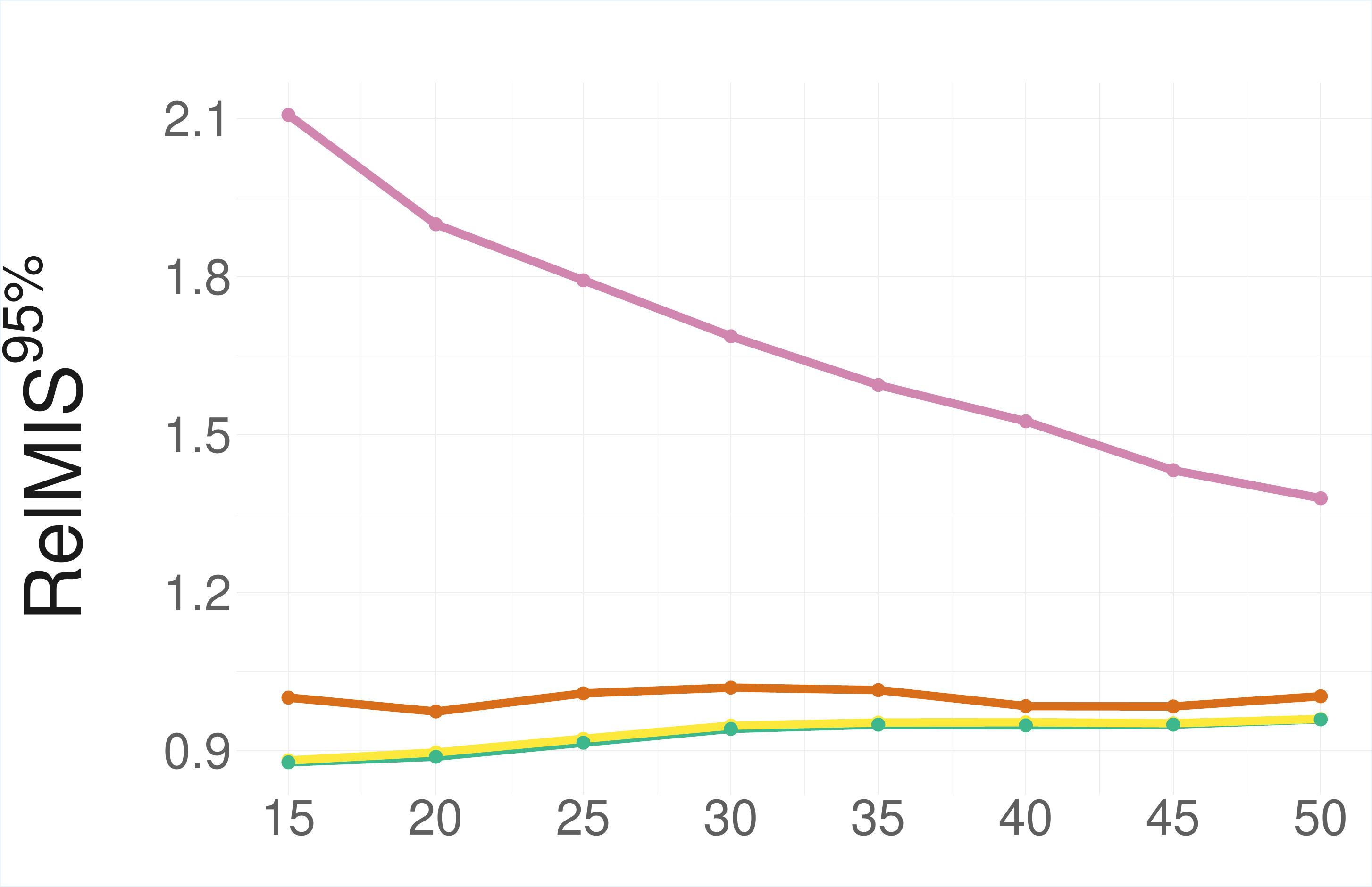}
        \end{overpic}
    \end{subfigure}

    \vspace{-0.2cm} 

    \begin{subfigure}[b]{0.42\textwidth}
        \centering
        \begin{overpic}[width=\linewidth]{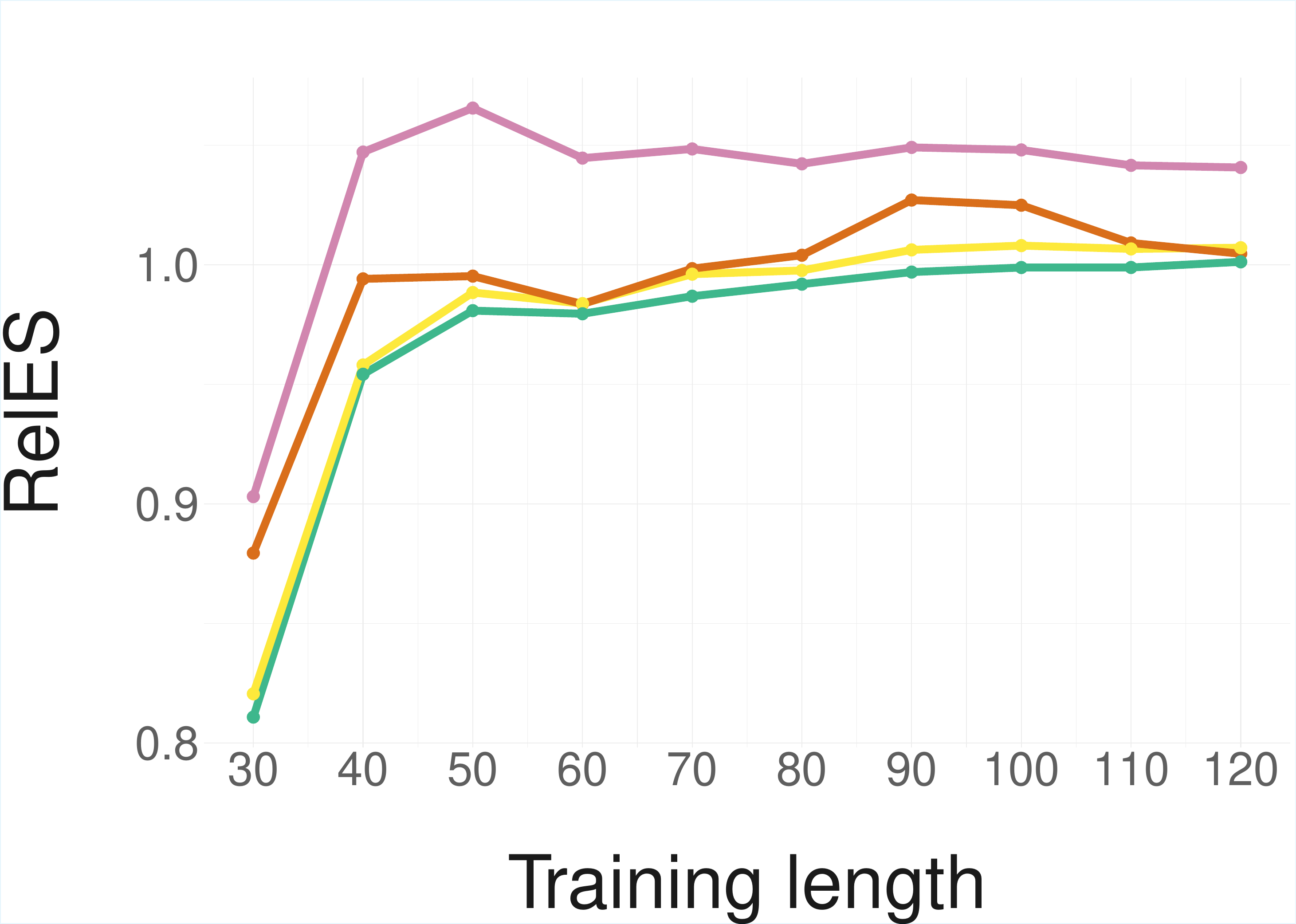}
        \end{overpic}
    \end{subfigure}
    \hfill
    \begin{subfigure}[b]{0.42\textwidth}
        \centering
        \begin{overpic}[width=\linewidth]{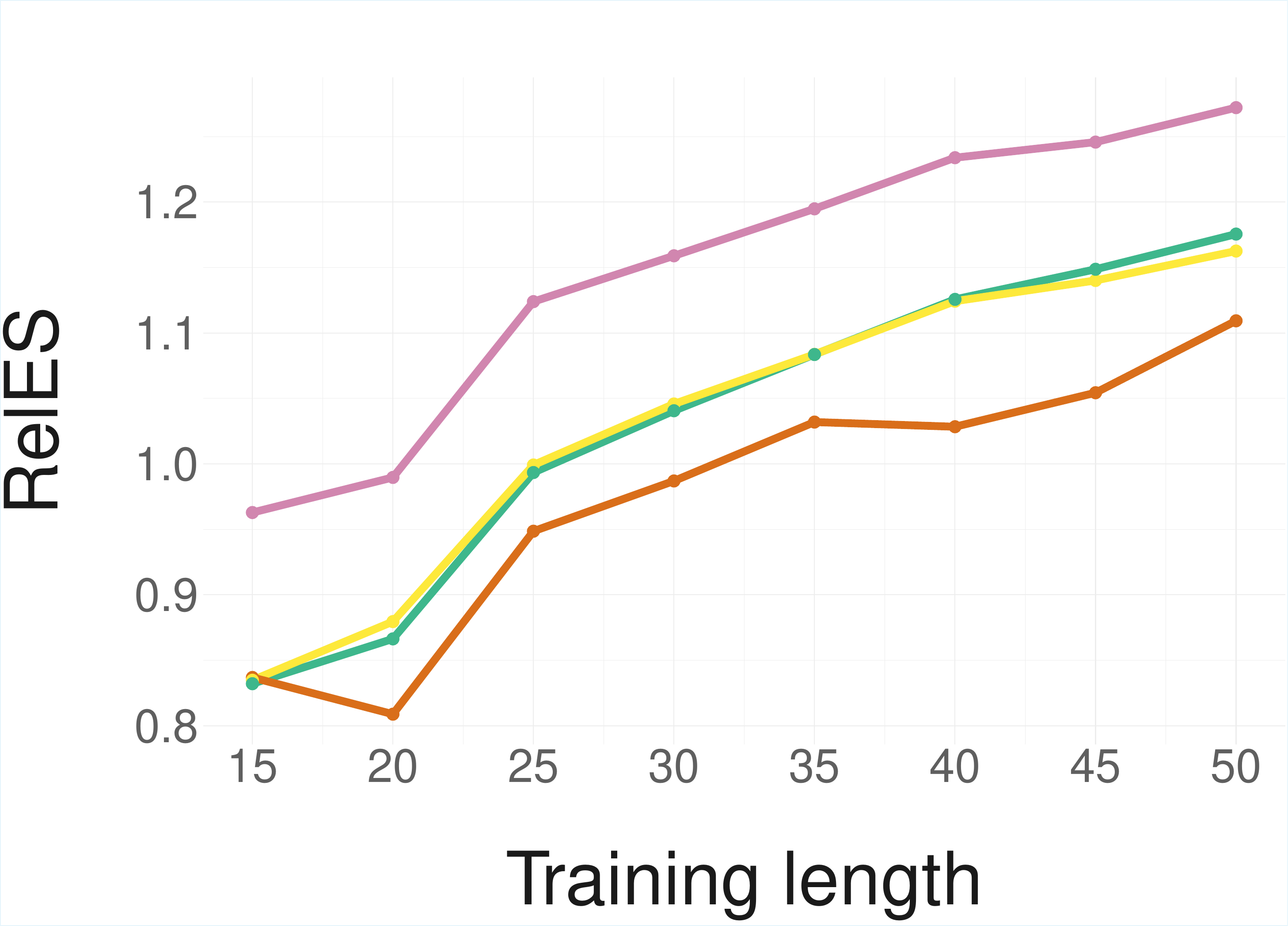}
        \end{overpic}
    \end{subfigure}

    \vspace{0.25cm} 

    \caption{   
    Results on the \textit{Australian Tourism-M} (left column) and \textit{Australian Tourism-Q} (right column) datasets for \textit{\tRec} (green), \textit{\tRec-Diag} (yellow), \textit{\tRec-MAP} (pink) and \textit{\tRec-min\_$\nu_0$} (orange).
   A relative score lower than 1 means improvement over the base forecasts.}
    \label{fig: rel scores aus ablation}
\end{figure}

\end{document}